\newcommand{\nc}{\newcommand}
\newcommand{\DMO}{\DeclareMathOperator}
\newcommand{\noah}[1]{\textcolor{red}{[Noah: #1]}}
\newcommand{\pasin}[1]{\textcolor{red}{[Pasin: #1]}}
\newcommand{\todo}[1]{\textcolor{red}{[TODO: #1]}}
\newcommand{\modified}[1]{{#1}}
\newcommand{\remove}[1]{}
\newcommand{\poly}{poly}
\nc{\MS}{\mathcal{S}}
\nc{\MP}{\mathcal{P}}
\nc{\MR}{\mathcal{R}}
\newcommand{\bM}{\mathbf{M}}
\nc{\MZ}{\mathcal{Z}}
\DMO{\Binom}{Binom}
\newcommand{\E}{\mathbb{E}}
\DMO{\Var}{Var}
\newcommand{\ba}{\mathbf{a}}
\newcommand{\bX}{\mathbf{X}}
\newcommand{\bY}{\mathbf{Y}}
\newcommand{\bt}{\mathbf{t}}
\newcommand{\bx}{\mathbf{x}}
\newcommand{\by}{\mathbf{y}}
\newcommand{\bv}{\mathbf{v}}
\newcommand{\bz}{\mathbf{z}}
\newcommand{\R}{\mathbb{R}}
\newcommand{\N}{\mathbb{N}}
\newcommand{\MX}{\mathcal{X}}
\nc{\BN}{\mathbb{N}}
\newcommand{\Z}{\mathbb{Z}}
\nc{\BZ}{\mathbb{Z}}
\newcommand{\cR}{\mathcal{R}}
\newcommand{\tDelta}{\tilde{\Delta}}
\newcommand{\bone}{\mathbf{1}}
\newcommand{\eps}{\varepsilon}
\nc{\ep}{\eps}
\newcommand{\cD}{\mathcal{D}}
\newcommand{\cF}{\mathcal{F}}
\DeclareMathOperator{\zero}{zero}
\DeclareMathOperator{\supp}{supp}
\DeclareMathOperator{\DLap}{DLap}
\DeclareMathOperator{\SD}{SD}
\DeclareMathOperator{\err}{ERR}
\renewcommand{\varepsilon}{\epsilon}
\newcommand{\DP}{\mathrm{DP}}
\newcommand{\typeOfDP}[1]{\DP_{\mathrm{ #1}}}
\newcommand{\centralDP}{\typeOfDP{central}}
\newcommand{\localDP}{\typeOfDP{local}}
\newcommand{\shuffledDP}{\typeOfDP{shuffled}}
\newtheorem{theorem}{Theorem}
\newtheorem{observation}[theorem]{Observation}
\newtheorem{lemma}[theorem]{Lemma}
\newtheorem{definition}[theorem]{Definition}
\newtheorem{question}[theorem]{Question}
\newtheorem{corollary}[theorem]{Corollary}
\newtheorem{remark}[theorem]{Remark}
\title{Pure Differentially Private Summation from Anonymous Messages}
\author{
  Badih Ghazi \hspace*{1cm}
  Noah Golowich\thanks{MIT EECS. Supported at MIT by a Fannie \& John Hertz Foundation Fellowship, an MIT Akamai Fellowship, and an NSF Graduate Fellowship.  This work was done while at Google Research.}
  \hspace*{0.5cm}
   Ravi Kumar \hspace*{0.5cm}\\ 
   Pasin Manurangsi \hspace*{0.5cm}
   Rasmus Pagh\thanks{Visiting from BARC and IT University of Copenhagen.} \hspace*{0.5cm}
   Ameya Velingker \\
   Google Research \\
   Mountain View, CA \\
   \texttt{badihghazi@gmail.com, nzg@mit.edu, ravi.k53@gmail.com,} \\
   \texttt{pasin@google.com, pagh@itu.dk, ameyav@google.com}
}
\begin{document}

\maketitle

% \pasin{Maybe change one-round to non-interactive since the latter seems more popular in the DP community?}

\begin{abstract}

The \emph{shuffled} (aka \emph{anonymous}) model has recently generated significant interest as a candidate distributed privacy framework with trust assumptions better than the central model but with achievable error rates smaller than the local model. In this paper, we study \emph{pure} differentially private protocols in the shuffled model for \emph{summation}, a very basic and widely used primitive. Specifically:
\begin{itemize}
\item For the binary summation problem where each of $n$ users holds a bit as an input, we give a pure $\varepsilon$-differentially private protocol for estimating the number of ones held by the users up to an absolute error of $O_{\epsilon}(1)$, and where each user sends $O_{\eps}(\log n)$ messages each consisting of a single bit. This is the first pure protocol in the shuffled model with error $o(\sqrt{n})$ for constant values of $\epsilon$.  

Using our binary summation protocol as a building block, we give a pure $\epsilon$-differentially private protocol that performs summation of real numbers in $[0, 1]$ up to an absolute error of $O_{\epsilon}(1)$, and where each user sends $O_{\eps}(\log^{3} n)$ messages each consisting of $O(\log \log n)$ bits.

\item In contrast, we show that for any pure $\epsilon$-differentially private protocol for binary summation in the shuffled model having absolute error $n^{0.5-\Omega(1)}$, the per user communication has to be at least $\Omega_{\epsilon}(\sqrt{\log n})$ bits.  This implies (i) the first separation between the (bounded-communication) multi-message shuffled model and the central model, and (ii) the first separation between pure and approximate differentially private protocols in the shuffled model.
\end{itemize}
Interestingly, over the course of proving our lower bound, we have to consider (a generalization of) the following question that might be of independent interest: given $\gamma \in (0, 1)$, what is the smallest positive integer $m$ for which there exist two random variables $X^0$ and $X^1$ supported on $\{0, \dots, m\}$ such that (i) the total variation distance between $X^0$ and $X^1$ is at least $1 - \gamma$, and (ii) the moment generating functions of $X^0$ and $X^1$ are within a constant factor of each other everywhere? We show that the answer to this question is $m = \Theta(\sqrt{\log(1/\gamma)})$.
\end{abstract}

\newpage
\tableofcontents

\thispagestyle{empty}
\setcounter{page}{0}

\newpage

\section{Introduction}

% The spread of data-fuelled services has stimulated interest in devising methods for protecting user privacy. This has become more urgent than ever as machine learning is poised to revolutionize several new fields (such as healthcare) where accurate predictive algorithms feed on highly sensitive user data (e.g., their medical records).

Since its introduction by Dwork et al. \cite{dwork2006calibrating,dwork2006our}, \emph{differential privacy (DP)} has become widely popular as a rigorous mathematical definition of privacy. This has led to practical deployments at companies such as Apple~\cite{greenberg2016apple,dp2017learning}, Google~\cite{erlingsson2014rappor,CNET2014Google}, and Microsoft~\cite{ding2017collecting}, and in government agencies such as the United States Census Bureau \cite{abowd2018us}. The most widely studied setting with DP is the so-called \emph{central} model (denoted $\centralDP$) where an analyzer observes the crude user data but is supposed to release a differentially private data structure.  Many accurate private algorithms have been discovered in the central model; however, the model is limited when the analyst is not to be trusted with the user data. To remedy this,  the more appealing \emph{local} model of DP (denoted $\localDP$) \cite{kasiviswanathan2008what} (also~\cite{warner1965randomized}) requires the messages sent by each user to the analyst to be private. Nevertheless, the local model suffers from large estimation errors that are known to the be on the order of $\sqrt{n}$, where $n$ is the number of users, for a variety of problems including summation, the focus of this work~\cite{beimel2008distributed,ChanSS12}. This has motivated the study of the \emph{shuffled} model of DP (denoted $\shuffledDP$), which is intended as a middle-ground with trust assumptions better than those of the central model and estimation accuracy better than the local model.

While an analogous setup was first introduced in crytpography by Ishai et al. in their work on cryptography from anonymity~\cite{ishai2006cryptography}, the shuffled model was first proposed for privacy-preserving computations by Bittau et al.~\cite{bittau17} in their Encode-Shuffle-Analyze architecture. In this setup which is depicted in Figure~\ref{fig:esa}, each user sends (potentially several) messages to a trusted shuffler, who randomly permutes all incoming messages before passing them to the analyst. We will treat the shuffler as a black box in this work, though we point out that various efficient cryptographic implementations of the shuffler have been considered, including onion routing, mixnets, third-party servers, and secure hardware (see, e.g., the discussions in~\cite{ishai2006cryptography,bittau17}). The privacy properties of $\shuffledDP$ were first studied, independently, by Erlingsson et al.~\cite{erlingsson2019amplification} and Cheu et al.~\cite{CheuSUZZ19}. Moreover, several recent works have sought to nail down the trade-offs between accuracy, privacy and communication \cite{CheuSUZZ19,BalleBGN19,ghazi2019scalable, DBLP:journals/corr/abs-1906-09116,anon-power, ghazi2019private, balle_privacy_2019constantIKOS, balcer2019separating}.

\paragraph{Pure- and Approximate-DP.}
The two most widely used notions of DP are pure-DP~\cite{dwork2006calibrating} and approximate-DP~\cite{dwork2006our}, which we recall next. For any parameters $\epsilon \geq 0$ and $\delta \in [0,1]$, a randomized algorithm $P$ is \emph{$(\epsilon, \delta)$-DP} if for every pair datasets $X, X'$ differing on a single user's data, and for every subset $\MS$ of transcripts of $P$, it is the case that
\begin{equation}\label{eq:esp_delta_DP_intro}
\Pr[P(X) \in \MS] \leq e^\epsilon \cdot \Pr[P(X') \in \MS] + \delta,
\end{equation}
where the probabilities are taken over the randomness in $P$. The notion of $\epsilon$-DP is the special case where $\delta$ is set to $0$ in~(\ref{eq:esp_delta_DP_intro}); we use the terms \emph{pure-DP} when $\delta = 0$ and \emph{approximate-DP} when $\delta > 0$.  While $\delta$ is intuitively an upper bound on the probability that an $(\epsilon, \delta)$-DP algorithm fails to be $\epsilon$-DP, this failure event can in principle be catastrophic, revealing all the user inputs to the analyst.
% It's common to set $\epsilon$ to a small constant (e.g., around $1$) and $\delta$ to be much smaller than the reciprocal of the number of users.
Pure-DP protocols are thus highly desirable as they guarantee more stringent protections against the leakage of user data.
% and (ii) simplify the practical deployment of private algorithms as, for the same level of accuracy, it can be easier to track a single privacy parameter instead of two.
In the central and local settings, several prior works either obtained pure protocols in regimes where approximate protocols were previously known, or proved separations between pure and approximate protocols (e.g.,  \cite{hardt2009geometry,De12,nikolov2013geometry, steinke2015between,bun2018heavy}).

% \badih{Mention the papers on ``Between pure- and approximate-DP'' \cite{steinke2015between}, the reduction from pure- to approximate-DP in the local setup \cite{bun2018heavy}.}

% \badih{Also mention \cite{hardt2009geometry} and \cite{nikolov2013geometry} who show some separations/results about pure- and approximate-DP protocols?}

\paragraph{Summation.}
A basic primitive in data analytics and machine learning is the \emph{summation} (aka~\emph{aggregation}) of inputs held by different users. Indeed, private summation is a critical building block in the emerging area of \emph{federated learning} \cite{konevcny2016federated} (see also \cite{kairouz2019advances} for a recent extensive overview), where a machine learning model, say a neural network, is to be trained on data held by many users without having the users send their data over to a central analyzer. To do so, private variants of Stochastic Gradient Descent have been developed and their privacy/accuracy trade-offs analyzed (e.g., \cite{DPSGD}). The gist of these procedures is the private summation of users' gradient updates.
%
% Several other practical applications crucially rely on private summation procedures. One such example is the very recent work in the eye-tracking research community on privately constructing aggregate gaze maps \cite{liu2019differential, steil2019privacy}.
%
Private summation is also closely related to functions in the widely studied class of \emph{counting queries} (e.g.,~\cite{Vadhan-tutorial,blum2008learning,hardt2009geometry,hardt2010multiplicative,nikolov2013geometry}). 
%RK In fact, for the class of \emph{point functions} (aka \emph{histograms} or \emph{frequency estimation}), we obtain the first pure-DP protocol in the shuffled model with constant error, albeit with large communication (see Section~\ref{subsec:our_results} below for more details).

Several recent work studied approximate-$\shuffledDP$ protocols for summation~\cite{CheuSUZZ19,BalleBGN19,ghazi2019scalable,DBLP:journals/corr/abs-1906-09116,ghazi2019private,balle_privacy_2019constantIKOS,balcer2019separating}. For binary summation, Cheu et al.~\cite{CheuSUZZ19} show that the standard randomized response is an $(\epsilon, \delta)$-$\shuffledDP$ protocol for binary summation and that it incurs an absolute error of only $O(\sqrt{\log n})$ %only a \emph{constant} absolute error 
for constant $\epsilon$ and $\delta$ inverse polynomial in $n$. For real summation in the single-message shuffled model (denoted $\shuffledDP^1$), where each user sends a single message to the shuffler, Balle et al.~\cite{BalleBGN19} show that the tight error for approximate protocols is $\Theta(n^{1/6})$. For real summation in the multi-message shuffled model (denoted $\shuffledDP^{\geq 1}$), where a user can send more than one message, the state-of-the-art approximate protocol was recently obtained in~\cite{ghazi2019private, balle_privacy_2019constantIKOS} and it incurs error at most $O(1/\epsilon)$ with every user sending $O(1 + \frac{\log(1/\delta)}{\log n})$ messages of $O(\log{n})$ bits each.

The aforementioned protocols, along with several other results (including the work on ``privacy amplification by shuffling'' of Erlingsson et al.~\cite{erlingsson2019amplification} and Balle et al.~\cite{BalleBGN19}), demonstrate the power of the shuffled model over the local model in terms of privacy, as any $(\epsilon, o(1/n))$-$\localDP$ summation protocol must incur an error of $\Omega_{\eps}(\sqrt{n})$~\cite{ChanSS12}. However, all of the protocols proposed so far in the shuffled model only achieve an advantage over the local model when allowed approximation. This leads us to the following fundamental and perplexing question that is the focus of our work:
\begin{question}
Are there pure-$\shuffledDP$ protocols that achieve better utility than any $\localDP$ protocol?
\end{question}

\subsection{Main Results}\label{subsec:our_results}

% To describe our results, let $n$ denote the number of users.
We positively answer the above question for the problem of summation.
% \paragraph{Protocols.}
Namely, we give the first pure-$\shuffledDP$ protocol for binary summation with error depending only on $\epsilon$ but independent of $n$ and with logarithmic communication per user.

%\begin{theorem}\label{}
%There is an $\epsilon$-DP protocol for binary summation with absolute error $O(1/\epsilon^2)$ and where each user sends $O_{\epsilon}(\log{n})$ messages consisting of a single bit each.
%\end{theorem}
\begin{theorem}[\bf Pure Binary Summation via Shuffling]\label{th:pure_bin_agg_protocol_informal}
	For every positive real number $\eps$, there is a (non-interactive) $\eps$-$\shuffledDP$ protocol for binary summation that has expected error $O_{\epsilon}(1)$ and where each user sends $O_{\epsilon}(\log{n})$ messages each consisting of a single bit.
\end{theorem}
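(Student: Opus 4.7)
The plan is to design a non-interactive protocol in which each user sends $k = O_\eps(\log n)$ single-bit messages, so that the shuffler's output (equivalently, for bit-valued messages, the count $M$ of $1$-bits in the multiset) can be written as $M = \sum_i x_i + Z$, where $Z$ is an aggregate noise random variable whose distribution satisfies (i) the pointwise pure-DP likelihood-ratio bound $\Pr[Z = z]/\Pr[Z = z-1] \in [e^{-\eps}, e^{\eps}]$ for every integer $z$ in its support, and (ii) $\mathbb{E}|Z - \mathbb{E}[Z]| = O_\eps(1)$. The analyzer outputs $\hat S = M - \mathbb{E}[Z]$. Because any two neighbouring datasets $X, X'$ differ in exactly one bit, the induced laws of $M$ are shifts of one another by $1$, so the pure-DP condition reduces exactly to the pointwise ratio bound (i) on $Z$, a purely distributional question.

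The technical heart is therefore to construct a per-user bit-transmission scheme whose aggregated noise is, essentially, a discrete Laplace (two-sided geometric) random variable---the canonical pure-DP noise in the central model. I.i.d.\ Bernoulli noise will not suffice: the resulting Binomial has Gaussian-like tails whose pointwise likelihood ratios diverge near the boundary of the support, and in particular yields no $o(\sqrt{n})$-error pure-DP protocol. One candidate construction is a \emph{sparse-burst} scheme in which each user, independently with probability $\Theta_\eps(1/n)$, emits a burst of $1$-bits whose length is drawn from a truncated geometric distribution (with truncation threshold $\Theta_\eps(\log n)$), and otherwise emits $0$-bits. The aggregate is then a compound-Poisson-with-geometric-summands distribution, which has the desired geometric tail ratios. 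A subtlety that must be addressed is that a one-sided aggregate has a support that depends on $\sum_i x_i$, breaking pure DP at the boundary; this is repaired by giving every user a deterministic $1$-bit baseline (or, equivalently, by symmetrising the perturbation around a fixed mean) so that the support of $Z$ is invariant across neighbouring inputs.

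The privacy argument then verifies the pointwise ratio bound from the closed-form compound-Poisson probabilities; the accuracy claim follows from $\mathrm{Var}(Z) = O_\eps(1)$ (the product of the Poisson intensity and the second moment of the geometric burst length) together with a Chebyshev/moment-generating-function argument. The $O_\eps(\log n)$ bound on the number of messages per user is dictated by the truncation threshold on the geometric burst: it must be large enough that the truncated distribution still satisfies the likelihood-ratio bound at every relevant $z$. The main technical obstacle is precisely the noise-design step: the per-user distribution must be realisable by $O_\eps(\log n)$ independent bit transmissions, have matched support across the $x_i = 0$ and $x_i = 1$ cases, produce heavy two-sided geometric tails at the aggregate level, and keep the aggregate variance $O_\eps(1)$. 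These demands are in tension---pure DP requires heavy tails while $O_\eps(1)$ error requires concentration---and identifying a per-user distribution that simultaneously meets all of them is the crux of the theorem.
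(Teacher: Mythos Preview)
Your reduction ``the induced laws of $M$ are shifts of one another by $1$, so the pure-DP condition reduces exactly to the pointwise ratio bound on $Z$'' is where the argument breaks. That reduction requires the per-user noise contribution to be \emph{independent of the user's input bit}. But with input-independent noise and a bounded number of bit messages per user, the support of the total count $M$ necessarily shifts with $\sum_i x_i$: if user $i$ sends $x_i$ plus noise in $\{0,\dots,k-1\}$, then on the all-zeros dataset $\Pr[M=0]>0$ while on any dataset with a single $1$ we have $\Pr[M=0]=0$, and pure DP is violated. Your proposed repair---``a deterministic $1$-bit baseline, or equivalently symmetrising the perturbation around a fixed mean''---does not salvage the shift-by-$1$ structure. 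A deterministic baseline merely translates both distributions and leaves the boundary obstruction intact. Genuinely symmetrising so that user $i$'s count of ones has support $\{0,\dots,d\}$ \emph{regardless of $x_i$} forces the per-user output distribution to depend on $x_i$ (the ``noise'' must absorb the signal at the endpoints), and then the two views for neighbouring datasets are convolutions that differ in one factor, not translates of a common law. The privacy analysis can no longer be read off a single ratio $\Pr[Z=z]/\Pr[Z=z-1]$.

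This is precisely the difficulty the paper isolates. Its randomizer has each user output $(d\mp 1)/2$ ones with probability $1-p$ (the sign depending on $x_i$) and a truncated discrete Laplace sample with probability $p$, so the support is $\{0,\dots,d\}$ for both inputs but the distributions $\cR_0,\cR_1$ differ and are \emph{not} shifts of each other. The privacy proof (Lemma~\ref{lem:main-ratio}) must therefore bound $\Pr[\sum Z_i=t\mid Z_n\sim\cR_0]/\Pr[\sum Z_i=t\mid Z_n\sim\cR_1]$ directly, expanding over how many users fall in the Laplace branch and comparing tails of $m$-fold and $(m{+}1)$-fold convolutions of the truncated Laplace (Lemmas~\ref{lem:tail-bound} and~\ref{lem:pia-lb}). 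There is no ``closed-form compound-Poisson'' shortcut here; indeed the Polya--Aeppli law (compound Poisson with geometric summands) does not satisfy the uniform pointwise ratio bound you assert---its ratio $\Pr[Z=k]/\Pr[Z=k-1]$ varies with $k$ and degenerates at $k=0$---so even in an idealised infinite-support setting that piece of your plan would need substantial work. The crux you correctly identify (heavy geometric tails vs.\ $O_\eps(1)$ variance) is real, but the resolution requires the input-correlated mixture design and the convolution-level ratio analysis, neither of which your outline supplies.
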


We use the protocol in Theorem~\ref{th:pure_bin_agg_protocol_informal} as a building block in order to also obtain a protocol with constant error and polylogarithmic communication per user for the more general task of real summation where each user input is a real number in $[0, 1]$.
%\begin{theorem}\label{th:pure_real_agg_protocol}
%There is an $\epsilon$-DP protocol for real summation with absolute error $O(1/\epsilon^2)$ and where each user sends $O_{\epsilon}(\log^3{n})$ messages consisting of $O(\log \log{n})$ bits each.
%\end{theorem}

\begin{theorem}[\bf Pure Real Summation via Shuffling]
	\label{th:pure_real_agg_protocol_informal}
	For every positive real number $\epsilon$, there is a (non-interactive) $\eps$-$\shuffledDP$ protocol for real summation that has expected error $O_{\epsilon}(1)$ and where each user sends $O_{\epsilon}(\log^3 n)$ messages each consisting of $O(\log\log n)$ bits.
\end{theorem}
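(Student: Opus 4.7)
The plan is to reduce real summation to $O(\log n)$ instances of binary summation via the standard bit-decomposition trick, with carefully chosen privacy budgets allocated across bit positions.

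First, I would have each user $i$ replace their input $x_i \in [0,1]$ by a discretization $\tilde{x}_i$ to $b = O(\log n)$ bits (either by deterministic truncation or, to keep the estimator unbiased, by independent randomized rounding), so that $|\sum_i x_i - \sum_i \tilde{x}_i| = O(1)$ with high probability; writing $\tilde{x}_i = \sum_{j=1}^b 2^{-j} b_{i,j}$ with $b_{i,j} \in \{0,1\}$, the task reduces to privately estimating $S_j = \sum_i b_{i,j}$ for each bit position $j$, since $\sum_i \tilde{x}_i = \sum_{j=1}^b 2^{-j} S_j$. For each $j \in \{1,\ldots,b\}$, I would run the pure $\shuffledDP$ binary summation protocol of Theorem~\ref{th:pure_bin_agg_protocol_informal} in parallel on the bits $(b_{1,j},\ldots,b_{n,j})$ with privacy budget $\eps_j$, and output $\hat S = \sum_{j=1}^b 2^{-j} \hat S_j$. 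To share the single shuffler, each user would prepend a $\lceil \log_2 b \rceil = O(\log \log n)$-bit header to every message indicating the associated bit position; after shuffling, the analyzer groups messages by header and invokes the binary estimator on each group, which accounts for the $O(\log\log n)$ message length and preserves the per-coordinate shuffled-model guarantee.

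The key design choice is the privacy allocation: basic composition requires $\sum_j \eps_j \le \eps$, while the total expected error is at most $\sum_j 2^{-j}\cdot O_{\eps_j}(1) = \sum_j O(2^{-j}/\eps_j)$. Minimizing the error subject to the budget constraint (by Cauchy--Schwarz) suggests the geometric schedule $\eps_j \propto 2^{-j/2}$; concretely, taking $\eps_j = c\,\eps\cdot 2^{-j/2}$ with $c = \Theta(1)$ chosen so that $\sum_j \eps_j = \eps$ yields
\[
\sum_{j=1}^b \frac{2^{-j}}{\eps_j} \;=\; O\!\left(\frac{1}{\eps}\right)\sum_{j=1}^b 2^{-j/2} \;=\; O(1/\eps),
\]
giving the claimed $O_\eps(1)$ expected error. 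Pure $\eps$-DP then follows from basic composition across the $b$ bit positions plus post-processing by the linear combination $\sum_j 2^{-j}\hat S_j$.

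The main obstacle in this plan is controlling the per-user communication: the budgets $\eps_j$ shrink geometrically, so if the per-user message count of Theorem~\ref{th:pure_bin_agg_protocol_informal} depended polynomially on $1/\eps_j$, the contributions would blow up as $\sum_j 2^{j/2}$ and easily exceed $\poly(\log n)$. To hit the $O_\eps(\log^3 n)$ bound, I would therefore need to carefully use the fact that the binary-summation subroutine's per-user message count scales only mildly with $1/\eps_j$ (at most polylogarithmically), so that summing over $j=1,\ldots,b=O(\log n)$ gives $O_\eps(\log^3 n)$ messages in total; alternatively, if the $\eps_j$-dependence is worse, one can absorb the least significant bits into the randomized-rounding error (keeping only the top $O(\log\log n)$ nontrivial bit positions and handling the rest via rounding), at the cost of a more careful bias/variance analysis. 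Either way, the combinatorial content lies in balancing the three-way trade-off between privacy allocation, rounding precision, and subroutine communication.
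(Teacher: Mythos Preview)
Your high-level plan (bit decomposition to $O(\log n)$ bits, run the binary protocol of Theorem~\ref{th:pure_bin_agg_protocol_informal} per bit, tag each message with a $\lceil\log_2 b\rceil$-bit header, combine via basic composition) is exactly the paper's approach. The genuine gap is in your handling of the communication obstacle.

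The binary summation protocol of Theorem~\ref{th:pure_bin_agg_protocol_informal} sends $O(\log n/\eps)$ messages per user, i.e., the message count scales \emph{linearly} in $1/\eps$, not polylogarithmically as you hope. With your schedule $\eps_j \propto 2^{-j/2}$ over $b=\Theta(\log n)$ bits, the smallest budget is $\eps_b = \Theta(\eps\cdot n^{-\Theta(1)})$, so the per-user message count for that bit alone is $\Theta(\log n/\eps_b)=\poly(n)$, not $\poly\log n$. Your alternative of keeping only the top $O(\log\log n)$ bit positions does not help either: the randomized-rounding error per user is then $\Theta(2^{-O(\log\log n)})=1/\poly\log n$, and even with unbiased rounding the aggregate error is $\Theta(\sqrt{n}/\poly\log n)$, far from $O_\eps(1)$.

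The paper's fix is a \emph{floor} on the budgets: set $\eps_j=\max\{c\cdot 0.9^j\eps,\ \eps/(4\log n)\}$. The geometric part keeps $\sum_j 2^{-j}\cdot\tilde O(\eps_j^{-3/2})=O_\eps(1)$ (note the binary protocol's error is $\tilde O(\eps^{-3/2})$, not $O(1/\eps)$), while the floor ensures each bit uses at most $O(\log^2 n/\eps)$ messages, so the total over $O(\log n)$ bits is $O_\eps(\log^3 n)$. The cutoff index $j^*$ where the floor kicks in satisfies $2^{j^*}\gg(\log n)^{3/2}$, which is exactly what makes the error contribution from the floored bits negligible. This floor is the one missing ingredient in your proposal.
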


% \rasmus{Simplify bounds by assuming $\eps = O(1)$? With no bound on $\eps$, the expression for $p$ in Lemma 6 should be changed such that $p\in [0,1]$.}

%\badih{Spell out the dependence of the number of messages on $\epsilon$ in Theorems~\ref{th:pure_bin_agg_protocol} and~\ref{th:pure_real_agg_protocol}?}

% \paragraph{Lower Bound.}
In light of Theorem~\ref{th:pure_bin_agg_protocol_informal}, a natural question is if there is a (non-interactive) pure-DP protocol for binary summation with logarithmic (or even constant) error and constant communication per user, as in the approximate case. We show that no such protocol exists, even for very large (polynomial) errors:
\begin{theorem}[\bf Communication Lower Bound]\label{th:pure_bin_agg_lb}
In any non-interactive $\epsilon$-DP protocol for binary summation with expected error at most $n^{0.5 - \Omega(1)}$, the worst-case per user communication must be $\Omega_{\epsilon}(\sqrt{\log{n}})$ bits.
\end{theorem}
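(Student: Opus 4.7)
The plan is to reduce Theorem~\ref{th:pure_bin_agg_lb} to the moment generating function (MGF) question stated in the abstract. Suppose for contradiction that $\Pi$ is a non-interactive $\epsilon$-$\shuffledDP$ binary-summation protocol on $n$ users with per-user communication at most $c$ bits and expected error at most $E := n^{0.5-\alpha}$ for some $\alpha>0$. Let $q_0$ and $q_1$ be the distributions of one user's entire message multiset on inputs $0$ and $1$; both are supported on a finite set $\mathcal{M}$ with $|\mathcal{M}| \leq 2^c$. Conditioned on the input sum $s := \sum_i x_i$, the multiset of messages observed by the analyzer after shuffling is distributed as the shifted convolution $q_0^{*(n-s)} * q_1^{*s}$, where $*$ denotes convolution on the message alphabet.

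\textbf{Step 1 (DP $\Rightarrow$ MGF closeness).} I would first translate pure DP into a per-user MGF statement. Since the pointwise density ratio between the histogram distributions at $s$ and $s+1$ is at most $e^{\epsilon}$, for every direction $\vec t$ the per-user MGFs $\phi_b(\vec t) := \E_{X \sim q_b}[e^{\langle \vec t, X\rangle}]$ must satisfy $\phi_1(\vec t)/\phi_0(\vec t) \in [e^{-\epsilon}, e^{\epsilon}]$. This follows by taking expectations of $e^{\langle \vec t, \cdot\rangle}$ against the density-ratio inequality and cancelling the common factor $\phi_0(\vec t)^{n-1}$. Thus $q_0, q_1$ satisfy condition~(ii) of the MGF question in the abstract.

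\textbf{Step 2 (Utility $\Rightarrow$ aggregate TV close to $1$).} Next I would use the error bound to show that the shuffled-output distributions under $s=0$ and $s=t$ become almost disjoint for $t$ a bit larger than $E$. Taking $t = \Theta(E)$ and applying concentration to the analyzer's (unbiased) estimator, one obtains $\mathrm{TV}\!\left(q_0^{*n},\,q_0^{*(n-t)} * q_1^{*t}\right) \geq 1 - \gamma$ with $\gamma = n^{-\Omega(1)}$, or even exponentially small with sharper boosting of the estimator.

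\textbf{Step 3 (Apply the generalized MGF lemma).} Finally, I would invoke a generalization of the MGF question: for two distributions over a finite set $\mathcal{M}$ whose MGFs are within a constant factor in every direction and whose shifted $n$-fold convolutions are at total variation distance at least $1-\gamma$, one must have $|\mathcal{M}| \geq 2^{\Omega(\sqrt{\log(1/\gamma)})}$. Plugging $\gamma = n^{-\Omega(1)}$ from Step~2 together with $|\mathcal{M}|\leq 2^c$ then gives $c \geq \Omega_{\epsilon}(\sqrt{\log n})$, contradicting the assumed communication bound.

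\textbf{Main obstacle.} The principal difficulty is Step~3. The 1-D version in the abstract concerns a single pair of random variables on $\{0,\ldots,m\}$ whose total variation distance is already close to $1$, whereas here we have per-user MGF closeness on the multidimensional set $\mathcal{M}$ together with TV closeness only at the level of the $n$-fold convolution. Bridging this gap is the technical heart of the proof, and I expect it to require (a) a Chebyshev/polynomial approximation of the indicator of ``small likelihood ratio'' to upper-bound the density ratio of convolutions by MGF ratios, together with (b) a coupling-based smoothing argument that amplifies per-user distinguishability into aggregate-level TV, so that the problem ultimately reduces to the clean 1-D MGF question whose tight answer $m = \Theta(\sqrt{\log(1/\gamma)})$ is exactly what yields the claimed $\sqrt{\log n}$ communication lower bound.
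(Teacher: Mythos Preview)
Your Step~1 is exactly right and matches the paper's Lemma~\ref{lem:dp-mgf-ratio}. The gap is in Step~2, and it propagates into the obstacle you flag in Step~3.

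You work with the \emph{aggregate} total variation distance $\mathrm{TV}\bigl(q_0^{*n},\,q_0^{*(n-t)}*q_1^{*t}\bigr)$ and then hope to recover a per-user statement from it. But passing from aggregate TV back to per-user TV loses a factor of $n$ (via the hybrid/triangle inequality bound $\mathrm{TV}(q_0^{*n},q_1^{*n})\le n\cdot\mathrm{TV}(q_0,q_1)$), which is fatal here: you would only get $\mathrm{TV}(q_0,q_1)\ge (1-\gamma)/n$, far too weak to feed into the MGF lemma. Your proposed ``coupling-based smoothing argument that amplifies per-user distinguishability into aggregate-level TV'' is the wrong direction; you need the reverse implication, and that reverse implication is simply false in general.

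The paper sidesteps this entirely by invoking a result of Chan, Shi, and Song (Theorem~\ref{thm:chan-local}): any (even non-private) protocol with expected absolute error $\alpha$ must already satisfy the \emph{per-user} bound $\mathrm{SD}(\bX^0,\bX^1)\ge 1-O(\alpha/\sqrt n)$. The proof is an anti-concentration argument via the Marcinkiewicz--Zygmund inequality applied to the posterior on the input bits given the transcript. With $\alpha=n^{0.5-\chi}$ this gives $\mathrm{SD}(\bX^0,\bX^1)\ge 1-n^{-\Omega(1)}$ directly at the single-user level.

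Once you have per-user MGF closeness (your Step~1) \emph{and} per-user TV $\ge 1-\gamma$, the obstacle you identify in Step~3 dissolves: one applies the MGF lemma (Lemma~\ref{lem:bounded-moment-ratio-to-sd}) to the single pair $\bX^0,\bX^1$ supported on $\Delta_{k,m}$, with no convolutions in sight. That lemma is proved not via Chebyshev/polynomial approximation but by an explicit LP-duality construction (Lemma~\ref{lem:coeff-selection}): for each $\by\in\Delta_{k,m}$ one chooses a direction $\tau(\by)=2\rho\by$ and weight $\beta(\by)$ so that the resulting linear combination of the MGF-ratio inequalities certifies $\mathrm{SD}(\bX^0,\bX^1)\le 1-2^{-O_\eps(m^2\log k)}$. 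Combining the two per-user bounds yields $m^2\log k\ge\Omega_{\eps,\chi}(\log n)$, hence $m\log k\ge\Omega_{\eps,\chi}(\sqrt{\log n})$.

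In short: the missing idea is Theorem~\ref{thm:chan-local}, which gives the per-user TV lower bound directly from accuracy and makes your Step~3 detour unnecessary.
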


%RK To prove the lower bound in Theorem~\ref{th:pure_bin_agg_lb}, the first step relies on the following simple yet useful connection between pure-DP and moment generating functions (MGFs). Denote by $\bX^0$ and $\bX^1$, the random multisets of messages sent under inputs $0$ and $1$ respectively in any given protocol for binary summation. Our first step uses the fact that if this protocol is $\eps$-DP in the shuffled model, then the ratio between the MGF of $\bX^0$ and that of $\bX^1$ must always lie in the interval $[e^{-\eps}, e^{\eps}]$. We call this property \emph{$e^{\eps}$-bounded MGF ratio}. It turns out that this property of pure private protocols alone cannot give a lower bound asymptotically better than $\sqrt{\log{n}}$ in Theorem~\ref{th:pure_bin_agg_lb}. We provide a counterexample in Section~\ref{sec:MGF_limit}.

\subsection{Implications}
Our results described above imply new separations between different types of DP protocols (e.g., $\centralDP$, $\localDP$, $\shuffledDP^1$, and $\shuffledDP^{\geq 1}$), and also give the first accurate pure-$\shuffledDP$ protocol for histograms. We elaborate on these next.

\paragraph{Pure Local vs Shuffled Protocols.}
In $\localDP$, the tight accuracy for binary summation is known to be $\Theta(\sqrt{n})$ for approximate protocols \cite{warner1965randomized, beimel2008distributed,ChanSS12}. 
Our Theorems~\ref{th:pure_bin_agg_protocol_informal} and~\ref{th:pure_real_agg_protocol_informal} give the first pure-$\shuffledDP$ protocols with error $o(\sqrt{n})$ for binary and real summation respectively, and in fact they only incur constant error for both of these problems.
Furthermore, Bun et al.~\cite{bun2018heavy} gave a generic transformation from any approximate-$\localDP$ protocol to a pure-$\localDP$ protocol with essentially the same accuracy and each user communicates only $O(\log\log{n})$ bits.  In contrast, our Theorem~\ref{th:pure_bin_agg_lb} implies that in any such transformation in the shuffled model (if one exists), the per user communication has to be $\Omega (\sqrt{\log{n}})$.

\paragraph{Pure vs Approximate Shuffled Protocols.}
Cheu et al. \cite{CheuSUZZ19} showed that the standard randomized response~\cite{warner1965randomized} is an approximate-DP protocol for binary summation that incurs only logarithmic error (for constant $\epsilon$, and $\delta$ inverse polynomial in $n$), and where each user sends a single bit.  In contrast, our Theorem~\ref{th:pure_bin_agg_lb} implies that the communication cost of any pure-DP protocol for binary summation with logarithmic error (and in fact with error as large as $n^{0.5-\Omega(1)}$) is $\Omega(\sqrt{\log{n}})$ bits. Put together, these two results imply the first separation between the communication complexity of pure-$\shuffledDP$ and approximate-$\shuffledDP$ protocols.

\paragraph{Pure Single-Message vs Multi-Message Shuffled Protocols.}
As recently shown by \cite{balcer2019separating}, any pure-$\shuffledDP^1$ protocol implies a pure-$\localDP$ protocol with the same accuracy. This implies that any pure-$\shuffledDP^1$ protocol for binary summation must incur error $\Omega_{\epsilon}(\sqrt{n})$. Our Theorem~\ref{th:pure_bin_agg_protocol_informal} thus implies a huge separation of $\Theta_{\epsilon}(\sqrt{n})$ between the errors possible for pure-$\shuffledDP^1$ and pure-$\shuffledDP^{\geq 1}$ protocols.

\paragraph{Multi-Message Shuffled vs Central Protocols.}
It is well-known that the tight error for binary summation in $\centralDP$ is $O(1/\epsilon)$ \cite{dwork2006calibrating}.
Theorem~\ref{th:pure_bin_agg_lb} proves that any $\shuffledDP$ protocol with per user communication $o_{\epsilon}(\sqrt{\log{n}})$ bits must incur error $n^{0.5-\Omega(1)}$. It thereby gives the first separation between (bounded-communication) $\shuffledDP^{\geq 1}$ and $\centralDP$ protocols. Indeed the technique used to prove Theorem~\ref{th:pure_bin_agg_lb} is, to the best of our knowledge, the first to separate the accuracy of (bounded-communication) $\shuffledDP^{\geq 1}$ protocols from those of $\centralDP$ protocols with the same privacy parameters (all previous lower bounds in the shuffled model~\cite{CheuSUZZ19,BalleBGN19,anon-power} only apply to single-message protocols).

\paragraph{Pure Protocol for Histograms.}
Our pure binary summation protocol (Theorem~\ref{th:pure_bin_agg_protocol_informal}) implies as a black-box the first pure-DP protocol with polylogarithmic error for computing \emph{histograms} (aka \emph{point functions} or \emph{frequency estimation}),  albeit with very large communication (see Appendix~\ref{sec:app_hist} for more details). It remains a very interesting open question to obtain a communication-efficient and accurate pure-DP protocol for histograms (see Section~\ref{sec:conc_oqs} for more on this and other open questions).

% \begin{theorem}[$\log{n}$ Lower bound for Monotone Analyzers]
% \badih{To be filled.}
% \end{theorem}

\subsection{Overview of Techniques}
\label{sec:technique-overview}

\paragraph{Binary Summation Protocol.}

We first explain why all existing summation protocols in the shuffled model with error $o(\sqrt{n})$ are not $O(1)$-DP. First, note that as observed by \cite{balcer2019separating}, any pure-$\shuffledDP^1$ protocol implies a pure-$\localDP$ protocol with the same accuracy and privacy. Combined with the fact that any $O(1)$-$\localDP$  protocol for summation must have error $\Omega(\sqrt{n})$, this implies the same lower bound for any pure $O(1)$-$\shuffledDP^1$ protocol. In particular, this rules out the binary randomized response \cite{warner1965randomized} that was analyzed in the shuffled model by \cite{CheuSUZZ19}. It also rules out the protocol implied by shuffling RAPPOR \cite{erlingsson2014rappor}, and more generally any protocol obtained by the amplification via shuffling approach of \cite{erlingsson2019amplification, BalleBGN19}. Moreover, in the multi-message shuffled setup, the state-of-the-art real summation protocols of \cite{ghazi2019private,balle_privacy_2019constantIKOS}, which rely on the Split-and-Mix procedure~\cite{ishai2006cryptography}, only give approximate-DP.
% In the multi-message shuffled setup, the state-of-the-art real summation procedures only ensure approximate privacy. In fact, these procedures rely on the Split-and-Mix procedure of Ishai et al. \cite{ishai2006cryptography}, where on input $x$, a user sends to the shuffler uniform random shares (from a sufficiently large group) conditioned on their sum being equal to $x$. The differentially private version of this protocol proceeds by adding independent noise random variables to each user input in such a way that the sum of these noise terms is a Laplace random variable. Consider now the two input sequences $(0,0,\dots,0)$ consisting of all zeros, and $(1,0, \dots, 0)$ consisting of a single one and $n-1$ zeros. For this adjacent pairs of inputs, the ratio of the probabilities that the output of the shuffler consists of the all-zeros vector is at least $???$. \badih{Elaborate on why this isn't pure-DP.}

A different $\shuffledDP^{\geq 1}$ protocol for binary summation can be obtained by instantiating the recent $\shuffledDP^{\geq 1}$ protocols for computing histograms~\cite{anon-power}, with a domain size of $B = 2$. On a high-level, the two resulting protocols---one of which is based on the Count Min sketch and the other on the Hadamard response---can be seen as special cases of the following common template: each user (i) samples a number $\rho$ of messages that depend on their input, (ii) independently samples a number $\eta$ of noise messages, and (iii) sends these $\rho+\eta$ messages to the shuffler. Loosely, the analyzer then outputs the number of messages ``consistent with'' the queried input. However, it can be seen that any protocol following this template will not be pure-DP, as the supports of the distribution of the count observed at the analyzer can shift by $1$ when a single user input is changed. The crucial insight in our pure protocol for binary summation will be to \emph{correlate} the input-dependent messages and the noise messages sampled by each user in steps (i) and (ii) above. By doing so, we not only aim to ensure that the supports are identical but that the two densities are also within a small multiplicative factor on any point. We implement this idea using binary messages by having each user send $d$ bits on both inputs $0$ and $1$. Specifically, the user will start by flipping a suitably biased coin.  If it lands as head, the user will send $(d+1)/2$ zeros and $(d-1)/2$ ones when the input is $0$, and vice versa when the input is $1$. If the coin lands as tail, the user will sample an integer $z$ from a truncated discrete Laplace distribution and send $z$ zeros and $d-z$ ones (see Algorithm~\ref{alg:randomizer} and Equation~(\ref{eq:truncated_discrete_Laplace}) for more details). The overall (mixture) distributions of transmitted ones under both zero and one inputs are superimposed in Figure~\ref{fig:protocol_dist} (in log scale). The analyzer (Algorithm~\ref{alg:analyzer}) then outputs the number of received ones after debiasing. Note that the number of ones received by the analyzer is a random variable taking values between $0$ and $d n$ inclusive. To prove that the algorithm is private, we intuitively wish to argue that the noise distribution satisfies the property that its density values on any two adjacent points are within a multiplicative $e^{\epsilon}$ factor. However, the technical challenge stems from the fact that this noise distribution depends on the specific input sequence (and as we discussed above this dependence is necessary!). Instead, we have to analyze the $n$-fold convolution of the individual responses, and show that the density values of the resulting distribution on any two adjacent points in $\{0,1,\dots, dn\}$ are within a multiplicative factor of $e^{\epsilon}$, for any input sequence.  The crux of the proof is to relate the tails of different convolutions of the truncated discrete Laplace distribution (Lemmas~\ref{lem:tail-bound} and~\ref{lem:pia-lb}).  We determine a setting of (i) the mixture probability coefficient (denoted by $p$ in Algorithm~\ref{alg:randomizer}), (ii) the parameter $d$, and (iii) the ``inverse scaling coefficient'' of the truncated discrete Laplace distribution (denoted by $s$ in Algorithm~\ref{alg:randomizer}), for which the privacy property holds and for which the resulting expected absolute error is $O_{\epsilon}(1)$.

We point out that the dependence of the error on $\epsilon$ that we obtain is $\tilde{O}(1/\epsilon^{3/2})$ for $\epsilon \leq O(1)$ (see Theorem~\ref{th:pure_bin_agg_protocol} for more details). An interesting open question is whether this dependence can be further reduced to $O(1/\epsilon)$, which is the tight error in the central model \cite{dwork2006calibrating}.

% \badih{Then add main idea our protocol: ``support alignment'', i.e., correlating signal and noise sent by each player?}

\begin{algorithm}[t]
\caption{Randomizer for binary summation.}\label{alg:randomizer}
\begin{algorithmic}[1]
\Procedure{BinaryRandomizer$_{\eps, n}(x)$}{}
\State Let $p, d, s$ be as in Lemma~\ref{lem:main-ratio} (depending on $\eps, n$)
\State $a \leftarrow \mathrm{Ber}(p)$
\If{$a = 0$}
\If{$x = 0$}
\State \Return the multiset with $\left(\frac{d - 1}{2}\right)$ ones and $\left(\frac{d + 1}{2}\right)$ zeros
\Else
\State \Return the multiset with $\left(\frac{d + 1}{2}\right)$ ones and $\left(\frac{d - 1}{2}\right)$ zeros
\EndIf
\Else
\State $z \leftarrow \DLap_d(d/2, s)$
\State \Return the multiset with $z$ ones and $(d - z)$ zeros
\EndIf
\EndProcedure
\end{algorithmic}
\end{algorithm}

\begin{algorithm}[t]
\caption{Analyzer for binary summation.}\label{alg:analyzer}
\begin{algorithmic}[1]
\Procedure{BinaryAnalyzer$_{\eps, n}$}{$R$} %$_d(y_1, \dots, y_{dn})$}{}
%\State \Return $\frac{n}{2} + \sum_{i=1}^{dn} \left(y_i - \frac{1}{2}\right)$
\State Let $d$ be as in Lemma~\ref{lem:main-ratio} (depending on $\eps, n$)
\State \Return $\frac{n}{2} + \sum_{y\in R} \left(y - \frac{1}{2}\right)$ % (rounded to lie in $[0,n]$).
\EndProcedure
\end{algorithmic}
\end{algorithm}

\paragraph{Real Summation Protocol.}

We use our pure private binary summation protocol outlined above as a building block in order to obtain a pure private real summation protocol and prove Theorem~\ref{th:pure_real_agg_protocol_informal}. We note that Cheu et al.~\cite{CheuSUZZ19} had given a transformation from binary summation to real summation, but their reduction results in a protocol with a very large communication of $\tilde{\Omega}(\sqrt{n})$ bits in order to achieve logarithmic error. We instead give a (different) transformation that results in a protocol with polylogarithmic communication. The high-level idea of our reduction is the following: consider the binary representation of the inputs after rounding them to $O(\log n)$ bits of precision, then approximate the sum for each bit position independently, and finally combine the estimates into an approximation of the (real-valued) sum of the inputs.
Since the bit sum estimates have geometrically decreasing weights, we can afford to increase the error on less significant bits. %, such that the total error is a constant times the error on the sum of the most significant bit of each input.
In terms of privacy, this means that for the $j$th most significant bit, we run an $\eps_j$-DP binary summation protocol where $\eps_1, \eps_2, \ldots$ is a decreasing sequence. The protocol is illustrated in Algorithms~\ref{alg:real-randomizer} and~\ref{alg:real-analyzer}. By carefully choosing the sequence $\eps_1, \eps_2, \dots $, we can ensure that the total pure privacy parameter $\sum_j \eps_j$ is small, while the total error is a constant times the error for the sum of the most significant bits of the inputs. Intuitively, choosing $\eps_1, \eps_2, \dots$ to be a geometrically decreasing sequence (e.g., $\eps_j = \frac{0.9^j \cdot \eps}{10}$) should suffice for our purposes. However since the communication complexity of our binary summation protocol also depends on the privacy parameter $\eps$, such a choice of the sequence would result in $\poly(n)$ communication complexity. To overcome this, our actual sequence has a ``cut-off'' so that the $\eps_j$'s do not go below a certain value. Please see Section~\ref{sec:real-summation} for more details.

\begin{algorithm}[t]
\caption{Randomizer for real summation.}\label{alg:real-randomizer}
\begin{algorithmic}[1]
\Procedure{RealRandomizer$_{(\eps_j)_{j \in \N}, n}(x)$}{}
\For{$j = 1$ \textbf{to} $2\log n$}
\State $x[j] \leftarrow j$th most significant bit of $x$
\State $S_j \leftarrow${\sc BinaryRandomizer}$_{\eps_j, n}(x[j])$ \hfill \text{$S_j$ is a multiset of zeros and ones.}
\State $R_j \leftarrow \{j\} \times S_j$ \hfill \text{$R_j$ is a multiset of tuples $(j, 0)$ and $(j, 1)$.}
\EndFor
\State \Return $\bigcup_{j=1}^{2\log n} R_j$
\EndProcedure
\end{algorithmic}
\end{algorithm}

\begin{algorithm}[t]
\caption{Analyzer for real summation.}\label{alg:real-analyzer}
\begin{algorithmic}[1]
\Procedure{RealAnalyzer$_{(\eps_j)_{j \in \N}, n}(R)$}{}
\For{$j = 1$ \textbf{to} $2\log_2 n$}
\State $R_j \leftarrow \{y_1 \; |  \; y \in R \text{ and } y_0 = j\}$ \hfill \text{Multiset of bit messages for the $j$th bit.}
\State $a_j \leftarrow ${\sc BinaryAnalyzer}$_{\eps_j, n}(R_j)$
\EndFor
\State \Return $\sum_{j=1}^{2\log n} a_j / 2^j$
\EndProcedure 
\end{algorithmic}
\end{algorithm}

\paragraph{Lower Bound.} 
We next outline the proof of Theorem~\ref{th:pure_bin_agg_lb}. Without loss of generality, we consider an arbitrary $\epsilon$-$\shuffledDP$ protocol performing binary summation with error $n^{0.5-\Omega(1)}$, and where every user sends $m$ messages each belonging to the domain $\{1,\dots,k\}$. We wish to lower bound the number of bits of communication per user in this protocol, which is equal to $m \log{k}$. We denote by $\bX^0$ and $\bX^1$ the random multisets of messages sent by a user in this protocol under inputs $0$ and $1$ respectively. Note that $\bX^0$ and $\bX^1$ are supported on the set $\Delta_{k,m} := \{ (z_1, \ldots, z_k) \in \Z_{\geq 0}^k ~\mid~ z_1 + \cdots + z_k = m\}$. Here, $z_i$ captures the number of $i$ messages sent by the user for each $i \in \{1,\dots,k\}$.

Using the pure privacy of the protocol, we can argue that the ratio of the moment generating functions (MGFs) of $\bX^0$ and $\bX^1$ cannot take a very large or a very small value. Specifically, using the fact that the MGF of a sum of independent random variables is equal to the product of the individual MGFs, we derive a simple yet powerful property that should be satisfied by any $\epsilon$-DP protocol in the shuffled model: the ratio of the MGFs of $\bX^0$ and $\bX^1$ should always lie in the interval $[e^{-\eps}, e^{\eps}]$. We will refer to such random variables as having an \emph{$e^{\eps}$-bounded MGF ratio} (see Section~\ref{sec:mgf} for more details). We remark that while MGFs have been used before in DP by Abadi et al.~\cite{DPSGD} and subsequent works on Renyi DP (starting from~\cite{Renyi-DP}), these usages are in a completely different context compared to ours. In particular, these prior works keep track of the moments in order to bound the privacy parameters under composition of protocols.  To the best of our knowledge, MGFs have neither been used in lower bounds for DP nor in the shuffled model before.

Then, using the accuracy of the protocol, we can deduce that the total variation distance between $\bX^0$ and $\bX^1$ has to be large. We do so by invoking a result from the literature~\cite{ChanSS12,anon-power} showing that for any binary summation protocol that incurs an absolute error of $\alpha$, the total variation distance between $\bX^0$ and $\bX^1$ must be at least $1 -\Theta(\alpha/\sqrt{n})$ (see Theorem~\ref{thm:chan-local} for more details). Since $\alpha = n^{0.5 - \Omega(1)}$ in our case, we get a lower bound of $1 - n^{-\Omega(1)}$ on the total variation distance between $\bX^0$ and $\bX^1$.

Equipped with these two ingredients, the task of lower bounding the per user communication cost of the protocol reduces to lower bounding the following quantity:

\begin{definition} \label{def:lower-bound-question}
Given parameters $\epsilon > 0$ and $\gamma \in [0,1]$, we define $C_{\epsilon, \gamma}$ as the minimum value of $m \log k$ for which there exist two random variables supported on $\Delta_{k, m}$ that are at total variation distance is at least $1 - \gamma$ but that have an $e^{\eps}$-bounded MGF ratio.
\end{definition}

% \begin{question}\label{quest:tv_MGF}
% Given parameters $\epsilon > 0$ and $\gamma \in [0,1]$, for what values of $m \log k$ is it the case that there are two random variables supported on $\Delta_{k, m}$ that are at total variation distance at least $1 - \gamma$ but that have an $e^{\eps}$-bounded MGF ratio?
% \end{question}

% \begin{question}\label{quest:tv_MGF}
% What is the minimum value of $m \log k$ for which there exist two random variables supported on $\Delta_{k, m}$ that are at total variation distance is at least $1 - \gamma$ and that have an $e^{\eps}$-bounded MGF ratio?
% \end{question}

Note that any lower bound on the value of $C_{\epsilon, \gamma}$ can be used to infer a lower bound on the per user communication cost. In order to prove Theorem~\ref{th:pure_bin_agg_lb}, and given our setting of $\gamma = 1/n^{\Omega(1)}$, it is thus enough for us to show that $C_{\epsilon, \gamma} \geq \Omega_{\eps}(\sqrt{\log (1/\gamma)})$. To prove this bound, it suffices to show that if two random variables $\bX^0, \bX^1$ have an $e^{\eps}$-bounded MGF ratio, then their total variation distance must be at least $1 - \exp(O_{\eps}(m^2 \log k))$. For each $\bx \in \Delta_{k, m}$, we view $\Pr[\bX^0 = \bx]$ and $\Pr[\bX^1 = \bx]$ as variables. The $e^\eps$-bounded MGF ratio constraints can then be written as infinitely many linear inequalities over these variables. Moreover, the total variation distance between $\bX^0$ and $\bX^1$ can be written as a maximum of linear combinations of these same variables. We therefore get a linear program with infinitely many constraints, and we would like to show that any solution to it has ``cost'' (i.e., total variation distance) at least $1 - \exp(O_{\eps}(m^2 \log k))$. We do so by giving a dual solution with cost at most $1 - \exp(O_{\eps}(m^2 \log k))$, which by weak duality implies our desired bound (see Section~\ref{sec:lb_bit_agg} for more details).
% (Note that our proof in Section~\ref{sec:mgf-to-comm-bound} is completely self-contained and weak duality is only implicit in the proof.)

A natural question is if the lower bound $C_{\epsilon, \gamma} \geq \Omega_{\eps}(\sqrt{\log(1/\gamma)})$ outlined above can be improved, as that would immediately lead to an improved communication complexity lower bound. However, we show that the lower bound is tight, even in the special case where $k = 2$. Namely, we give two random variables supported on $\Delta_{2, m}$ with $m = \Theta_{\eps}(\sqrt{\log(1/\gamma)})$ that are at total variation distance at least $1 - \gamma$ but that have an $e^{\eps}$-bounded MGF ratio. Our construction is based on truncations of discrete Gaussian random variables (see Section~\ref{sec:MGF_limit} for more details).  We note that this limitation only applies to the approach of lower bounding the per user communication complexity via lower bounding $C_{\epsilon, \gamma}$. It remains possible that other approaches might give better lower bounds. For instance, one might be able to proceed by giving a necessary condition for the accuracy of binary summation protocols that is stronger than the total variation distance bound that we used, or a necessary condition for pure privacy that is better than our $e^{\eps}$-bounded MGF ratio property.

% \subsection{Related Work}

\subsection{Organization}
We start with some notation and background in Section~\ref{sec:prelim}. Our protocol for binary summation is presented and analyzed in Section~\ref{sec:bit-sum-protocol}. In Section~\ref{sec:lb_bit_agg}, we prove our lower bound (Theorem~\ref{th:pure_bin_agg_lb}). Our protocol for real summation appears in Section~\ref{sec:real-summation}. We conclude with some interesting open questions in Section~\ref{sec:conc_oqs}. Our corollary for histograms appears in Appendix~\ref{sec:app_hist}, and deferred proofs appear in Appendices~\ref{sec:missing_proofs_bit_sum_prot},~\ref{app:tv-vs-utility-proof}, and~\ref{app:discrete-gaussian-tail}.

\section{Preliminaries}\label{sec:prelim}

\paragraph{Shuffled Model of Privacy.}

We denote by $n$ the number of users. For each $i$ in $[n] := \{1,\dots,n\}$, we denote by $x_i$ the input held by the $i$th user, and further assume that $x_i \in \MX$.  In the binary summation case, we have that $\MX = \{ 0, 1\}$ while in the real summation case, we let $\MX$ be the set $[0, 1]$ of real numbers.  A protocol $P = (R, S, A)$ in the shuffled model consists of three algorithms: (i) the {\it local randomizer} $R(\cdot)$ whose input is the data of one user and whose output is a sequence of messages, (ii) the {\it shuffler} $S(\cdot)$ whose input is the concatenation of the outputs of the local randomizers and whose output is a uniform random permutation of its inputs, and (iii) the {\it analyzer} $A(\cdot)$ whose input is the output of the shuffler and whose output is the output of the protocol. An illustration of the shuffled model is given in Figure~\ref{fig:esa}. The privacy in the shuffled model is guaranteed with respect to the input to the analyzer, i.e., the output of the shuffler.  
\begin{definition}[DP in the shuffled model,~\cite{erlingsson2019amplification, CheuSUZZ19}]
\label{def:dp_shuffled}
 A protocol $P = (R, S, A)$ is {\it $(\ep, \delta)$-$\shuffledDP$} if, for any dataset $X = (x_1, \ldots, x_n)$, the algorithm 
$S(R(x_1), \ldots, R(x_n))$
is $(\ep, \delta)$-DP. In the special case where $\delta = 0$, we say that the protocol $P$ is $\epsilon$-$\shuffledDP$.
\end{definition}
Note that the $\localDP$ model corresponds to the case where $S$ is replaced by the identity function.  

\begin{definition}[Non-Interactive Protocols] \label{def:non-interactive-protocols}
Let $k$ and $m$ be positive integers, and let $\Delta_{k,m} := \{ (z_1, \ldots, z_k) \in \Z_{\geq 0}^k ~\mid~ z_1 + \cdots + z_k = m\}$.  In a \emph{non-interactive} (aka \emph{one-round}) protocol, each of the $n$ users (i.e., randomizers) receives an input $b$ and outputs at most $m$ messages each consisting of $\log k$ bits, according to a certain distribution (depending on $b$), and using private randomness. We say that such a protocol has a \emph{communication complexity} of $m \log k$.
\end{definition}
It is often convenient to view each message as a number in $[k]$. We use $\bX^b \in \Z_{\geq 0}^{k}$ to denote the random variable whose $s$th coordinate $X^b_{s}$ denotes the number of $s$-messages output by the randomizer on input $b$.  Note that it is always the case that $\sum_{s \in [k]} X^b_s = m$, i.e., $\supp(\bX^b) \subseteq \Delta_{k, m}$.

\begin{figure}[h]
\centering
\includegraphics[width=0.73\textwidth]{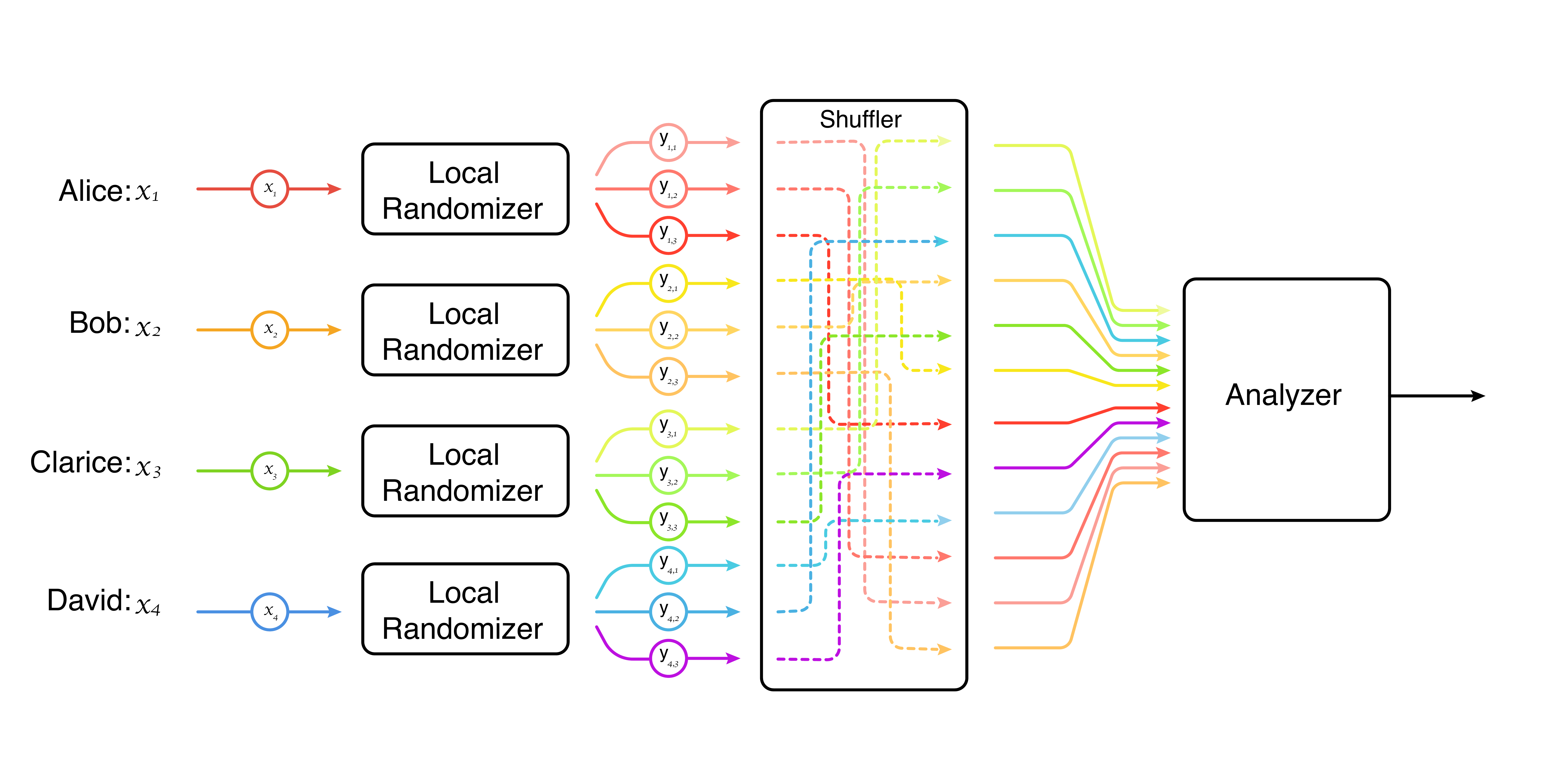}
\caption{In the shuffled model, the inputs are first locally randomized, yielding a number of messages that are sent to the shuffler. The shuffler then randomly permutes all incoming messages before passing them to the analyzer. This figure is reproduced from~\cite{anon-power}.}
\label{fig:esa}
\end{figure}

\begin{figure}[h]
    \centering
    \includegraphics[width=0.5\textwidth]{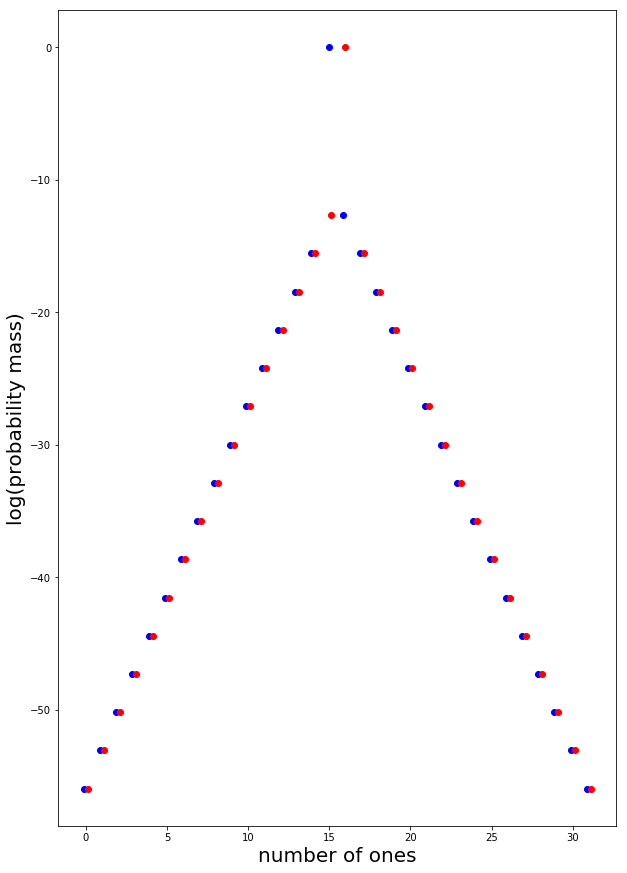}
    \caption{An illustration of the probability mass functions of the number of ones output by the randomizer (Algorithm~\ref{alg:randomizer}) for parameter $d = 31, s = 0.5, p = 0.01$. The $x$-axis corresponds to the number of ones and the $y$-axis corresponds to the base-$2$ logarithm of the probability. The red points and the blue points correspond to when the input is one and zero respectively.
    % \pasin{ In case you'd like to modify the plot, the code for generating the plot is in \href{https://colab.corp.google.com/drive/1oZtFrkUElOYbbpQkHUzVa7XnkFeF7ckS?usp=sharing}{this colab}.}
    }
    \label{fig:protocol_dist}
\end{figure}

\section{Pure Binary Summation Protocol via Shuffling}
\label{sec:bit-sum-protocol}
% \noah{Things to check:
%   \begin{itemize}
 %  \item Does using uniform noise on $[0,d]$ make things simpler? ends up being almost equivalent (up to $\log 1/\ep$ factor)...
%   \item Can we improve $1/\ep^2$ to $1/\ep$ error? Can get $1/\ep^{3/2}$, and not much better, I think.
%   \item Ensure all choices of $\ell_1, \ell_2$ are taken care of (at least whatever is needed). Done.
%   \item Decide whether to include the heuristic argument (in my email) about why we need $d \geq 1/\ep$ and $p \geq 1/\ep$. 
    % \item Decide if we should cite \cite{pike2012steins,kotz2001laplace}.
%   \end{itemize}}
% \badih{An informal version of the following theorem is stated in the introduction.}
In this section we prove Theorem \ref{th:pure_bin_agg_protocol_informal}, restated formally below.
\begin{theorem}\label{th:pure_bin_agg_protocol}
	For every sufficiently large $n$ and $O(1) \geq \eps > 1/n^{2/3}$, there is an $\eps$-$\shuffledDP$ protocol for summation for inputs $x_1, \ldots, x_n\in\{0,1\}$ where each user sends $O\left(\frac{\log n}{\ep}\right)$ one-bit messages to the analyzer and has expected error at most $O\left( \frac{\sqrt{ \log 1/\ep}}{\ep^{3/2}}\right)$. 
\end{theorem}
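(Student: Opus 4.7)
The plan is to instantiate Algorithms~\ref{alg:randomizer} and~\ref{alg:analyzer} with the parameters $p, d, s$ supplied by Lemma~\ref{lem:main-ratio} and to split the proof into a privacy step and an accuracy step. A decisive simplification at the outset: since every message is a single bit, the multiset output by the shuffler is completely determined by the total count $T := \sum_{i=1}^n T_i$, where $T_i \in \{0, 1, \ldots, d\}$ is the number of ones sent by user $i$. Hence $\eps$-$\shuffledDP$ is equivalent to showing that the law of $T$ is $\eps$-DP as a function of $(x_1,\ldots,x_n) \in \{0,1\}^n$.

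For the privacy step, by the definition of DP it suffices to fix two input vectors differing in one coordinate (say the $n$-th) and prove the pointwise bound $e^{-\eps} \leq \Pr[T = k \mid x_n=0]/\Pr[T=k \mid x_n=1] \leq e^{\eps}$ for every integer $k \in \{0,\ldots,dn\}$. Writing $T = T_{<n} + T_n$ with the two summands independent and decomposing the law of $T_n$ via its defining mixture, both $\Pr[T = k \mid x_n = 0]$ and $\Pr[T = k \mid x_n = 1]$ equal a sum of a common noise-branch term $p \cdot \Pr[T_{<n} + Z = k]$, with $Z \sim \DLap_d(d/2, s)$, and an input-dependent point-mass term $(1-p) \cdot \Pr[T_{<n} = k - (d \mp 1)/2]$. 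The desired ratio bound therefore reduces to showing that the difference $\lvert \Pr[T_{<n} = k - (d-1)/2] - \Pr[T_{<n} = k - (d+1)/2]\rvert$ is small relative to the common noise-branch contribution. I expect Lemma~\ref{lem:pia-lb} to supply the necessary uniform lower bound on $\Pr[T_{<n} + Z = k]$ over the admissible range of $k$---made possible because every user independently contributes $\DLap$ noise spread over $\{0,\ldots,d\}$---and Lemma~\ref{lem:tail-bound} to control the adjacent-integer difference of the law of $T_{<n}$ via tail estimates on convolutions of truncated discrete Laplace variables. Lemma~\ref{lem:main-ratio} then packages these two estimates into the required $e^{\pm \eps}$ bound.

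For the accuracy step, a direct computation gives $\E[T_i \mid x_i] = d/2 + (1-p)(x_i - 1/2)$ and, via the law of total variance, $\Var(T_i) = O(p/s^2)$, since the noise branch has $\Var(\DLap_d(d/2, s)) = O(1/s^2)$ while the input-dependent branch is a point mass. Thus $\E[\mathrm{output}] = \tfrac{n}{2} + (1-p)(S - \tfrac{n}{2})$ where $S := \sum_i x_i$, contributing a bias of at most $pn/2$, and the output has standard deviation $O(\sqrt{np}/s)$. With the parameters supplied by Lemma~\ref{lem:main-ratio}---schematically $s = \Theta(\eps)$, $d = \Theta(\log n/\eps)$, and $p$ chosen so that $pn$ and $\sqrt{np}/s$ are both at most $O(\sqrt{\log(1/\eps)}/\eps^{3/2})$---the expected absolute error is $O(\sqrt{\log(1/\eps)}/\eps^{3/2})$ and the per-user communication is $d = O(\log n/\eps)$ one-bit messages, as required.

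The principal obstacle is the pointwise density-ratio bound itself. The input-dependent branch alone consists of two distinct point masses with an \emph{infinite} density ratio, so the $\eps$-DP guarantee must come entirely from the smoothing effect of convolving with the other $n-1$ users' outputs, almost all of whose mass is supposed to lie in the heavy $\DLap$ branch. Making this intuition quantitative with a bound tight up to the required constants in the exponent---in particular, not losing extra $\log n$ factors in either the privacy parameter or the error---is exactly what forces the matched two-sided tail estimates of Lemmas~\ref{lem:tail-bound} and~\ref{lem:pia-lb}, and is the technical heart of the proof.
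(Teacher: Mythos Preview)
Your overall architecture is right, and the accuracy calculation is essentially the paper's (modulo a harmless inversion of the $s$ convention: in the paper $s=\Theta(1/\eps)$ and $\Var[\DLap_d(d/2,s)]\le 2s^2$, so the noise standard deviation is $s\sqrt{pn}$, which numerically matches your $\sqrt{np}/s$ with your $s=\Theta(\eps)$). But there is a real gap in the privacy step.

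Lemma~\ref{lem:main-ratio} is stated only for the specific neighboring pair $(0,\dots,0,0)$ versus $(0,\dots,0,1)$: all of $Z_1,\dots,Z_{n-1}$ are drawn from $\cR_0$. Your proposal fixes an \emph{arbitrary} pair of neighbors and writes $T_{<n}$ for the sum of the first $n-1$ users' outputs, but then invokes Lemma~\ref{lem:main-ratio} as if it applied to this general $T_{<n}$. It does not, and the reduction is not automatic. The paper closes this gap in two moves: by symmetry one may assume $x_n=0$ and that at least $n_0:=\lceil(n-1)/2\rceil$ of the remaining inputs are $0$; then by~\cite[Lemma~A.2]{BalleBGN19} one may condition out the users with input $1$, so it suffices to verify the ratio for the all-zero block of size $n_0+1$. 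This reduction is why the paper instantiates the parameters $p,d,s$ with $n$ replaced by $\lceil(n+1)/2\rceil$, a detail your plan omits.

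A second, related point: your informal picture has ``almost all'' of the other users' mass in the $\DLap$ branch, but the opposite is true. The mixing weight is $p=O_\eps\!\big(\tfrac{\log(1/\eps)}{n\eps}\big)$, so in expectation only $\Theta(\log(1/\eps)/\eps)$ of the $n$ users take the noise branch; the rest sit on a point mass. The privacy analysis (inside Lemma~\ref{lem:main-ratio}) therefore does not compare a point-mass shift against a single heavy noise term; it expands the law of $T$ as a binomial mixture over the number $m$ of noisy users and compares the resulting $P_{m,k}$ terms. Lemmas~\ref{lem:tail-bound} and~\ref{lem:pia-lb} operate on these pure-$\DLap$ convolutions $P_{m,k}$, not directly on $T_{<n}$ as your sketch suggests. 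This does not change the high-level plan, but it does change where the smoothing actually comes from and why the parameter regime is what it is.
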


We remark that the assumption that $\varepsilon > \frac{1}{n^{2/3}}$ is made without loss of generality, because, for $\varepsilon \leq \frac{1}{{n}^{2/3}}$, there is a trivial algorithm that achieves square error of $O(1/\eps^{3/2})$: the analyzer just always outputs 0.

Throughout this section we assume that for some absolute constant $C$, $\ep \leq C$, and thus in particular $e^{\ep}$ can be bounded above by an absolute constant. (The constant $C$ can be arbitrary.) It is well-known that any $\ep$-$\centralDP$ protocol for summation has error $\Omega(1/\ep)$ \cite{Vadhan-tutorial}. % \noah{find citation, e.g., vadhan or dwork/roth}. 
Thus the error in Theorem \ref{th:pure_bin_agg_protocol} is suboptimal by a factor of at most $\tilde{O}(1/\sqrt{\ep})$. % \noah{As we discuss further in Section ??, we believe the analysis of our protocol is tight, and thus to recover the optimal $O(1/\ep)$ error in the central model, one might have to use a different protocol.}

The remainder of the section is organized as follows. In Section \ref{sec:protocol}, we present the protocol used to prove Theorem \ref{th:pure_bin_agg_protocol}. In Section \ref{sec:protocol_analysis}, we prove the accuracy and privacy guarantees of Theorem \ref{th:pure_bin_agg_protocol}, and in Section \ref{sec:tail-bound} we prove a technical lemma needed in the privacy analysis.

\subsection{The Protocol}
\label{sec:protocol}

To described the protocol, let us recall the discrete Laplace (aka symmetric Geometric) distribution. For notational convenience, we identify the discrete Laplace distribution by two parameters: the mean $\mu$ and the ``inverse scaling exponent'' $s > 1$. The discrete Laplace distribution associated with these parameters, denoted by $\DLap(\mu, s)$, has the following probability mass function: for $z \in \BZ$,
\begin{align*}
\Pr_{Z \sim \DLap(\mu, s)}[Z = z] = \frac{1}{C(\mu, s)} \cdot e^{- |z - \mu| / s},
\end{align*}
where $C(\mu, s) = \sum_{z = -\infty}^{\infty} e^{-|z - \mu| / s}$ is the normalization factor.

We will use the truncated version of the discrete Laplace distribution, for which we condition the support to be on $[\mu - w/2, \mu + w/2]$ where $w \geq 1$ is the ``width'' of the support. We denote such a distribution by $\DLap_w(\mu, s)$. In other words, its probability mass function satisfies
\begin{align}
\Pr_{Z \sim \DLap_w(\mu, s)}[Z = z] = 
\begin{cases}
\frac{1}{C_w(\mu, s)} \cdot e^{- |z - \mu| / s} & \text{if } \mu - w/2 \leq z \leq \mu + w/2 \text{ and } z \in \BZ\\
0 & \text{otherwise.}
\end{cases}\label{eq:truncated_discrete_Laplace}
\end{align}
Once again $C_w(\mu, s) = \sum_{z \in [\mu - w/2, \mu + w/2] \cap \Z} e^{- |z - \mu| / s}$ is simply the normalization factor.

Our randomizer and analyzer are presented in Algorithm~\ref{alg:randomizer} and Algorithm~\ref{alg:analyzer}, respectively. 
The protocol has 3 parameters: the number of messages $d$, the  ``inverse scaling exponent'' $s$, and the ``noise  probability'' $p$. We always assume that $d$ is a positive odd integer\footnote{We only assume that $d$ is odd for convenience, so that $\left(\frac{d - 1}{2}\right)$ and $\left(\frac{d + 1}{2}\right)$ are integers. Using an even $d$ and replacing these two quantities with $d/2 - 1$ and $d/2 + 1$ also works, provided that the proofs are adjusted appropriately.}.
These parameters will be chosen later (in Lemma~\ref{lem:main-ratio}).

\subsection{Privacy Analysis}
\label{sec:protocol_analysis}

For $b \in \{0, 1\}$, we write $\cR_b$ to denote the distribution\footnote{This is the distribution of $\bX^b$ defined in Section~\ref{sec:prelim}.} on the number of ones output by the randomizer on input $b$. (This distribution depends on $d, s, p$ but we do not include them in the notation to avoid being cumbersome.) Notice that we can decompose $\cR_b$ as a mixture $p \cdot \DLap_d(d/2, s) + (1 - p) \cdot \bone(\frac{d - 1}{2} + b)$, where we use $\bone(\vartheta)$ to denote the distribution that is $\vartheta$ with probability 1.

To prove the privacy guarantee of Theorem \ref{th:pure_bin_agg_protocol}, we first note that we may focus only on the neighboring datasets $(0, \ldots, 0, 0)$ and $(0, \ldots, 0, 1)$; this follows since we may assume (due to symmetry) that more than half of the bits are zero and we can then condition out the results from the 1 bits that they share. (See the proof of Theorem~\ref{th:pure_bin_agg_protocol} for a formalization of this.) For these datasets, Lemma \ref{lem:main-ratio} below bounds the ratio of the probabilities of ending up with a particular union of outputs from these two datasets.
% Consider any two neighboring data sets\footnote{It is without loss of generality that we can only consider the case where one data set is all zero, since for any data set, we may assume (due to symmetry) that more than half of the bits are zero and we can then condition out the results from the 1 bits that they share. (See a formalization in, e.g., Lemma A.2 of~\cite{BalleBGN19}.) This results in  $(0, \cdots, 0, 0)$-vs-$(0, 0, \cdots, 0, 1)$ case, while reducing $n$ by a factor of at most 2.} $(0, \cdots, 0, 0)$ and $(0, 0, \cdots, 0, 1)$. \todo{Have to be more formal about how/why we can only consider the sequence $(0, \dots, 0, 0)$ and $(0, \dots, 0, 1)$.} Showing pure-DP property of the protocol is equivalent to bounding the ratio of probabilities of ending up with a particular union of outputs from these two data sets. This is stated more formally in the following lemma.

% Introducing macros for expressions that need to be repeated in the real summation protocol:
\newcommand{\scalingParameter}[1]{\frac{10}{{#1}}}
\newcommand{\noiseProbabilityBig}[2]{\frac{100\,e^{100{#1}}}{{#2}(1-e^{-0.1{#1}})^2}}
\newcommand{\noiseProbabilitySmall}[2]{\frac{100\,e^{100{#1}}\log(1/(1-e^{-0.1{#1}}))}{{#2}(1-e^{-0.1{#1}})}}
\newcommand{\noiseProbabilityLB}[2]{\frac{100\,e^{100{#1}}}{{#2}(1-e^{-0.1{#1}})}}

%\newcommand{\numberOfMessages}[1]{\left(\frac{1000\, e^{100{#1}}}{(1 - e^{-0.1{#1}})^2}\right) \cdot \log n + 3}
% New version of expressions ensuring that the number of messages is an odd integer
\newcommand{\numberOfMessagesSmall}[2]{4 \left\lceil\frac{1000\, e^{100{#1}}}{(1 - e^{-0.1{#1}})} \cdot \log \left(\frac{{#2}}{1-e^{-0.1{#1}}} \right)\right\rceil + 3}
\newcommand{\numberOfMessagesBig}[2]{4 \left\lceil\frac{1000\, e^{100{#1}}}{(1 - e^{-0.1{#1}})^2} \cdot \log \left(\frac{{#2}}{1-e^{-0.1{#1}}} \right)\right\rceil + 3}

%\pasin{I'm a bit confused with the parameters below. In the first case, isn't this even \emph{worse} than before? In the previous analysis, we had $p = O(1/\eps^2)$ but now it's $p = O(1/\eps^3)$. Can you please point out what happens that requires this increase?}\noah{It's just because we make $d$ roughly $\log(n)/\ep$ as opposed to $\log(n) / \ep^2$ -- I can change it back if you think that's better (but ultimately it's the difference of at an most just $\sqrt{\log \log n}$ factor in the final error bound).}

\begin{lemma} \label{lem:main-ratio}
There is a sufficiently small constant $c_0 \in (0,1)$ so that the following holds. For any sufficiently large $n \in \N$ and any $c_0 \geq \varepsilon > \frac{1}{{n}^{2/3}}$, let $s = \scalingParameter{\eps}$, $p = \noiseProbabilitySmall{\ep}{n}$, 
  and $d = \numberOfMessagesSmall{\eps}{n}$. 
  Then, we have 
\begin{align} \label{eq:dp-ratio}
\frac{\Pr_{Z_1, \dots, Z_n \sim \cR_0}[Z_1 + \cdots + Z_n = t]}{\Pr_{Z_1, \dots, Z_{n - 1} \sim \cR_0, Z_n \sim \cR_1}[Z_1 + \cdots + Z_n = t]} \in [e^{-\varepsilon}, e^{\varepsilon}],
\end{align}
for all $t \in \{0, \dots, dn\}$.
\end{lemma}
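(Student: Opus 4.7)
The plan is to reduce the ratio bound to a single pointwise ``smoothness versus noise'' inequality on the distribution $g := \cR_0^{*(n-1)}$ of the sum of the first $n-1$ users, exploiting that $\cR_0$ and $\cR_1$ differ only in the location of their deterministic component (at $(d-1)/2$ versus $(d+1)/2$). Conditioning on whether user $n$'s Bernoulli coin $a$ in Algorithm~\ref{alg:randomizer} equals $0$ or $1$, the numerator and denominator of \eqref{eq:dp-ratio} decompose as
\begin{align*}
f_0(t) &= (1-p)\cdot g(t-(d-1)/2) + p\cdot h(t),\\
f_1(t) &= (1-p)\cdot g(t-(d+1)/2) + p\cdot h(t),
\end{align*}
where $h := g * \DLap_d(d/2, s)$. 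The two expressions share the common ``noise floor'' $p \cdot h(t)$. Setting $\tau := t-(d-1)/2$, the ratio becomes $(u+w)/(v+w)$ with $u = (1-p)\,g(\tau)$, $v = (1-p)\,g(\tau-1)$, $w = p\cdot h(t)$, and an elementary manipulation shows that it suffices to prove the single pointwise inequality
\begin{equation*}
(1-p)\cdot |g(\tau) - g(\tau-1)| \;\le\; (e^\eps - 1)\cdot p\cdot h(t)
\end{equation*}
for every $t \in \{0,\ldots,dn\}$.

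To establish this inequality, I would condition on $K \sim \mathrm{Bin}(n-1, p)$, the number of users in $[n-1]$ whose randomizer picks the noise branch. Then $g = \sum_k \Pr[K=k]\,g_k$, where $g_k$ is a shifted $k$-fold convolution of truncated discrete Laplaces (with $g_0$ a delta spike). For $k \ge 1$, the interior of each $g_k$ inherits the smoothness of a single truncated Laplace (since the ratio $\mu(\ell)/\mu(\ell-1) \in [e^{-1/s}, e^{1/s}]$ for the Laplace is preserved under convolution away from boundaries), giving $|g_k(\tau) - g_k(\tau-1)| \le (e^{1/s}-1)\cdot g_k(\tau) \le (\eps/10)\cdot g_k(\tau)$. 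The remaining ``non-smooth'' contributions come from (i) the pure delta at $K=0$ and the nearly-singular terms for small $K$, and (ii) the boundary of each $g_k$'s support; both are controlled by combining a Chernoff bound on $K$ with the tail bound Lemma~\ref{lem:tail-bound} on convolutions of truncated Laplaces. Complementarily, Lemma~\ref{lem:pia-lb} supplies a uniform lower bound on $h(t)$ over the range $t \in \{0,\ldots, dn\}$: since the truncated Laplace places mass $\Theta(\eps)$ on every integer within $\Theta(1/\eps)$ of its center, convolving with it both smooths $g$ and ensures that $p\cdot h(t)$ dominates the non-smooth part of $|g(\tau)-g(\tau-1)|$ arising from the small-$K$ spikes. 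Combining these two ingredients and plugging in $s = 10/\eps$, $p = \noiseProbabilitySmall{\eps}{n}$, and $d = \numberOfMessagesSmall{\eps}{n}$ then yields the desired inequality.

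The main obstacle is the quantitative balancing of the non-smooth ``spike'' contributions from small $K$ against the noise floor $p\cdot h(t)$. The parameter $p$ is chosen so that the expected number of noise users $(n-1)p = \Theta_\eps(\log(1/\eps))$ makes the Binomial mass on small $K$ exponentially small, while $d$ is taken logarithmic in $n$ so that Lemma~\ref{lem:tail-bound} forces the Laplace tail to be at most $1/\mathrm{poly}(n)$---tight enough that the pointwise bound holds uniformly over the entire sum range $t\in\{0,\ldots,dn\}$. Establishing this balance precisely, across both the ``bulk'' and ``tail'' regimes of $g$ and for every admissible $t$, is exactly why the two dedicated technical lemmas on tail bounds and lower bounds of Laplace convolutions are invoked as separate ingredients.
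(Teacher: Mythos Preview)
Your decomposition into $f_0(t)=(1-p)g(\tau)+p\,h(t)$ and $f_1(t)=(1-p)g(\tau-1)+p\,h(t)$ and the reduction to a pointwise inequality are exactly how the paper starts (its equations \eqref{eq:all0}--\eqref{eq:all0-but1}), and conditioning on the binomial count $K$ is also what the paper does. But the claim that ``the ratio $\mu(\ell)/\mu(\ell-1)\in[e^{-1/s},e^{1/s}]$ is preserved under convolution away from boundaries, giving $|g_k(\tau)-g_k(\tau-1)|\le(e^{1/s}-1)g_k(\tau)$'' is the genuine gap. That ratio is \emph{not} preserved: writing $g_k=\mu*g_{k-1}$ and shifting the summation variable gives
\[
g_k(\tau)\le e^{1/s}\,g_k(\tau-1)+\mu(0)\,g_{k-1}(\tau),
\]
and the boundary correction $\mu(0)\,g_{k-1}(\tau)$ does not vanish in the interior (indeed $g_k(1)/g_k(0)=k\,e^{1/s}$). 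Handling this residual is exactly the content of Lemma~\ref{lem:tail-bound}, which is not a tail bound on a single convolution but the pointwise inequality
\[
P_{m,k}\le e^{0.2\eps}P_{m,k-1}+e^{-\eps}p(1-e^{-\eps/2})\Bigl(P_{m+1,k+\ell_1}+\tfrac{n-m-1}{m+1}P_{m+1,k+\ell_2}\Bigr),
\]
proved by splitting tuples $(a_1,\dots,a_m)$ according to how many coordinates are zero. The ``few zeros'' part gives the $e^{0.2\eps}P_{m,k-1}$ smoothness term; the ``many zeros'' part is absorbed into the $P_{m+1,\cdot}$ noise term.

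Consequently the paper does \emph{not} separate a global smooth part of $g$ from a global noise floor $p\,h(t)$ as you propose. Instead it expands $e^{\eps}f_1-f_0$ term-by-term in $K=i$ and pairs the smoothness defect at level $i$ with the noise mass at level $i+1$ (the algebraic ``borrowing'' step \eqref{eq:final-borrow-right-first}), producing quantities $\Delta_i$ that Lemma~\ref{lem:tail-bound} makes nonnegative once $i\ge i_0\asymp\log(1/\eps)/\eps$. Lemma~\ref{lem:pia-lb} is then used not to lower-bound $h(t)$ uniformly, but to compare $P_{i+a,\cdot}$ with $P_{i,\cdot}$ so that the surplus $(e^{0.3\eps}-1)P_{i,\cdot}$ from large $i$ can cover the possibly negative $\Delta_i$ at small $i$; a Chernoff bound on $K$ shows this borrowing succeeds. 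Your sufficient condition $(1-p)|g(\tau)-g(\tau-1)|\le(e^\eps-1)p\,h(t)$ is valid but strictly stronger than what is needed, and proving it would still require the zero-count decomposition that underlies Lemma~\ref{lem:tail-bound}; the ``smoothness is inherited'' shortcut does not go through.
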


This means that, for the above selection of parameters, the protocol is $\varepsilon$-DP. Using Lemma \ref{lem:main-ratio}, we prove Theorem \ref{th:pure_bin_agg_protocol}.
\begin{proof}[Proof of Theorem \ref{th:pure_bin_agg_protocol}]
We may assume without loss of generality that $\eps \leq c_0$, as otherwise we may set $\eps$ to $\min\{\eps, c_0\}$ instead.

  We use the local randomizer {\sc BinaryRandomizer$_{\eps, n}$} of Algorithm \ref{alg:randomizer} and the analyzer {\sc BinaryAnalyzer$_{\eps, n}$} of Algorithm \ref{alg:analyzer}, with the parameters $s,d,p$ given by the expressions in Lemma \ref{lem:main-ratio}, except with $n$ replaced by $\lceil (n+1)/2 \rceil$. Explicitly, we have $s = \scalingParameter{\ep}$, $p = \noiseProbabilitySmall{\ep}{\lceil (n+1)/2 \rceil}$ and $d = \numberOfMessagesSmall{\eps}{\lceil (n+1)/2 \rceil}$.
  We prove the accuracy guarantee first, which is a simple consequence of the choices of $p,d$ made in Lemma \ref{lem:main-ratio}, followed by the privacy guarantee, which uses Lemma \ref{lem:main-ratio}.

%\pasin{Throughout this proof, we use $Y_i$ to denote the number of ones output by $i$th user.}

  \paragraph{Proof of accuracy.}
  Fix a dataset $X = (x_1, \ldots, x_n) \in \{0,1\}^n$. Let $Y \in \R$ be the count released by the analyzer. Moreover, for $1 \leq i \leq n$, let $Z_1, \ldots, Z_n$ be i.i.d. random variables distributed according to $\nu = \DLap_d(d/2, s)$. It is easy to check that $\Var[Z_i] \leq 2s^2$. % For $1 \leq i \leq n$, let $Y_i \sim \MR_{x_i}$ (so that $Y_i \in \{0, 1, \ldots, d \}$) be the random number of bits released by user $i$. The count released by the analyzer is given by $Y:= \left(\sum_{i=1}^n Y_i - \frac{d-1}{2} \right)$.
  Moreover, let $M \in \{1, 2, \ldots, n\}$ be the number of users for whom the Bernoulli random variable $a$ is equal to 1. %(so for those users $i$, $Y_i \sim \nu$ conditional on $a$).
  In particular, $M \sim \Binom(n,p)$. % Since $Y_1, \ldots, Y_n$ are independent,
 The expected absolute error is given by
  \begin{align}
    \E\left[ \left|Y - \sum_{i=1}^n x_i \right| \right]
&\leq % \Pr[M > 2pn] \cdot n  +
      \sum_{m=0}^n\Pr[M = m] \cdot \left( \frac{m}{2} + \E\left[\left|Z_1 + \cdots + Z_m - \frac{md}{2} \right| \right]\right) \nonumber\\
    \text{(by Jensen's inequality) }     & \leq \E[M/2] + \sum_{m=0}^n \Pr[M=m] \cdot \sqrt{\E \left[ \left(Z_1 + \cdots + Z_m - \frac{md}{2} \right)^2 \right]}\nonumber\\
    & \leq pn/2 + \sum_{m=0}^n \Pr[M=m] \cdot \sqrt{ \E \left[ \sum_{j=1}^m (Z_i - d/2)^2 \right]} \nonumber\\
  \text{(since } Z_1, \ldots, Z_n \text{ are iid) }  & = pn/2 + \sum_{m=0}^n \Pr[M=m] \cdot \sqrt{m} \cdot \sqrt{\Var[Z_1]} \nonumber\\
   & \leq pn/2 + \sum_{m=0}^n \Pr[M=m] \cdot \sqrt{m} \cdot \sqrt{2s^2}\nonumber\\
    & = pn/2 + \sqrt{2} s \cdot \E[\sqrt{M}] \nonumber\\
    \label{eq:abs-error}
  & \leq pn/2 + \sqrt{2} s \cdot \sqrt{pn}.
  \end{align}
  Since $p = \noiseProbabilitySmall{\ep}{\lceil (n+1)/2 \rceil}$, we have $pn/2 + \sqrt{2pn} \cdot s = O\left(\frac{\sqrt{\log 1/\ep}}{\ep^{3/2}} \right)$. Combined with (\ref{eq:abs-error}), this gives us the desired upper bound on the expected error of the protocol.

  \paragraph{Proof of privacy.} Let $X = (x_1, \ldots, x_{n-1}, x_n) \in \{0,1\}^n$ and $X' = (x_1, \ldots, x_{n-1}, x_n') \in \{0,1\}^n$ be two neighboring datasets. By symmetry, without loss of generality, we may assume that $x_n = 0$ and at least $n_0 := \lceil n-1\rceil/2$ of the values of $x_1, \ldots, x_{n-1}$ are also 0. By permuting the users, we may also assume without loss of generality that $x_1 = x_2 = \cdots = x_{n_0} = 0$. For $1 \leq i \leq n$, let $Y_i \in [0,d]$ denote the (random) number of 1s output by user $i$ when their input is $x_i$. Also let $Y_n' \in [0,1]$ denote the (random) number of 1's output by user $n$ when its input is $x_n'$.  By \cite[Lemma A.2]{BalleBGN19}, to show that for all $t \in \BN$,
  $$
  \frac{\Pr[Y_1 + \cdots + Y_{n-1} + Y_n = t]}{\Pr[Y_1 + \cdots + Y_{n-1} + Y_n' = t]} \in [e^{-\ep}, e^{\ep}],
  $$
  it suffices to show that for all $t_0 \in \BN$,
  \begin{equation}
  \label{eq:t0}
  \frac{\Pr[Y_1 + \cdots +  Y_{n_0} + Y_n = t_0]}{\Pr[Y_1 + \cdots + Y_{n_0} + Y_n' = t_0]} \in [e^{-\ep}, e^{\ep}].
  \end{equation}
  Now the validity of (\ref{eq:t0}) is an immediate consequence of Lemma \ref{lem:main-ratio} with the parameter $n$ of Lemma \ref{lem:main-ratio} equal to $n_0 + 1$.
\end{proof}

% Notice that, for these parameters and when $\eps \leq O(1)$, we have that the squared error is $n \cdot p \cdot O(\frac{1}{\varepsilon^2}) = O\left(\frac{1}{\varepsilon^4}\right)$ and $d \leq O\left(\frac{\log n}{\eps^2}\right)$ as claimed. \todo{Have to formalize this error analysis a bit more. A tentative theorem statement is in Theorem~\ref{th:pure_bin_agg_protocol}, to be moved.} 

% \noah{Next we give an explanation as to why $d$ has to grow as $1/\ep$ in Theorem ...}
From now on, we will use $\nu$ and $\omega_b$ as abbreviations for $\DLap_d(d/2, s)$ and $\bone(\frac{d - 1}{2} + b)$ respectively, where $d, s$ are defined as in Lemma~\ref{lem:main-ratio}. 

Let us denote by $P_{m, k}$ the probability that $m$ independent random variables from the noise distribution $\nu$ sums up to $k$; more formally,
\begin{align*}
P_{m, k} := \Pr_{Z_1, \dots, Z_m \sim \nu}[Z_1 + \cdots + Z_m = k].
\end{align*}
For convenience, we define $P_{0, 0} = 1$ and $P_{0, k} = 0$ for all $k \ne 0$.

As we will see in the proof of Lemma~\ref{lem:main-ratio} below, expansions of the numerator and denominator of the left hand side of~\eqref{eq:dp-ratio} result in similar terms involving $P_{m, k}$, except occasionally with (i) $k$ differing by one or (ii) $m$ differing by 1 and $k$ differing by $\left(\frac{d - 1}{2}\right)$ or $\left(\frac{d - 3}{2}\right)$. Hence, to bound the ratio between the two, we have to find some relation between $P_{m, k}, P_{m, k - 1}, P_{m + 1, k + \left(\frac{d - 1}{2}\right)}$, and $P_{m + 1, k + \left(\frac{d - 3}{2}\right)}$. The exact inequality we will use here is stated below and proved in Section~\ref{sec:tail-bound}.

\newcommand{\mLowerBound}[1]{\frac{10\log(1/(1-e^{-0.1{#1}}))}{1-e^{-0.1{#1}}}}

\begin{lemma} \label{lem:tail-bound}
  For any sufficiently large $n \in \N$, let $\eps, d$ and $s$ be as in Lemma~\ref{lem:main-ratio}. Then the following hold:
  For any integers $\mLowerBound{\ep} \leq m \leq n - 1$, and $\ell_1, \ell_2 \in \left\{\frac{d - 1}{2}, \frac{d - 3}{2}\right\}$, if $p \geq \noiseProbabilityLB{\ep}{n}$, then we have
\begin{align}
  \label{eq:tail-bound-lemma}
e^{-\ep} p(1 - e^{-\eps/2}) \cdot \left(P_{m+1, k+\ell_1} + \frac{(n - m - 1)}{m + 1} \cdot P_{m + 1, k+\ell_2}\right) + e^{0.2\ep} \cdot P_{m, k - 1} \geq P_{m, k}.
\end{align}
\end{lemma}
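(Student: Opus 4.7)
My plan is a case analysis based on whether $k$ lies in the ``slow-growth'' or ``rapid-growth'' regime of $P_{m,\cdot}$. Since $\nu = \DLap_d(d/2, s)$ is log-concave and convolutions preserve log-concavity, $P_m$ is log-concave, so the ratio $P_{m,k}/P_{m,k-1}$ is non-increasing in $k$. Moreover, for $z \in \{1,\dots,d\}$ one has $\nu(z)/\nu(z-1) \in \{e^{1/s}, e^{-1/s}\}$, and since $1/s = \epsilon/10 < 0.2\epsilon$, these pointwise ratios are strictly smaller than $e^{0.2\epsilon}$. In the slow-growth regime $P_{m,k} \leq e^{0.2\epsilon} P_{m,k-1}$, the inequality is immediate from the $e^{0.2\epsilon} P_{m,k-1}$ term on the right-hand side; the other two terms are nonnegative. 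This case covers most of the support of $P_m$ (the bulk and both deep tails).

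For the remaining rapid-growth regime $P_{m,k} > e^{0.2\epsilon} P_{m,k-1}$, log-concavity forces $k$ to lie near the left boundary of the support, where combinatorial multiplicity amplifies the local ratio (for instance, $P_{m,1}/P_{m,0} = m e^{1/s}$, which is huge under our lower bound on $m$). Here I would expand the right-hand side terms via the convolution identity
\[
P_{m+1, k+\ell} \;=\; \sum_{z=0}^{\min(d,\,k+\ell)} \nu(z) P_{m, k+\ell-z}
\]
and lower-bound each $P_{m,j}$ by the configurational estimate $\binom{m}{j} \nu(0)^{m-j} \nu(1)^j$, corresponding to configurations in which exactly $j$ of the $m$ variables equal $1$ and the rest equal $0$. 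Substituting this estimate and changing variables to $j = k+\ell-z$, the factors $\nu(1)/\nu(0) = e^{1/s}$ telescope against the $e^{-(j-k)/s}$ decay of $\nu(\ell+k-j)$ below the mode, yielding a lower bound of the form $P_{m+1, k+\ell} \geq q_1 \nu(0)^m \sum_j \binom{m}{j}$ with $q_1 = e^{-1/(2s)}/C_d(d/2,s)$, where the index $j$ runs over a contiguous window of width $\Theta(d)$. When $d \gtrsim m$ this window covers $\binom{m}{\lfloor m/2\rfloor} \asymp 2^m/\sqrt m$, which dominates any naive configurational upper bound on $P_{m,k}$ by a super-polynomial margin in $m$.

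The final step is arithmetic: combining $e^{-\epsilon} p(1-e^{-\epsilon/2}) \geq \Omega(p\epsilon)$ with the hypotheses $p \geq \noiseProbabilityLB{\epsilon}{n}$ (so $pn \geq \Omega(e^{100\epsilon}/(1-e^{-0.1\epsilon}))$) and $m \geq \mLowerBound{\epsilon}$, I would verify $e^{-\epsilon} p(1-e^{-\epsilon/2}) \cdot \tfrac{n-m-1}{m+1} \cdot P_{m+1, k+\ell_2}/P_{m,k} \geq 1$, with $P_{m+1, k+\ell_1}$ playing the role of a backup at the boundary case $m=n-1$ where $(n-m-1)/(m+1) = 0$. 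The principal obstacle is to make the lower bound on $P_{m+1, k+\ell}/P_{m,k}$ rigorous uniformly across all $k$ in the rapid-growth regime and across $m \in [\mLowerBound{\epsilon}, n-1]$. Near the internal Case~2 cutoff (a $k^*$ of order roughly $m/2 \cdot (1 - \epsilon/20)$ beyond which the ratio drops below $e^{0.2\epsilon}$), both $P_{m,k}$ and $P_{m+1, k+\ell}$ acquire comparable binomial multiplicity factors $\binom{m}{k}$ that partially cancel, and I expect to need a more refined multinomial argument—carefully using the $\log(1/(1-e^{-0.1\epsilon}))$ factor in the lower bound on $m$—to close the estimate in that sub-regime.
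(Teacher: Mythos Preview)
Your Case~1 is correct and immediate. The real difficulty is Case~2, and here both your description of the regime and your proposed bounds are inadequate. The rapid-growth set $\{k : P_{m,k}/P_{m,k-1} > e^{0.2\eps}\}$ is not confined ``near the left boundary'': by a Cram\'er-type computation (the slope of $\log P_{m,\cdot}$ at $k$ is approximately the tilt $t^*$ solving $\E_\nu[Z e^{t^*Z}]/\E_\nu[e^{t^*Z}] = k/m$), the threshold $k^*$ sits at roughly $m(d/2 - \Theta(s^2\eps)) = md/2 - \Theta(m/\eps)$, i.e., close to the mode, not close to $0$. Over most of this range the $0/1$ configurational bound $P_{m,j}\geq \binom{m}{j}\nu(0)^{m-j}\nu(1)^j$ is either vacuous (when $j>m$) or astronomically loose, and you never specify a matching upper bound on $P_{m,k}$. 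So the obstacle you flag at the end is not a detail but the entire content of the lemma.

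The paper avoids a case split on $k$ altogether. It decomposes the sum $P_{m,k} = \sum_{\ba\in S_{m,k}} \nu(\ba)$ according to the number of zero coordinates of $\ba$, with a threshold $i = \lceil (1-e^{-0.1\eps})m\rceil$. Sequences with fewer than $i$ zeros are handled by decreasing one nonzero coordinate by $1$: this maps into $S_{m,k-1}$, costs a factor $e^{1/s}=e^{0.1\eps}$ in $\nu$-mass, and a combinatorial overcount of at most $m/(m-i+1)\leq e^{0.1\eps}$ from the choice of coordinate, yielding the $e^{0.2\eps}P_{m,k-1}$ term uniformly in $k$. Sequences with at least $i$ zeros are handled by permuting $t=\min\{\lceil i/2\rceil,\,\Theta(\log(n/(1-e^{-0.1\eps})))\}$ zeros to the front and then comparing $\nu(0)^t$ against $P_{t+1,\ell}$ via stars-and-bars (Lemma~\ref{lem:shift-mult}): since $\binom{\ell+t}{t}\geq (\ell/t)^t$ and $\ell\approx d/2$ is much larger than $t$ by the choice of $d$, this produces a huge multiplicative gain that absorbs the small prefactor $e^{-\eps}p(1-e^{-\eps/2})$, again uniformly in $k$. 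The only case analysis is on which branch of the $\min$ defines $t$, and that depends on $m$, not on $k$. This zero-count decomposition is the idea you are missing; your configurational bounds are a special case of it (all-zero sequences) but do not generalize to the bulk of Case~2.
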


We prove Lemma \ref{lem:tail-bound} in Section \ref{sec:final-tail-bound}. 
% \pasin{Move the whole lemma below to before we prove the privact guarantee (i.e. similar to~\ref{lem:tail-bound})?}
We additionally need the following Lemma \ref{lem:pia-lb}, which can be interpreted as a sort of anti-concentration result. Recall that $P_{i,j} = \Pr[Z_1 + \cdots + Z_i = j]$, where $Z_1, \ldots, Z_i \sim \nu = \DLap_d(d/2, s)$. For any $a \in \BN$, if also $Z_1', \ldots, Z_a' \sim \nu$, then as $\E[Z_1' + \cdots + Z_a'] = da/2$ and the distribution of $Z_1' + \cdots + Z_a'$ has sufficient mass at its expectation, one should expect that $P_{i+a,j+a} = \Pr[Z_1 + \cdots + Z_i + Z_1' + \cdots + Z_a' = j + da/2]$ is not too much smaller than $P_{i,j}$. Lemma \ref{lem:pia-lb} says that in fact $\Pr[Z_1 + \cdots + Z_i + Z_1' + \cdots + Z_a' = j + da/2 - d/2]$ is not too much smaller than $P_{i,j}$.
\begin{lemma}
  \label{lem:pia-lb}
  For any $i, j, a \in \BN_0$ such that $a \leq s^2/1000$, we have
  $$
P_{i+a, j + a \left( \frac{d-1}{2} \right)} \geq  \frac{\sqrt{a}}{40s^3}  \cdot P_{i,j}.
  $$
\end{lemma}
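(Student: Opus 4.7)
The plan is to reduce the claim to a lower bound on a single convolution probability $P_{a,a(d-1)/2}$, then estimate that probability via a local-CLT/Fourier-inversion argument for the truncated discrete Laplace.

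\textbf{Step 1 (reduction).} Splitting the $i+a$ independent $\nu$-samples into the first $i$ (summing to $j'$) and the last $a$ (summing to $j + a(d-1)/2 - j'$), and summing over $j'$, I get
\[
P_{i+a,\,j+a(d-1)/2} \;=\; \sum_{j'} P_{i,j'}\, P_{a,\,j+a(d-1)/2-j'} \;\geq\; P_{i,j}\cdot P_{a,\,a(d-1)/2}.
\]
So it is enough to prove $P_{a,\,a(d-1)/2}\geq \sqrt{a}/(40\, s^3)$ for every $a\leq s^2/1000$, a statement that no longer depends on $i$ or $j$. Centering via $W_\ell := Z_\ell - d/2$ turns this into the statement $\Pr[W_1+\cdots+W_a = -a/2]\geq \sqrt{a}/(40\,s^3)$, where the $W_\ell$ are iid, symmetric about $0$, with truncated-Laplace density of scale $s$ on $\{-d/2,\ldots,d/2\}$.

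\textbf{Step 2 (local CLT).} A direct geometric-series computation gives $c_1 s^2\leq \sigma^2 := \Var(W_1)\leq 2s^2$ for an absolute constant $c_1>0$ (the lower bound uses $s=10/\eps\geq 10$ in our regime and $d\gg s$, so truncation of the discrete Laplace at $\pm d/2$ is negligible). By Fourier inversion,
\[
\Pr[W_1+\cdots+W_a = -a/2] \;=\; \frac{1}{2\pi}\int_{-\pi}^{\pi} \phi(t)^a\, e^{iat/2}\, dt, \qquad \phi(t) := \E[e^{itW_1}].
\]
I analyze the integral in two regimes. For $|t|\leq 1/\sigma$, Taylor-expand $\log\phi(t) = -\sigma^2 t^2/2 + O(s^4 t^4)$ so that $\phi(t)^a$ is close to the Gaussian kernel $e^{-a\sigma^2 t^2/2}$. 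For $1/\sigma\leq |t|\leq \pi$, I use that $\nu$ places mass $\Omega(1/s)$ on each of $\Omega(s)$ consecutive integers near the center to show $|\phi(t)|\leq 1-c_2$, so this part contributes at most $(1-c_2)^a$, which is negligible. The small-$|t|$ Gaussian contribution at $k=-a/2$ evaluates to
\[
\frac{c_3}{\sigma\sqrt{a}}\cdot \exp\!\left(-\frac{a}{8\sigma^2}\right),
\]
and since $a\leq s^2/1000$ and $\sigma^2\geq c_1 s^2$, the exponent is a universal $O(1)$, so the probability is $\Omega\!\left(1/(s\sqrt{a})\right)$. Finally, rewriting $1/(s\sqrt{a}) = \sqrt{a}/(s\cdot a) \geq \sqrt{1000}\cdot\sqrt{a}/s^3$ gives a bound that comfortably exceeds $\sqrt{a}/(40\,s^3)$.

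\textbf{Main obstacle.} The difficulty is making the local-CLT step quantitative enough to yield the explicit constant $40$ rather than merely $\Omega(1)$. Concretely this requires (i) an explicit uniform bound $|\phi(t)|\leq 1-c_2$ on the annulus $[1/\sigma,\pi]$, obtainable by directly estimating $\sum_z \nu(z)\cos(tz)$ for the truncated Laplace; and (ii) an explicit bound on the Taylor remainder in $\log\phi(t)$, which comes down to controlling $\E[W_1^4]=O(s^4)$ by another geometric-series computation. A viable alternative that avoids Fourier is a direct combinatorial estimate: count $a$-tuples $(z_1,\ldots,z_a)$ summing to $a(d-1)/2$ with each $z_\ell$ within $O(\sigma/\sqrt{a})$ of $(d-1)/2$, weighted by the product of the densities; this is elementary but notation-heavy, and I expect the Fourier route to be cleaner.
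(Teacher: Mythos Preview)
Your Step~1 reduction is exactly the paper's (this is Lemma~\ref{lem:sequence-separation}). Step~2, however, takes a genuinely different route. Rather than a local CLT, the paper lower-bounds $P_{a,a(d-1)/2}$ as follows: the Marcinkiewicz--Zygmund inequality gives $\E|Z-ad/2|\geq \Omega(\sqrt{a}\,s)$; combined with the second-moment bound $\E[(Z-ad/2)^2]\leq 2as^2$ and the observation that the contribution to $\E|Z-ad/2|$ from the event $\{|Z-ad/2|<a/2\}$ is at most $a/2$, this forces $\Pr[\,a/2\leq |Z-ad/2|<s^2\,]\geq \sqrt{a}/(20s)$. Since $\nu$ is log-concave, the sum $Z$ is unimodal with mode at $\lfloor ad/2\rfloor$, so the point $a(d-1)/2$ carries at least the average mass over the $\leq 2s^2$ integers in that range, yielding $\sqrt{a}/(40s^3)$ directly. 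This is short, elementary, and---crucially---uniform in $a\geq 1$.

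Your Fourier argument, as written, has a gap for small $a$. The bound $|\phi(t)|\leq 1-c_2$ on $[1/\sigma,\pi]$ only controls the tail integral to within $O((1-c_2)^a)$, which for $a=O(\log s)$ is $\Omega(1)$ and dominates the Gaussian main term $\Theta(1/(s\sqrt{a}))$; the decomposition therefore cannot certify even positivity of the integral in that regime, let alone the constant $40$. (Concretely, for $a=1$ the target $P_{1,(d-1)/2}=\nu((d-1)/2)=\Theta(1/s)$ comes from the \emph{entire} Fourier integral, with the ``tail'' and ``main'' pieces of the same order.) Sharpening to $|\phi(t)|=O(1/(1+s^2t^2))$ helps but still only integrates to $O(1/s)$, comparable to the main term rather than negligible. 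You would need a separate direct argument for $a\lesssim \log s$, which the paper's moment-plus-unimodality route avoids entirely.
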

The proof of Lemma \ref{lem:pia-lb} is deferred to Section \ref{sec:pia-lb}. We note that the multiplicative factor on the right hand side of the above lemma is unimportant; in fact, as long as it is $1/s^{O(1)}$, it suffices for our proof.

With Lemmas~\ref{lem:tail-bound} and \ref{lem:pia-lb} ready, we can now prove Lemma~\ref{lem:main-ratio} as follows.

\begin{proof}[Proof of Lemma~\ref{lem:main-ratio}]
Let $c_0 \in (0,1)$ be some sufficiently small positive constant, to be specified later. We would like to show that, for all $t \in \{0, \dots, dn\}$, the following two inequalities hold: 
\begin{align} \label{eq:target-ratio-first}
\Pr_{Z_1, \dots, Z_n \sim \cR_0}[Z_1 + \cdots + Z_n = t] \leq e^{\varepsilon} \cdot \Pr_{Z_1, \dots, Z_{n - 1} \sim \cR_0, Z_n \sim \cR_1}[Z_1 + \cdots + Z_n = t],
\end{align}
and
\begin{align} \label{eq:target-ratio-second}
\Pr_{Z_1, \dots, Z_{n - 1} \sim \cR_0, Z_n \sim \cR_1}[Z_1 + \cdots + Z_n = t] \leq e^{\varepsilon} \cdot \Pr_{Z_1, \dots, Z_n \sim \cR_0}[Z_1 + \cdots + Z_n = t].
\end{align}

\paragraph{Proof of (\ref{eq:target-ratio-first}).} We will start by showing~\eqref{eq:target-ratio-first}. To do so, let us first decompose the probability on the left and the right hand sides based on whether $Z_n$ is sampled from the noise distribution $\nu$.
This gives
\begin{align}
&\Pr_{Z_1, \dots, Z_n \sim \cR_0}[Z_1 + \cdots + Z_n = t] \nonumber \\
  &= p \cdot \Pr_{Z_1, \dots, Z_{n - 1} \sim \cR_0, Z_n \sim \nu}[Z_1 + \cdots + Z_n = t] \nonumber \\
  \label{eq:all0}
&\qquad + (1 - p) \cdot \Pr_{Z_1, \dots, Z_{n - 1} \sim \cR_0}\left[Z_1 + \cdots + Z_{n - 1} = t - \left(\frac{d - 1}{2}\right)\right],
\end{align}
and 
\begin{align}
&\Pr_{Z_1, \dots, Z_{n - 1} \sim \cR_0, Z_n \sim \cR_1}[Z_1 + \cdots + Z_n = t] \nonumber \\
  &= p \cdot \Pr_{Z_1, \dots, Z_{n - 1} \sim \cR_0, Z_n \sim \nu}[Z_1 + \cdots + Z_n = t] \nonumber \\
  \label{eq:all0-but1}
&\qquad + (1 - p) \cdot \Pr_{Z_1, \dots, Z_{n - 1} \sim \cR_0}\left[Z_1 + \cdots + Z_{n - 1} = t - \left(\frac{d + 1}{2}\right)\right].
\end{align}
Furthermore, observe that, by expanding based on the number of variables among $Z_1, \dots, Z_{n - 1}$ that uses the noise distribution (i.e., $i$ below), we have
\begin{align}
\Pr_{Z_1, \dots, Z_{n - 1} \sim \cR_0, Z_n \sim \nu}[Z_1 + \cdots + Z_n = t] 
= \sum_{i=0}^{n - 1} \binom{n - 1}{i} p^{i} (1 - p)^{n - 1 - i} \cdot P_{i + 1, t - (n - 1 - i)\left(\frac{d - 1}{2}\right)},
\end{align}
and
\begin{align} \label{eq:expand-plus-one}
\Pr_{Z_1, \dots, Z_{n - 1} \sim \cR_0}\left[Z_1 + \cdots + Z_{n - 1} = t - \left(\frac{d - 1}{2}\right)\right] = \sum_{i=0}^{n - 1} \binom{n - 1}{i} p^{i} (1 - p)^{n - 1 - i} \cdot P_{i, t - (n - i)\left(\frac{d - 1}{2}\right)},
\end{align}
and 
\begin{align} \label{eq:expand-minus-one}
\Pr_{Z_1, \dots, Z_{n - 1} \sim \cR_0}\left[Z_1 + \cdots + Z_{n - 1} = t - \left(\frac{d + 1}{2}\right)\right] = \sum_{i=0}^{n - 1} \binom{n - 1}{i} p^{i} (1 - p)^{n - 1 - i} \cdot P_{i, t - (n - i)\left(\frac{d - 1}{2}\right) - 1}.
\end{align}
We may expand the right hand side of~\eqref{eq:expand-minus-one} further as
\begin{align}
&\sum_{i=0}^{n - 1} \binom{n - 1}{i} p^{i} (1 - p)^{n - 1 - i} \cdot P_{i, t - (n - i)\left(\frac{d - 1}{2}\right) - 1} \nonumber \\
&= \frac{1}{e^{\eps}} \cdot \sum_{i=0}^{n - 1} \binom{n - 1}{i} p^{i} (1 - p)^{n - 1 - i} \cdot  e^{\eps/2} \cdot  P_{i, t - (n - i) \left(\frac{d - 1}{2}\right) - 1} \nonumber \\
&\qquad + \frac{1}{e^{\eps}} \cdot \sum_{i=0}^{n - 1} \binom{n - 1}{i} p^{i} (1 - p)^{n - 1 - i} \cdot (e^{\eps} - e^{\eps/2})P_{i, t - (n - i)\left(\frac{d - 1}{2}\right) - 1} \nonumber \\
&\geq \frac{1}{e^{\eps}} \cdot \sum_{i=0}^{n - 1} \binom{n - 1}{i} p^{i} (1 - p)^{n - 1 - i} \cdot  e^{\eps/2} \cdot  P_{i, t - (n - i) \left(\frac{d - 1}{2}\right) - 1} \nonumber \\
&\qquad + \frac{1}{e^{\eps}} \cdot \sum_{i=0}^{n - 1} \binom{n - 1}{i} p^{i} (1 - p)^{n - 1 - i} \cdot (e^{\eps/2} - 1)P_{i, t - (n - i)\left(\frac{d - 1}{2}\right) - 1} \nonumber \\
&\geq \frac{1}{e^{\eps}} \cdot \sum_{i=0}^{n - 1} \binom{n - 1}{i} p^{i} (1 - p)^{n - 1 - i} \cdot  e^{\eps/2} \cdot  P_{i, t - (n - i) \left(\frac{d - 1}{2}\right) - 1} \nonumber \\
&\qquad + \frac{1}{e^{\eps}} \cdot \sum_{i=0}^{n - 1} \binom{n - 1}{i + 1} p^{i + 1} (1 - p)^{n - 2 - i} \cdot (e^{\eps/2} - 1)P_{i + 1, t - (n - i - 1)\left(\frac{d - 1}{2}\right) - 1} \nonumber \\
&\geq \frac{1}{e^{\eps}} \cdot \sum_{i=0}^{n - 1} \binom{n - 1}{i} p^{i} (1 - p)^{n - 1 - i}  e^{\eps/2} \cdot  P_{i, t - (n - i) \left(\frac{d - 1}{2}\right) - 1} \nonumber \\
&\qquad + \frac{1}{e^{\eps}} \cdot \sum_{i=0}^{n - 1} \binom{n - 1}{i} p^{i} (1 - p)^{n - 1 - i} \cdot \frac{p(n - 1 - i)}{i + 1} \cdot (e^{\eps/2} - 1) P_{i + 1, t - (n - i - 1)\left(\frac{d - 1}{2}\right) - 1}. \label{eq:final-borrow-right-first}
\end{align}

Using the above expressions, we may write the difference between the right hand side and the left hand side of~\eqref{eq:target-ratio-first} as
\begin{align}
&e^{\varepsilon} \cdot \Pr_{Z_1, \dots, Z_{n - 1} \sim \cR_0, Z_n \sim \cR_1}[Z_1 + \cdots + Z_n = t] - \Pr_{Z_1, \dots, Z_n \sim \cR_0}[Z_1 + \cdots + Z_n = t] \nonumber \\
&\geq (e^{\varepsilon} - 1) \cdot p \cdot \sum_{i=0}^{n - 1} \binom{n - 1}{i} p^{i} (1 - p)^{n - 1 - i} \cdot P_{i + 1, t - (n - 1 - i)\left(\frac{d - 1}{2}\right)} \nonumber \\
&\qquad + (1 - p) \cdot \sum_{i=0}^{n - 1} \binom{n - 1}{i} p^{i} (1 - p)^{n - 1 - i}  e^{\eps/2} \cdot  P_{i, t - (n - i) \left(\frac{d - 1}{2}\right) - 1} \nonumber \\
&\qquad + (1 - p) \cdot \sum_{i=0}^{n - 1} \binom{n - 1}{i} p^{i} (1 - p)^{n - 1 - i} \cdot \frac{p(n - 1 - i)}{i + 1} \cdot (e^{\eps/2} - 1) P_{i + 1, t - (n - i - 1)\left(\frac{d - 1}{2}\right) - 1} \nonumber \\
&\qquad - (1 - p) \cdot \sum_{i=0}^{n - 1} \binom{n - 1}{i} p^{i} (1 - p)^{n - 1 - i} \cdot P_{i, t - (n - i)\left(\frac{d - 1}{2}\right)} \nonumber \\
&\geq (1 - p) \sum_{i=0}^{n - 1} \binom{n - 1}{i} p^{i} (1 - p)^{n - 1 - i} \Delta_{i}, \label{eq:sum-delta}
\end{align}
where
\begin{align*}
\Delta_i &:= p(1 - e^{-\eps/2}) \cdot \left(P_{i + 1, t - (n - 1 - i)\left(\frac{d - 1}{2}\right)} + \frac{n - 1 - i}{i + 1} \cdot P_{i + 1, t - (n - i - 1)\left(\frac{d - 1}{2}\right) - 1} \right) \\
& \qquad + e^{\eps/2} \cdot P_{i, t - (n - i) \left(\frac{d - 1}{2}\right) - 1} - P_{i, t - (n - i)\left(\frac{d - 1}{2}\right)},
\end{align*}
and we have used that $e^\ep - 1 \geq e^{\ep/2} - 1 \geq 1- e^{-\ep/2}$ for $\ep \geq 0$. 

 By Lemma \ref{lem:tail-bound} with $m = i, k = t - (n - i)\left(\frac{d - 1}{2}\right), \ell_1 = \left(\frac{d - 1}{2}\right)$, and $\ell_2 = \left(\frac{d - 3}{2}\right)$, we see that
\begin{equation}
  \label{eq:deltai-lb}
  \Delta_i \geq (e^{0.3\ep}-1) P_{i, t-(n-i) \left( \frac{d-1}{2} \right)} \geq 0,
\end{equation}
for all $i$ such that $\mLowerBound{\ep} \leq i \leq n-1$. For ease of notation set $i_0 := \mLowerBound{\ep}$. 
It remains to lower bound the terms in (\ref{eq:sum-delta}) given by $ 0\leq i < i_0$. To do so, we will ``borrow'' the additional mass of $(e^{0.3\ep} - 1) P_{i, t-(n-i) \left( \frac{d-1}{2} \right)}$ from the terms with $i \geq i_0$. To show that this borrowing gives sufficient positive mass from the terms $P_{i, t-(n-i) \left( \frac{d-1}{2} \right)}$ with $i \geq i_0$, we will use Lemma \ref{lem:pia-lb}. % a few lemmas.

Next, let $i_{\max} \in \{0, 1, \ldots, i_0 - 1\}$ and  $i_{\min} \in \{ i_0, i_0 + 1, \ldots, 2p(n-1)\}$ be defined so that:
\begin{align*}
  P_{i_{\max}, t-(n-i_{\max}) \left( \frac{d-1}{2} \right)} & \geq P_{i, t-(n-i)\left( \frac{d-1}{2} \right)} \quad \forall i \in \{0, 1, \ldots, i_0-1 \}\\
  P_{i_{\min}, t-(n-i_{\min}) \left( \frac{d-1}{2} \right)} & \leq P_{i, t-(n-i) \left( \frac{d-1}{2} \right)} \quad \forall i \in \{i_0, i_0 + 1, \ldots, 2p(n-1) \}.
\end{align*}
% \modified{
As $p = \noiseProbabilitySmall{\ep}{n}$
and $\ep < c_0 \leq 1$, we have that as long as $c_0$ is sufficiently small,
$$
2p(n-1) \leq \frac{100 e^{100} \log(5/\ep)}{0.1\ep} \leq \frac{1}{4\ep^2} = s^2/1000.
$$
It follows from Lemma \ref{lem:pia-lb} with $a = i_{\min} - i_{\max} \leq 2p(n-1)$ that
$$
P_{i_{\min}, t-(n-i_{\min}) \left( \frac{d-1}{2} \right)} \geq \frac{1}{40s^3} P_{i_{\max}, t-(n-i_{\max}) \left( \frac{d-1}{2} \right)}.
$$
% }

Let $M \sim \Binom(n-1, p)$ be a binomial random variable. Then, as (\ref{eq:deltai-lb}) holds for $n-1 \geq i \geq i_0$, we have
% \pasin{I think you forgot $\geq$ inequality from the first to second line below. Also, referring to~\ref{eq:deltai-lb} is slightly ambiguous here since we need that it holds for $i \geq i_0$; this might be worth clarifying.}
\begin{align}
  & \sum_{i=0}^{n - 1} \binom{n - 1}{i} p^{i} (1 - p)^{n - 1 - i} \Delta_{i}\nonumber\\
  & \geq -\sum_{i=0}^{i_0-1} \binom{n-1}{i} p^i (1-p)^{n-1-i} P_{i, t-(n-i) \left( \frac{d-1}{2} \right)} + \sum_{i = i_0}^{2p(n-1)} \binom{n-1}{i} p^i (1-p)^{n-1-i} (e^{0.3\ep} - 1) \cdot P_{i, t-(n-i) \left( \frac{d-1}{2} \right)} \nonumber\\
  & \geq -P_{i_{\max}, t-(n-i_{\max}) \left(\frac{d-1}{2} \right)} \sum_{i=0}^{i_0 - 1} \binom{n-1}{i} p^i (1-p)^{n-1-i} + 0.3 \ep \cdot P_{i_{\min}, t-(n-i_{\min}) \left( \frac{d-1}{2} \right)} \sum_{i = i_0}^{2p(n-1)} \binom{n-1}{i} p^i (1-p)^{n-1-i} \nonumber\\
  \label{eq:pimax-final}
  & \geq P_{i_{\max}, t-(n-i_{\max}) \left( \frac{d-1}{2} \right)} \cdot \left(\modified{\frac{0.3\ep}{40 s^3}} \cdot \Pr[i_0 \leq M \leq 2p(n-1)] - \Pr[M < i_0 ] \right).
\end{align}

By the Chernoff bound, for sufficiently large $n$ and since $pn = n \cdot \noiseProbabilitySmall{\ep}{n} \geq 100$, we have
$$
\Pr[M > 2p(n-1)] \leq \exp(-p(n-1)/3) \leq \exp(-pn/4) < 1/2.
$$
Moreover, since $i_0 =\mLowerBound{\ep} \leq pn/3 \leq p(n-1)/2$ and $\ep \leq 1$ in the current case,
$$
\Pr[M < i_0] \leq \exp(-p(n-1)/8) \leq \exp(-pn/10) \leq \exp\left( \frac{10}{1-e^{-0.1\ep}} \right) \leq \exp(-1/(2\ep)) < 1/4.
$$
Hence, recalling $s = \scalingParameter{\ep}$, $d = \numberOfMessagesSmall{\ep}{n}$, and $p = \noiseProbabilitySmall{\ep}{n}$ (as well as the assumption $\ep > 1/{n}^{2/3}$), 
\begin{align*}
  & \frac{0.3\ep}{40 s^3} \cdot \Pr[i_0 \leq M \leq 2p(n-1)] - \Pr[M < i_0 ] \\
  & \geq \frac{0.3\ep}{160 s^3} - \exp(-1/(2\ep)) \\
  & \geq c\ep^4 - \exp(-1/(2\ep)),
\end{align*}
%\pasin{I wonder whether there is some simpler way to bound the stuff up till this point, since in the end we only need the bound of the form $poly(\log n, \eps)$ anyway...}\noah{That would be nice, but I don't see a way -- it seems to come down to simplifying Lemma \ref{lem:pia-lb}, and the only alternative I see to using anti-concentration is to use the trivial bound $P_{i+1,j+(d-1)/2} \geq P_{i,j} / d$ -- but this ends up incurring an exponential factor in $a$, which doesn't seem to work for our needs.}
for some sufficiently small positive absolute constant $c$. The above quantity is positive as long as $\exp(1/(2\ep)) \geq \frac{1}{c\ep^4}$, i.e., as long as $\ep \leq c'$ for some absolute constant $c' > 0$ (which holds as long as we select $c_0 \leq c'$). From this and~\eqref{eq:sum-delta}, we can conclude that~\eqref{eq:target-ratio-first} holds.

% \pasin{Might be useful to have a clear separation here, so that the readers know that we have already considered all cases for~\eqref{eq:target-ratio-first} and is about to jump to~\eqref{eq:target-ratio-second}.}
\paragraph{Proof of (\ref{eq:target-ratio-second}).} Next, we move on to prove~\eqref{eq:target-ratio-second}. Similar to the previous case (specifically~\eqref{eq:final-borrow-right-first}), we may bound the right hand side of~\eqref{eq:expand-plus-one} further as
\begin{align}
&\sum_{i=0}^{n - 1} \binom{n - 1}{i} p^{i} (1 - p)^{n - 1 - i} \cdot P_{i, t - (n - i)\left(\frac{d - 1}{2}\right)} \\
&\geq \frac{1}{e^{\eps}} \cdot \sum_{i=0}^{n - 1} \binom{n - 1}{i} p^{i} (1 - p)^{n - 1 - i}  e^{\eps/2} \cdot  P_{i, t - (n - i) \left(\frac{d - 1}{2}\right)} \nonumber \\
&\qquad + \frac{1}{e^{\eps}} \cdot \sum_{i=0}^{n - 1} \binom{n - 1}{i} p^{i} (1 - p)^{n - 1 - i} \cdot \frac{p(n - 1 - i)}{i + 1} \cdot (e^{\eps/2} - 1) P_{i + 1, t - (n - i - 1)\left(\frac{d - 1}{2}\right)}.
\label{eq:final-borrow-right-second}
\end{align}
Thus, as in~\eqref{eq:sum-delta}, we may write the difference between the right hand side and the left hand side of~\eqref{eq:target-ratio-second} as
\begin{align}
&e^{\varepsilon} \cdot \Pr_{Z_1, \dots, Z_n \sim \cR_0}[Z_1 + \cdots + Z_n = t] - \Pr_{Z_1, \dots, Z_{n - 1} \sim \cR_0, Z_n \sim \cR_1}[Z_1 + \cdots + Z_n = t] \nonumber \\
&\geq (1 - p) \sum_{i=0}^{n - 1} \binom{n - 1}{i} p^{i} (1 - p)^{n - 1 - i} \tDelta_{i}, \label{eq:sum-delta-second}
\end{align}
where 
\begin{align*}
\tDelta_i &:= p(1 - e^{-\eps/2}) \cdot \left(P_{i + 1, t - (n - 1 - i)\left(\frac{d - 1}{2}\right)} + \frac{n - 1 - i}{i + 1} \cdot P_{i + 1, t - (n - i - 1)\left(\frac{d - 1}{2}\right)} \right) \\
& \qquad + e^{\eps/2} \cdot P_{i, t - (n - i) \left(\frac{d - 1}{2}\right)} - P_{i, t - (n - i)\left(\frac{d - 1}{2}\right)-1} .
\end{align*}
To see that the expression (\ref{eq:sum-delta-second}) is non-negative, observe first that, due to symmetry, we have $P_{i', j'} = P_{i', di' - j'}$ for all $i' \in \Z_{\geq 0}$ and $j \in \Z$. In particular,
\begin{align*}
  P_{i, t-(n-i) \left( \frac{d-1}{2} \right)} &= P_{i, di - \left( t-(n-i) \left( \frac{d-1}{2} \right) \right)} \\
  P_{i+1, t-(n-1-i) \left( \frac{d-1}{2} \right)} &= P_{i+1, di - \left( t - (n+1-i) \left( \frac{d-1}{2} \right)\right) + 1}.
\end{align*}
Using this observation together with Lemma~\ref{lem:tail-bound} where $m = i, k = di - \left(t - (n - i)\left(\frac{d - 1}{2}\right) - 1\right)$ and $\ell_1 = \ell_2 = \frac{d - 1}{2}$, we have that
\begin{equation}
  \label{eq:tdelta}
  \tDelta_i \geq (e^{0.3\ep} - 1) \cdot P_{i, di - \left(t - (n-i) \left( \frac{d-1}{2} \right)- 1\right)} = (e^{0.3\ep} - 1) \cdot P_{i,  t - (n-i) \left( \frac{d-1}{2} \right) -1 } \geq 0
\end{equation}for all $i_0 = \mLowerBound{\ep} \leq i \leq n-1$. Using Lemma \ref{lem:pia-lb} in a similar manner to the derivation of (\ref{eq:pimax-final}), we may conclude that for some $\tilde i_{\max} \in \{0, 1, \ldots, i_0 - 1 \}$, 
\begin{equation}
 (1 - p) \sum_{i=0}^{n - 1} \binom{n - 1}{i} p^{i} (1 - p)^{n - 1 - i} \tDelta_{i}  \geq P_{\tilde i_{\max}, t-(n-\tilde i_{\max}) \left( \frac{d-1}{2} \right) - 1} \cdot \left(\modified{\frac{0.3\ep}{40 s^3}} \cdot \Pr[i_0 \leq M \leq 2p(n-1)] - \Pr[M < i_0 ] \right)\nonumber.
\end{equation}
The same argument as in the proof of~\eqref{eq:target-ratio-first} establishes that as long as $c_0$ is chosen small enough, the above quantity is non-negative. It follows that~\eqref{eq:target-ratio-second} holds, and hence our proof is completed.
% From this and from~\eqref{eq:sum-delta-second}, using a similar ``borrowing'' argument to the proof of (\ref{eq:target-ratio-first}), we can conclude that~\eqref{eq:target-ratio-second} holds, and hence our proof is completed. \pasin{Should we expand on this a bit more?} % \noah{We omit the details.}
\end{proof}

\subsection{A Tale of Two Tails: Proof of Lemma~\ref{lem:tail-bound}}
\label{sec:tail-bound}

In this section we prove several inequalities relating the two tails $P_{m, \ast}$ and $P_{m + 1, \ast}$, and ultimately prove Lemma~\ref{lem:tail-bound}. Throughout this section, we will use the several additional notation:
\begin{itemize}
\item First, we will overload the notation and use $\nu(z)$ to denote the probability mass function of $\nu$ at $z$, i.e., $\nu(z) := \Pr_{Z \sim \nu}[Z = z]$.
\item We often represent a sequence of integers $a_1, \dots, a_m$ as a vector $\ba = (a_1, \dots, a_m)$; boldface is used to emphasized that the variable is a vector. For such a vector, we use $\nu(\ba)$ as a shorthand for the product $\nu(a_1) \cdots \nu(a_m)$.
\item We use $S_{m, k, d}$ to denote the set of all sequences of integers $a_1, \dots, a_m$ between $0$ and $d$ (inclusive) whose sum is $k$; more formally,
$$S_{m, k, d} = \{(a_1, \dots, a_m) \in (\Z \cap [0, d])^m \mid a_1 + \cdots + a_m = k\}
= \Delta_{m, k} \cap [0, d]^m.$$
Since $d$ will be fixed throughout, for simplicity of notation, we omit $d$ and simply use $S_{m, k}$. 
\item For a sequence $\ba = (a_1, \dots, a_m)$, we define $\zero(\ba)$ to be the number of zero coordinates, i.e., $\zero(\ba) = |\{i \in [m] \mid a_i = 0\}|$.
\item Next, for any $i \in \R$, we use $S_{m, k}^{\zero < i}$ (resp. $S_{m, k}^{\zero \geq i}$) to denote the sets of sequences in $S_{m, k}$ whose number of zero-coordinates is less than (resp., at least) $i$. More formally,
\begin{align*}
S_{m, k}^{\zero < i} = \{\ba \in S_{m, k} \mid \zero(\ba) < i\},
\end{align*}
and 
\begin{align*}
S_{m, k}^{\zero \geq i} = \{\ba \in S_{m, k} \mid \zero(\ba) \geq i\}.
\end{align*}
\end{itemize}

\paragraph{Proof Overview.} We now give a rough outline of our proof. First, let us observe that we may expand $P_{m, k}$ as
\begin{align*}
P_{m, k} = \sum_{\ba \in S_{m, k}} \nu(\ba) = \sum_{\ba \in S_{m, k}^{\zero < i}} \nu(\ba) + \sum_{\ba \in S_{m, k}^{\zero \geq i}} \nu(\ba),
\end{align*}
where $i$ will be chosen later in the proof.

We will bound the two terms on the right separately. More specifically, we will show that
\begin{align} \label{eq:overview-first-term-expansion}
\sum_{\ba \in S_{m, k}^{\zero < i}} \nu(\ba) \leq e^{0.5\eps} \cdot P_{m, k-1},
\end{align}
and that for $\ell_1, \ell_2 \in \{ \frac{d-1}{2}, \frac{d-3}{2} \}$, 
\begin{align} \label{eq:overview-second-term-expansion}
\sum_{\ba \in S_{m, k}^{\zero \geq i}} \nu(\ba) \leq p(1 - e^{-0.5\eps}) \cdot \left(P_{m+1, k+\ell_1}+ \frac{(n - m + 1)}{m + 1} \cdot P_{m + 1, k + \ell_2}\right). % NOAH: deleted p in \frac{p(n-m+1)}{m+1}!!!
\end{align}
Once we have these two inequalities, Lemma~\ref{lem:tail-bound} immediately follows. The intuition behind the two inequalities is quite simple. For~\eqref{eq:overview-first-term-expansion}, since each sequence $\ba \in S_{m, k}^{\zero < i}$ contains few zeros, we should be able to pick a non-zero $a_i$ and decrease it by one and end up with a sequence in $S_{m, k - 1}$ instead; since the discrete Laplace distribution's mass (i.e., $\nu$) on $a_i$ and on $a_i - 1$ differs (multiplicatively) by a factor of at most $e^{1/s}$, the mass of the modified sequence also differs from the original sequence by a factor of $e^{1/s}$.

For~\eqref{eq:overview-second-term-expansion}, the intuition is pretty similar. We start with a sequence $\ba \in S_{m, k}^{\zero \geq i}$ and we will modify it to end up with a sequence in $S_{m + 1, k + \ell}$ where $\ell$ is either $\left(\frac{d - 1}{2}\right)$ or $\left(\frac{d - 3}{2}\right)$. The intuition here is that since $\ba$ contains many zero coordinates, there are many ways for us to divide $\ell$ among these zero coordinates and an additional coordinate, which would result naturally in a sequence in $P_{m + 1, k + \ell}$. 

To turn the intuition into a formal proof, we need to be careful about ``double counting'' a modified sequence. As an example, for~\eqref{eq:overview-first-term-expansion}, suppose we would like to modify a sequence in $S_{m, k}^{\zero < i}$ to one in $S_{m, k-1}$ by decreasing any non-zero coordinate. Then, it is possible that two sequences $(1, 0, a_3, \dots, a_n)$ and $(0, 1, a_3, \dots, a_n)$ results in the same sequence $(0, 0, a_3, \dots, a_n)$.

In order to avoid such ``double counting'', we divide our proofs into two parts. First, we show that we may replace $S_{m, k}^{\zero < i}$ (resp. $S_{m, k}^{\zero \geq i}$) with the set of sequences whose first coordinate is non-zero (resp., whose first few coordinates are zeros); this is done in Section~\ref{sec:permute}. Then, in Section~\ref{sec:prefix-mod}, we apply the modification step but only to the first (resp., first few) coordinates; this ensures that there is no ``double counting''. Finally, in Section~\ref{sec:final-tail-bound}, we put the two components together to deduce Lemma~\ref{lem:tail-bound}.

\subsubsection{Bounding Sums by (Non-)Zero Prefix Sums}
\label{sec:permute}

As stated earlier, we will show in this section that we may replace $S_{m, k}^{\zero < i}$ (resp., $S_{m, k}^{\zero \geq i}$) with the set of sequences whose first coordinate is non-zero (resp., whose first few coordinates are zeros). In both cases, the arguments are similar. Roughly speaking, we observe that permutations of coordinates of $\ba$ results in the same probability mass. Hence, by taking a random permutation of a sequence, there is a certain probability that we end up with a sequence with leading non-zero coordinate (resp., zero coordinates).

We can now formalize our bound, starting with that for $S_{m, k}^{\zero < i}$. Note that, for a permutation $\pi: [m] \to [m]$ and a sequence $\ba \in S_{m, k}$, we use $\pi \circ \ba$ to denote the sequence $(a_{\pi(1)}, \dots, a_{\pi(m)})$.

\begin{lemma} \label{lem:perm-bound-small}
For any $k \in \Z$ and $m, i \in \N$ such that $i \leq m$, we have
\begin{align*}
\sum_{\ba \in S_{m, k}^{\zero < i}} \nu(\ba) \leq \frac{m}{m - i + 1} \cdot \sum_{\ba' \in S_{m, k} \atop a'_1 \ne 0} \nu(\ba').
\end{align*}
\end{lemma}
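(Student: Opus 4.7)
\medskip

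My plan is a symmetrization / double counting argument, exploiting the fact that $\nu$ is a product measure and hence invariant under permutation of coordinates. Concretely, for $\ba \in S_{m,k}$ and a permutation $\pi$ of $[m]$, the vector $\pi \circ \ba$ lies in $S_{m,k}$, has the same number of zero coordinates as $\ba$, and satisfies $\nu(\pi \circ \ba) = \nu(\ba)$. In particular, the map $\ba \mapsto \pi \circ \ba$ is a bijection of $S_{m,k}^{\zero < i}$ onto itself that preserves $\nu$-mass.

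The key quantitative input is that any $\ba \in S_{m,k}^{\zero < i}$ has at least $m-i+1$ non-zero coordinates, so for a uniformly random permutation $\pi \in \mathrm{Sym}([m])$,
\[
\Pr_{\pi}\bigl[(\pi \circ \ba)_1 \neq 0\bigr] \;=\; \frac{|\{j \in [m] : a_j \neq 0\}|}{m} \;\geq\; \frac{m - i + 1}{m}.
\]

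Combining these two observations via a double sum over $(\pi, \ba) \in \mathrm{Sym}([m]) \times S_{m,k}^{\zero < i}$ of the quantity $\mathbb{1}[(\pi \circ \ba)_1 \neq 0]\,\nu(\ba)$: summing over $\pi$ first gives a lower bound of $\frac{m-i+1}{m} \cdot m! \cdot \sum_{\ba \in S_{m,k}^{\zero<i}} \nu(\ba)$, while summing over $\ba$ first and using the bijection $\ba \mapsto \pi \circ \ba$ (together with $\nu(\ba) = \nu(\pi\circ \ba)$) rewrites the double sum as $m! \cdot \sum_{\ba' \in S_{m,k}^{\zero < i},\, a'_1 \neq 0} \nu(\ba')$, which is at most $m! \cdot \sum_{\ba' \in S_{m,k},\, a'_1 \neq 0} \nu(\ba')$. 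Rearranging yields exactly the claimed bound.

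There is no real obstacle here: the only thing to check carefully is that both the restriction $\zero < i$ and the measure $\nu$ are preserved under the permutation action (so that the change of variables in the double sum is legitimate), and that the factor $m/(m-i+1)$ comes out correctly. The argument handles the ``double counting'' issue alluded to in the overview, since the double sum naturally averages over all permutations rather than relying on a canonical choice of a non-zero coordinate to move to position $1$.
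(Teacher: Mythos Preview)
Your proposal is correct and is essentially the same symmetrization/double-counting argument as the paper's proof: both exploit permutation invariance of $\nu$ together with the fact that each $\ba \in S_{m,k}^{\zero < i}$ has at least $m-i+1$ nonzero coordinates, and both arrive at the factor $m/(m-i+1)$ by averaging over permutations. The only difference is packaging---you phrase it as a probability over a uniform $\pi$ and a bijection $\ba \mapsto \pi\circ\ba$, while the paper writes out the counts $(m-i+1)(m-1)!$ and $m!$ explicitly.
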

\begin{proof}
We have
\begin{align*}
\sum_{\ba \in S_{m, k}^{\zero < i}} \nu(\ba) &\leq \sum_{\ba \in S_{m, k}^{\zero < i}} \left(\frac{1}{(m - i + 1) \cdot (m - 1)!} \cdot \sum_{\pi: [m] \to [m] \atop a_{\pi(1)} \ne 0} \nu(\ba)\right) \\
&= \frac{1}{(m - i + 1) \cdot (m - 1)!} \sum_{\ba \in S_{m, k}^{\zero < i}} \sum_{\pi: [m] \to [m] \atop a_{\pi(1)} \ne 0} \nu(\pi \circ \ba) \\
&= \frac{1}{(m - i + 1) \cdot (m - 1)!} \sum_{\ba' \in S_{m, k} \atop a'_1 = 0} \nu(\ba') \cdot \sum_{\ba \in S_{m,k}^{\zero<i}}|\{\pi \mid (\pi \circ \ba) = \ba'\}| \\
&\leq \frac{1}{(m - i + 1) \cdot (m - 1)!} \sum_{\ba' \in S_{m, k} \atop a'_1 \ne 0} \nu(\ba') \cdot m! \\
&\leq \frac{m}{m - i + 1} \sum_{\ba' \in S_{m, k} \atop a'_1 \ne 0} \nu(\ba'). 
\qedhere
\end{align*}
\end{proof}

We next prove our bound for $S_{m, k}^{\zero \geq i}$. In this case, we upper bound the sum $\sum_{\ba \in S_{m, k}^{\zero \geq i}} \nu(\ba)$ by the sum over sequences such that the first $t$ coordinates are zeros, where $t$ is a parameter that will be specified later.

\begin{lemma} \label{lem:perm-bound-large}
For any $k \in Z$ and $m, i, t \in \N$ such that $t \leq i \leq m$, we have
\begin{align*}
\sum_{\ba \in S_{m, k}^{\zero \geq i}} \nu(\ba) \leq \frac{m \cdot \cdots (m - t + 1)}{i \cdot \cdots (i - t + 1)} \cdot \nu(0)^t \cdot P_{m - t, k}.
\end{align*}
\end{lemma}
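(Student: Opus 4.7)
The plan is to follow the same permutation-averaging strategy used in Lemma~\ref{lem:perm-bound-small}, but instead of averaging over permutations that move a nonzero coordinate to position~1, we average over permutations that move $t$ of the (many) zero coordinates to positions $1,\dots,t$. The two key facts that make this work are: (i) $\nu$ is invariant under permutations of coordinates, so $\nu(\pi\circ\ba)=\nu(\ba)$ for every $\pi$; and (ii) if $\zero(\ba)=z\geq i$, then the number of permutations $\pi\in S_m$ for which $a_{\pi(1)}=\cdots=a_{\pi(t)}=0$ is exactly $z(z-1)\cdots(z-t+1)\cdot(m-t)!$, since one chooses an ordered $t$-tuple of zero-valued indices and then permutes the remaining $m-t$ positions arbitrarily.

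Starting from the identity
\[
\nu(\ba) \;=\; \frac{1}{z(z-1)\cdots(z-t+1)\,(m-t)!}\;\sum_{\pi:\, a_{\pi(1)}=\cdots=a_{\pi(t)}=0}\nu(\pi\circ\ba),
\]
I would use $z\geq i$ to replace the falling factorial in the denominator by $i(i-1)\cdots(i-t+1)$, giving the \emph{upper} bound
\[
\nu(\ba) \;\leq\; \frac{1}{i(i-1)\cdots(i-t+1)\,(m-t)!}\;\sum_{\pi:\, a_{\pi(1)}=\cdots=a_{\pi(t)}=0}\nu(\pi\circ\ba).
\]
Summing over $\ba\in S_{m,k}^{\zero\geq i}$ and swapping the order of summation, I would reparameterize by $\ba':=\pi\circ\ba$. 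The crucial bookkeeping step is to count, for each fixed $\ba'$, the number of pairs $(\ba,\pi)$ with $\pi\circ\ba=\ba'$, $\ba\in S_{m,k}^{\zero\geq i}$, and $a_{\pi(j)}=0$ for $j\leq t$. Since $a_{\pi(j)}=a'_j$, the prefix-zero condition becomes $a'_1=\cdots=a'_t=0$; and since $\ba$ and $\ba'$ are permutations of each other, $\zero(\ba)=\zero(\ba')\geq i$ is required. For any such $\ba'$, exactly $m!$ pairs $(\ba,\pi)$ contribute (one for each $\pi\in S_m$, with $\ba$ determined as $\pi^{-1}\circ\ba'$).

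Dropping the (harmless) constraint $\zero(\ba')\geq i$ on the outer sum yields
\[
\sum_{\ba\in S_{m,k}^{\zero\geq i}}\nu(\ba) \;\leq\; \frac{m!}{i(i-1)\cdots(i-t+1)\,(m-t)!}\sum_{\substack{\ba'\in S_{m,k}\\ a'_1=\cdots=a'_t=0}}\nu(\ba').
\]
Factoring out the $\nu(0)^t$ coming from the forced zeros in the first $t$ coordinates and identifying the remaining sum with $P_{m-t,k}$ (since $(a'_{t+1},\dots,a'_m)$ ranges over $S_{m-t,k}$), the factorials telescope into $m(m-1)\cdots(m-t+1)$, yielding the claimed bound.

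The only real subtlety is the bookkeeping in the third step: one must verify that the pair-counting gives exactly $m!$ per valid $\ba'$ (no over- or under-counting of permutations that fix zero coordinates among themselves), and that dropping the $\zero(\ba')\geq i$ restriction on $\ba'$ only weakens the inequality. Both are straightforward once one notes that $\pi$ ranges over \emph{all} of $S_m$ in the outer reparameterization, so distinct $\pi$'s give distinct pairs $(\ba,\pi)$ even when they induce the same $\ba$. No analytic input about $\nu$ beyond $\nu\geq 0$ and its permutation invariance is needed; the argument is purely combinatorial.
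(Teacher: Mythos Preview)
Your proposal is correct and follows essentially the same permutation-averaging argument as the paper: bound the number of permutations sending zeros to the first $t$ slots from below by $i(i-1)\cdots(i-t+1)(m-t)!$, swap the order of summation, use that for each $\ba'$ with $a'_1=\cdots=a'_t=0$ the total number of pairs $(\ba,\pi)$ with $\pi\circ\ba=\ba'$ is at most $m!$, and then factor out $\nu(0)^t$ to recognize $P_{m-t,k}$. Your explicit bookkeeping of the pair-count and the remark that dropping $\zero(\ba')\geq i$ only weakens the bound match exactly what the paper does (the paper compresses both into the single inequality $\sum_{\ba}|\{\pi:\pi\circ\ba=\ba'\}|\leq m!$).
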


\begin{proof}
We have
\begin{align*}
\sum_{\ba \in S_{m, k}^{\zero \geq i}} \nu(\ba) &\leq \sum_{\ba \in S_{m, k}^{\zero \geq i}} \left(\frac{1}{i \cdots (i - t + 1) \cdot (m - t)!} \cdot \sum_{\pi: [m] \to [m] \atop a_{\pi(1)} = \cdots = a_{\pi(t)} = 0} \nu(\ba)\right) \\
&= \frac{1}{i \cdots (i - t + 1) \cdot (m - t)!} \sum_{\ba \in S_{m, k}^{\zero \geq i}} \sum_{\pi: [m] \to [m] \atop a_{\pi(1)} = \cdots = a_{\pi(t)} = 0} \nu(\pi \circ \ba) \\
&= \frac{1}{i \cdots (i - t + 1) \cdot (m - t)!} \sum_{\ba' \in S_{m, k} \atop a'_1 = \cdots = a'_t = 0} \nu(\ba') \cdot \sum_{\ba \in S_{m,k}^{\zero \geq i}}|\{\pi \mid (\pi \circ \ba) = \ba'\}| \\
&\leq \frac{1}{i \cdots (i - t + 1) \cdot (m - t)!} \sum_{\ba' \in S_{m, k} \atop a'_1 = \cdots = a'_t = 0} \nu(\ba') \cdot m! \\
&= \frac{m \cdots (m - t + 1)}{i \cdots (i - t + 1)} \sum_{\ba' \in S_{m, k} \atop a'_1 = \cdots = a'_t 0} \nu(\ba') \\
&= \frac{m \cdot \cdots (m - t + 1)}{i \cdot \cdots (i - t + 1)} \cdot \nu(0)^t \cdot P_{m - t, k}. \qedhere
\end{align*}
\end{proof}

\subsubsection{Bounding Sums by Prefix Modification}
\label{sec:prefix-mod}

We now move on to relate the sums derived in the previous sections to the terms that we actually care about (i.e., $P_{m, k - 1}, P_{m + 1, k + \left(\frac{d - 1}{2}\right)}, P_{m + 1, k + \left(\frac{d - 3}{2}\right)}$). As describe in the proof overview, this is done by modifying the first few coordinates of the sequences.

We start with the bound on the sum from Lemma~\ref{lem:perm-bound-small}. In this case, the modification is simple: just decrease the first coordinate by one. This is formalized below.

\begin{lemma} \label{lem:shift-single}
For any $m, k \in \N$, we have
\begin{align*}
\sum_{\ba' \in S_{m, k} \atop a'_1 \ne 0} \nu(\ba') \leq e^{1/s} \cdot P_{m, k - 1}.
\end{align*}
\end{lemma}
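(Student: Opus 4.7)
My plan is to build an injective ``shift'' map from the indexing set on the left to $S_{m, k-1}$ that decreases the first coordinate by one, and then compare the $\nu$-masses of a sequence and its image pointwise.

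Concretely, I would define
\[
\phi : \{\ba' \in S_{m, k} \mid a'_1 \neq 0\} \to S_{m, k-1}, \qquad \phi(a'_1, a'_2, \ldots, a'_m) = (a'_1 - 1,\, a'_2, \ldots, a'_m).
\]
First I would check $\phi$ is well-defined: if $a'_1 \in \{1, \ldots, d\}$ then $a'_1 - 1 \in \{0, \ldots, d-1\} \subseteq [0,d]$, and the coordinate sum drops from $k$ to $k - 1$, so $\phi(\ba') \in S_{m, k-1}$. Injectivity is immediate since $\phi$ only modifies the first coordinate and is clearly reversible on its image.

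Next I would compare $\nu(\ba')$ with $\nu(\phi(\ba'))$. The two sequences agree outside the first coordinate, so the ratio collapses to $\nu(a'_1)/\nu(a'_1 - 1)$. Recalling that $\nu(z) = \frac{1}{C_d(d/2,s)} e^{-|z - d/2|/s}$ on its support, the ratio equals $\exp\!\bigl((|a'_1 - 1 - d/2| - |a'_1 - d/2|)/s\bigr)$. The reverse triangle inequality gives $\bigl||a'_1 - 1 - d/2| - |a'_1 - d/2|\bigr| \leq 1$, hence $\nu(a'_1) \leq e^{1/s} \cdot \nu(a'_1 - 1)$, and therefore $\nu(\ba') \leq e^{1/s} \cdot \nu(\phi(\ba'))$.

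Finally I would sum over $\ba'$ with $a'_1 \neq 0$ and use injectivity of $\phi$ together with non-negativity of $\nu$:
\[
\sum_{\substack{\ba' \in S_{m, k} \\ a'_1 \neq 0}} \nu(\ba') \;\leq\; e^{1/s} \sum_{\substack{\ba' \in S_{m, k} \\ a'_1 \neq 0}} \nu(\phi(\ba')) \;=\; e^{1/s} \sum_{\ba'' \in \phi(\cdot)} \nu(\ba'') \;\leq\; e^{1/s} \sum_{\ba'' \in S_{m, k-1}} \nu(\ba'') \;=\; e^{1/s} \cdot P_{m, k-1},
\]
which is the desired bound. There is no real obstacle here: the only subtlety is ensuring the shift does not leave the support $[0, d]$ (handled by the hypothesis $a'_1 \neq 0$, which guarantees $a'_1 \geq 1$) and that the image lands inside $S_{m, k-1}$ without collisions, both of which are immediate.
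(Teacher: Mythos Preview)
Your proposal is correct and is essentially the same argument as the paper's: both decrease the first coordinate by one and use $\nu(a'_1)\le e^{1/s}\nu(a'_1-1)$, with the paper writing this as an explicit change of variables rather than an injection. Your injectivity observation is exactly what the paper captures by noting $a''_1 \le d-1$ and then dropping that constraint to recover all of $S_{m,k-1}$.
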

\begin{proof}
We can now further rewrite the right hand side as
\begin{align*}
\sum_{\ba' \in S_{m, k} \atop a'_1 \ne 0} \nu(\ba') = &\sum_{a'_1, \dots, a'_m \in \Z \cap [0, d] \atop a'_1 + \cdots + a'_m = k, a'_1 \geq 1} \nu(a'_1) \cdots \nu(a'_m) \\
&= \sum_{a''_1, a'_2, \dots, a'_{m - 1}, a'_m \in \Z \cap [0, d] \atop a''_1 + a'_2 + \cdots + a'_{m - 1} = k - 1, a''_1 \leq d - 1} \nu(a''_1 + 1) \nu(a'_2) \cdots \nu(a'_m) \\
&\leq e^{1/s} \cdot \left(\sum_{a''_1, a'_2, \dots, a'_{m - 1}, a'_m \in \Z \cap [0, d] \atop a''_1 + a'_2 + \cdots + a'_{m - 1} = k - 1, a''_1 \leq d - 1} \nu(a''_1) \nu(a'_2) \cdots \nu(a'_m)\right) \\
&\leq e^{1/s} \cdot \left(\sum_{a''_1, a'_2, \dots, a'_{m - 1}, a'_m \in \Z \cap [0, d] \atop a''_1 + a'_2 + \cdots + a'_{m - 1} = k - 1} \nu(a''_1) \nu(a'_2) \cdots \nu(a'_m)\right) \\
&= e^{1/s} \cdot P_{m, k - 1}. \qedhere
\end{align*}
\end{proof}

Next, for the right hand side term from Lemma~\ref{lem:perm-bound-large}, we will simply bound $\nu(0)^t$. In this case, the bound is shown by simply counting the number of possible ways of writing $\ell$ (which is either $\frac{d - 1}{2}$ or $\frac{d - 3}{2}$) as a sum of $t + 1$ non-negative integers, as stated more precisely below.

\begin{lemma} \label{lem:shift-mult}
Let $C = C_{d/2}(d/2, s)$. For any $t \in \N$ and any $\ell \in \Z \cap [0, d/2]$, we have
\begin{align*}
\frac{\binom{\ell + t}{t} \cdot e^{-(d/2 - \ell)/s}}{C} \cdot  \nu(0)^t = P_{t + 1, \ell}.
\end{align*}
\end{lemma}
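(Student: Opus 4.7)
The plan is to prove the identity by directly expanding $P_{t+1,\ell}$ as a sum over compositions and observing that, under the hypothesis $\ell \leq d/2$, the summand is actually independent of the composition. Concretely, I would write
\[
P_{t+1,\ell} \;=\; \sum_{(z_1,\dots,z_{t+1}) \in \Z_{\geq 0}^{t+1} \,:\, z_1+\cdots+z_{t+1}=\ell} \; \prod_{j=1}^{t+1} \nu(z_j).
\]
Because $\ell \in [0, d/2]$ and each $z_j \geq 0$, every $z_j$ in a valid composition satisfies $0 \leq z_j \leq \ell \leq d/2 \leq d$, so each $z_j$ lies in the support of $\nu$ and moreover $|z_j - d/2| = d/2 - z_j$.

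Next I would substitute $\nu(z_j) = C^{-1} e^{-(d/2 - z_j)/s}$ (with $C$ the normalization constant for $\nu$) and observe that the exponents telescope:
\[
\sum_{j=1}^{t+1}\bigl(d/2 - z_j\bigr) \;=\; (t+1)\,d/2 - \ell,
\]
which is the same for every composition. Therefore the product $\prod_j \nu(z_j)$ equals $C^{-(t+1)} e^{-((t+1)d/2 - \ell)/s}$, independently of the particular composition. The count of compositions is, by the standard stars-and-bars identity, $\binom{\ell + t}{t}$, so
\[
P_{t+1,\ell} \;=\; \binom{\ell+t}{t}\, C^{-(t+1)}\, e^{-((t+1)d/2 - \ell)/s}.
\]

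Finally I would plug in $\nu(0) = C^{-1} e^{-(d/2)/s}$, so $\nu(0)^t = C^{-t} e^{-t d/(2s)}$, and check that
\[
\frac{\binom{\ell+t}{t}\, e^{-(d/2-\ell)/s}}{C} \cdot \nu(0)^t
= C^{-(t+1)} \binom{\ell+t}{t}\, e^{-(d/2-\ell)/s - t d/(2s)}
= C^{-(t+1)} \binom{\ell+t}{t}\, e^{-((t+1)d/2 - \ell)/s},
\]
which matches the closed form for $P_{t+1,\ell}$ derived above.

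There is no real obstacle here; the lemma is a direct counting computation. The only subtle point worth highlighting is the use of the bound $\ell \leq d/2$: without it, some composition could have a coordinate $z_j > d/2$, the identity $|z_j - d/2| = d/2 - z_j$ would fail, and the exponent would no longer depend only on $\ell$ and $t$. I would also flag that the symbol $C$ as written in the statement appears to denote the normalizer of $\nu$ (i.e., $C_d(d/2,s)$ in the paper's earlier notation), which is what makes the identity dimensionally consistent with the definition $\nu(z) = e^{-|z-d/2|/s}/C$.
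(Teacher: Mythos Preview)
Your proof is correct and follows essentially the same approach as the paper: expand $P_{t+1,\ell}$ over $S_{t+1,\ell}$, use $\ell\le d/2$ to write each factor as $\nu(a_j)=C^{-1}e^{-(d/2-a_j)/s}$ so the product depends only on the sum $a_1+\cdots+a_{t+1}=\ell$, count compositions via stars-and-bars, and regroup as $\nu(0)^t$ times the remaining factor. Your observation about $C$ is also on point: the paper's own proof uses $\nu(z)=C^{-1}e^{-|z-d/2|/s}$, which means $C$ is the normalizer of $\nu=\DLap_d(d/2,s)$, i.e.\ $C_d(d/2,s)$, so the ``$C_{d/2}$'' in the lemma statement appears to be a typo.
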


\begin{proof}
For any $\ba \in S_{t + 1, \ell}$, we have
\begin{align}
\nu(\ba) &= \nu(a_1) \cdots \nu(a_{t + 1}) \nonumber \\
&= \left(\frac{1}{C} \cdot e^{-(d/2 - a_1)/s}\right) \cdots \left(\frac{1}{C} \cdot e^{-(d/2 - a_{t + 1})/s}\right) \nonumber \\
&= \left(\frac{1}{C} \cdot e^{-d/(2s)}\right)^t \cdot \left(\frac{1}{C} \cdot e^{-(d/2 - (a_1 + \cdots + a_{t + 1}))/s}\right) \nonumber  \\
&= \nu(0)^t \cdot \left(\frac{1}{C} \cdot e^{-(d/2 - \ell)/s}\right). \label{eq:zero-expand-single-term}
\end{align}

Now, observe that, from a standard star and bar argument, we have $|S_{t + 1, \ell}| = \binom{\ell + t}{t}$. As a result, we have  % \noah{Pasin: is this not an equality?} \pasin{Yes, it is, sorry for the confusion...}
\begin{align*}
P_{t + 1, \ell} &= \sum_{\ba \in S_{t + 1, \ell}} \nu(\ba) \\ 
&\overset{\eqref{eq:zero-expand-single-term}}{=} \sum_{\ba \in S_{t + 1, \ell}} \nu(0)^t \cdot \left(\frac{1}{C} \cdot e^{-(d/2 - \ell)/s}\right) \\ 
&= \frac{\binom{\ell + t}{t} \cdot e^{-(d/2 - \ell)/s}}{C} \cdot  \nu(0)^t,
\end{align*}
as desired.
\end{proof}

\subsubsection{Putting Things Together: Proof of Lemma~\ref{lem:tail-bound}}
\label{sec:final-tail-bound}

With the above four lemmas ready, we can now prove Lemma~\ref{lem:tail-bound} by picking appropriate values of $i, t$. To facilitate our proof, we will also employ the following lemma. %, whose (simple) proof is deferred to the appendix.
\begin{lemma} \label{lem:sequence-separation}
For any $i, j, i', j' \in \N_0$, we have
\begin{align*}
P_{i, j} \cdot P_{i', j'} \leq P_{i + i, j + j'}.
\end{align*}
\end{lemma}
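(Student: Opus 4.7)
The plan is a one-line convolution argument. By the definition in Section~\ref{sec:bit-sum-protocol}, $P_{i+i',\,j+j'}$ is the probability that the sum of $i+i'$ i.i.d.\ draws $Z_1,\dots,Z_{i+i'}$ from $\nu = \DLap_d(d/2,s)$ equals $j+j'$. First I would split these into the first $i$ draws and the remaining $i'$ draws, and then condition on the value of the first partial sum. By independence this gives the convolution identity
\[
P_{i+i',\,j+j'} \;=\; \sum_{k \in \Z_{\geq 0}} P_{i,\,k}\,\cdot\, P_{i',\,j+j'-k},
\]
where, if convenient, we can take the sum over $k \in \{0,1,\ldots,j+j'\}$ since $P_{i',\ell}=0$ for $\ell<0$ (and similarly $P_{i,k}=0$ when $k>di$).

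Since every summand is a non-negative probability, the sum is at least any single term. Taking the term with $k=j$ yields $P_{i,j}\cdot P_{i',j'}$, which is exactly the claimed bound $P_{i,j}\cdot P_{i',j'} \leq P_{i+i',\,j+j'}$ (the paper's ``$P_{i+i,j+j'}$'' being an evident typo for $P_{i+i',j+j'}$). There is essentially no obstacle: the lemma is nothing more than the observation that a convolution evaluated at a point dominates any individual product contributing to it, and the argument works uniformly in $i,i',j,j'\in\N_0$, including the degenerate cases where one of $j,j'$ lies outside $[0,di]\cup[0,di']$ (in which case the left-hand side is $0$ and the inequality is trivial).
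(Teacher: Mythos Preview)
Your proposal is correct and is essentially the same argument as the paper's: the paper expands $P_{i+i',j+j'}$ as a sum over all sequences in $S_{i+i',j+j'}$ and then restricts to those sequences whose first $i$ coordinates sum to $j$ and last $i'$ coordinates sum to $j'$, which is exactly your convolution identity followed by keeping the $k=j$ term. Your observation that $P_{i+i,j+j'}$ is a typo for $P_{i+i',j+j'}$ is also correct.
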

The proof of Lemma \ref{lem:sequence-separation} is deferred to Section \ref{sec:sequence-separation}.

\begin{proof}[Proof of Lemma~\ref{lem:tail-bound}]
Recall that we would like to show:
\begin{align} \label{eq:tail-bound-main}
e^{-\ep} p(1 - e^{-\eps/2}) \cdot \left(P_{m+1, k+\ell_1} + \frac{(n - m - 1)}{m + 1} \cdot P_{m + 1, k + \ell_2}\right) + e^{0.2\ep} \cdot P_{m, k - 1} \geq P_{m, k}
\end{align}
for all $\mLowerBound{\ep} \leq m \leq n-1$ when $p \geq \noiseProbabilityLB{\ep}{n}$.

Let $i = \lceil (1 - e^{0.1\eps}) \cdot m \rceil$.
We may write $P_{m, k}$ as
\begin{align} \label{eq:expand-by-num-zeros}
P_{m, k} = \sum_{\ba \in S_{m, k}} \nu(\ba)
= \sum_{\ba \in S_{m, k}^{\zero < i}} \nu(\ba) + \sum_{\ba \in S_{m, k}^{\zero \geq i}} \nu(\ba).
\end{align}
We will bound the two terms on the right hand side separately. 

\paragraph{First term of (\ref{eq:expand-by-num-zeros}).} For the first term (i.e., the sum over $\ba \in S_{m,k}^{\zero<i}$), by applying Lemmas~\ref{lem:perm-bound-small} and~\ref{lem:shift-single}, we have
\begin{align} \label{eq:first-zero-first-ineq}
\sum_{\ba \in S_{m, k}^{\zero < i}} \nu(\ba) &\leq e^{1/s} \cdot \frac{m}{m - i + 1} \cdot P_{m, k - 1}.
\end{align}
Recall that we pick $s$ so that $1/s \leq 0.1\varepsilon$ and $i$ so that $\frac{m}{m - i + 1} \leq \frac{m}{m - (1 - e^{-0.1\eps})m} = e^{0.1\eps}$. Combining these two inequalities with~\eqref{eq:first-zero-first-ineq}, we have
\begin{align} \label{eq:first-zero}
\sum_{\ba \in S_{m, k}^{\zero < i}} \nu(\ba) &\leq e^{0.2\varepsilon} \cdot P_{m, k-1}. % \leq e^{0.5\varepsilon} \cdot P_{m, k-1}.
\end{align}

\newcommand{\tlogn}{\left\lceil 100\log\left( \frac{n}{1-e^{-0.1\ep}}\right) \right\rceil}
\newcommand{\tzeros}{\min\left\{\lceil i/2 \rceil, \tlogn\right\}}

\paragraph{Second term of (\ref{eq:expand-by-num-zeros}).} We now move on to bound the second term on the right hand side of~\eqref{eq:expand-by-num-zeros}. For this term, we apply Lemma~\ref{lem:perm-bound-large} and Lemma~\ref{lem:shift-mult} with $t =\tzeros$. This gives the following for any $\ell \in \left\{\frac{d - 1}{2}, \frac{d - 3}{2}\right\}$ and $0 \leq m \leq n-1$:
\begin{align}
\sum_{\ba \in S_{m, k}^{\zero \geq i}} \nu(\ba) 
&\leq \frac{m \cdot \cdots (m - t + 1)}{i \cdot \cdots (i - t + 1)} \cdot \nu(0)^t \cdot P_{m - t, k} \nonumber \\
&\leq \frac{m \cdot \cdots (m - t + 1)}{i \cdot \cdots (i - t + 1)} \cdot \frac{C}{\binom{\ell + t}{t} \cdot e^{-(d/2 - \ell)/s}} \cdot P_{t + 1, \ell} \cdot P_{m - t, k} \nonumber \\
&\leq \left(\frac{m}{i - t + 1}\right)^t \cdot \left(\frac{t}{\ell}\right)^t \cdot (C \cdot e^{(d/2 - \ell)/s}) \cdot P_{t + 1, \ell} \cdot P_{m - t, k} \nonumber \\
&\leq \left(\frac{m}{i - t + 1}\right)^t \cdot \left(\frac{t}{\ell}\right)^t \cdot (C \cdot e^{(d/2 - \ell)/s}) \cdot P_{m + 1, k + \ell} \label{eq:almost-final-many-zeros},
\end{align}
where the last inequality follows from Lemma~\ref{lem:sequence-separation}. 

Now, from $\ell \in \left\{\frac{d - 1}{2}, \frac{d - 3}{2}\right\}$ and from Lemma~\ref{lem:normalization-val}, we have
\begin{align} \label{eq:constant-term-bound}
(C \cdot e^{(d/2 - \ell)/s}) \leq \frac{2}{1 - e^{-0.1\eps}} \cdot e^{3/(2s)} = \frac{2e^{0.15\eps}}{1 - e^{-0.1\eps}}.
\end{align}

Next, from our choice of $t = \tzeros$ and since $d \geq \numberOfMessagesSmall{\ep}{n}$ holds for all $0 \leq m \leq n-1$, we have
\begin{align}
\left(\frac{m}{i - t + 1}\right)^t \cdot \left(\frac{t}{\ell}\right)^t &\leq \left(\frac{m}{i/2}\right)^t \cdot \left(\frac{t}{\ell}\right)^t \nonumber\\
(\text{from } i \geq (1 - e^{-0.1\eps}) m) &\leq \left(\frac{2t}{(1 - e^{-0.1\eps}) \ell}\right)^t \nonumber\\
\label{eq:pre-mid-bound}
  (\text{from } \ell \in \left\{(d - 1)/2, (d - 3)/2\right\}) &\leq \left(\frac{2t}{(1 - e^{-0.1\eps}) \left(\frac{d - 3}{2}\right)}\right)^t \\
  \label{eq:mid-bound}
(\text{from  our choice of } d) &\leq \left(\frac{t e^{-100\eps}}{1000\log(n/(1-e^{-0.1\ep}))}\right)^t.
\end{align}

Let us now consider two cases, based on whether $t = \tlogn$. 

\paragraph{Case 1: $t = \tlogn$.} In this case, we have from (\ref{eq:mid-bound}), % , by our choice of $d \geq 1000 \log n$,
\begin{align*}
  \left(\frac{m}{i - t + 1}\right)^t \cdot \left(\frac{t}{\ell}\right)^t &\leq \left(\frac{\left(100 \log\left(\frac{n}{1-e^{-0.1\ep}}\right) + 1\right)e^{-100\eps}}{1000\log(n/(1-e^{-0.1\ep}))}\right)^{t} \\
                                                                         & \leq e^{-100\ep} \cdot 2^{-t-1} \\
  & \leq e^{-100\ep} \left( \frac{1-e^{-0.1\ep}}{n} \right)^{100} \cdot \frac 12\\
% &\leq \left(\frac{(1 - e^{-0.1\eps})e^{-100\eps}}{2}\right)^{t} \\
% &\leq (1 - e^{-0.1\eps}) \cdot 2^{-t} \\
% &\leq (1 - e^{-0.1\eps}) \cdot \frac{1}{n} \\
&\leq e^{-\ep} (1 - e^{-0.1\eps})^2 e^{-100\eps} \cdot \frac p2,
\end{align*}
where the final inequality follows since $p \geq \noiseProbabilityLB{\ep}{n}$. % from our choice of \noah{$p = \noiseProbabilityBig{\ep}$}. 

Combining the above inequality with~\eqref{eq:constant-term-bound}, \eqref{eq:almost-final-many-zeros}, we have that for all $0 \leq m \leq n-1$,
\begin{align*}
\sum_{\ba \in S_{m, k}^{\zero \geq i}} \nu(\ba) \leq (1 - e^{-0.1\eps}) \cdot e^{-\ep} p \cdot P_{m+1, k+\left(\frac{d - 1}{2}\right)}.
\end{align*}

\paragraph{Case 2: $t \ne \tlogn$.} From our choice of $t$, we must have $t = \lceil i/2 \rceil$ and $t \leq \tlogn$. From our assumption on $m$, it follows that
$$
t \geq i/2 \geq \frac{m(1-e^{-0.1\ep})}{2} \geq 5 \log \left( \frac{1}{1-e^{-0.1\ep}}\right).
$$

Then by (\ref{eq:mid-bound})
\begin{align}
\left(\frac{m}{i - t + 1}\right)^t \cdot \left(\frac{t}{\ell}\right)^t 
&\leq  \left(\frac{t e^{-100\eps}}{1000\log(n/(1-e^{-0.1\ep}))}\right)^t \nonumber\\ % \left(\frac{t(1 - e^{-0.1\eps})e^{-100\eps}}{1000\log n}\right)^t \nonumber \\
&\leq e^{-100\eps} \cdot 4^{-t-1} \nonumber \\
  &\leq (1 - e^{-0.1\eps})^{5} e^{-100\eps} / (4t) \nonumber\\
  &\leq (1-e^{-0.1\ep})^2 e^{-100\ep}/(4t). \label{eq:tmp-exp}
\end{align}
As $t = \lceil i/2 \rceil$, we have that $i \leq 2t$ and so $m \leq \frac{i}{1 - e^{-0.1\eps}} \leq \frac{2t}{1 - e^{-0.1\eps}}$.
% \pasin{Below have to change to the right assumption on $\eps$.} \noah{What do you mean by this? We always have $\ep > 1/\sqrt{n}$...}
Now, recall our assumption that $\eps > \frac{1}{{n}^{2/3}}$ (which holds for all $0 \leq m \leq n-1$). This means that $m \leq O(t{n}^{2/3}) \leq O({n}^{2/3} \log n)$.  Hence, for any sufficiently large $n$, we must have $m \leq n/2 - 1$. Thus, we have
\begin{align} \label{eq:term-from-next-binomial}
\frac{p(n - 1 - m)}{m + 1} \geq \frac{pn}{4m} \geq \frac{10}{(1 - e^{-0.1\eps}) m} \geq \frac{1}{t},
\end{align}
where the second-to-last inequality comes from $p \geq \noiseProbabilityLB{\ep}{n}$ and the last inequality comes from $m \leq \frac{2t}{1 - e^{-0.1\eps}}$. 
As a result, by combining~\eqref{eq:tmp-exp} and~\eqref{eq:term-from-next-binomial}, we obtain
\begin{align}
  \label{eq:pcasefinal}
\left(\frac{m}{i - t + 1}\right)^t \cdot \left(\frac{t}{\ell}\right)^t \leq (1 - e^{-0.1\eps})^2 e^{-100\eps} \cdot \frac{p(n - 1 - m)}{m + 1}.
\end{align}

By \eqref{eq:pcasefinal}, together with~\eqref{eq:constant-term-bound} and~\eqref{eq:almost-final-many-zeros}, we have that for $\ell \in \left\{ \frac{d-1}{2}, \frac{d-3}{2} \right\}$,
\begin{align*}
\sum_{\ba \in S_{m, k}^{\zero \geq i}} \nu(\ba) \leq e^{-\ep} (1 - e^{-0.1\eps}) \cdot \frac{p(n - 1 - m)}{m + 1} \cdot P_{m+1, k+\ell}.
\end{align*}

Thus, in both cases 1 and 2 we consider, we have, for $\ell_1, \ell_2 \in \left\{ \frac{d-1}{2}, \frac{d-3}{2} \right\}$, and the claimed values of $m,p$,
\begin{align*}
\sum_{\ba \in S_{m, k}^{\zero \geq i}} \nu(\ba) &\leq e^{-\ep} p(1 - e^{-0.1\eps}) \cdot \left(P_{m+1, k+\ell_1}+ \frac{p(n - m + 1)}{m + 1} \cdot P_{m + 1, k + \ell_2}\right) \\
&\leq e^{-\ep} p(1 - e^{-0.5\eps}) \cdot \left(P_{m+1, k+\ell_1}+ \frac{p(n - m + 1)}{m + 1} \cdot P_{m + 1, k + \ell_2}\right).
\end{align*}

Combining this with~\eqref{eq:expand-by-num-zeros} and~\eqref{eq:first-zero} yields the claimed bound.
\end{proof}

\if 0
\subsection{On Further Improvement of the Analysis}
\noah{The argument here is very hand-wavy, and I don't completely trust all of it -- so not sure if we should include any of it. Please let me know your thoughts.}

We next give a heuristic argument as to why our analysis of our protocol for binary summation with $\ep$-DP cannot be improved in its dependence on $\ep$ as $\ep \to 0$. (In particular, we give some reasons as to why $d = O\left( \frac{\log(n/\ep)}{\ep}\right)$ and $p = \frac{\log(1/\ep)}{\ep}$ cannot be significantly decreased in the regime $\ep = o(1/\log \log n)$. In the setting of the proof of Theorem \ref{th:pure_bin_agg_protocol} and Lemma \ref{lem:main-ratio}, by (\ref{eq:all0}) and (\ref{eq:all0-but1}) we have 
\begin{align}
  & e^\ep \cdot \Pr[\MP(0, \ldots, 0, 0) = t ] - \Pr[\MP(0, \ldots, 0, 1) = t] \nonumber\\
  & = p (e^\ep - 1) \Pr_{Z_1, \ldots, Z_{n-1} \sim \MR_0 \atop Z_n \sim \nu} [Z_1 + \cdots + Z_n = t] + e^\ep (1-p) \Pr_{Z_1, \ldots, Z_{n-1} \sim \MR_0} \left[Z_1 + \cdots + Z_{n-1} = t - \left( \frac{d-1}{2} \right)\right] \nonumber\\
  & - (1-p) \Pr_{Z_1, \ldots, Z_{n-1} \sim \MR_0} \left[ Z_1 + \cdots + Z_{n-1} = t - \left( \frac{d+1}{2} \right) \right]\nonumber\\
  & =p (e^\ep - 1) \sum_{j=0}^d \Pr_{Z_n \sim \nu} [Z_n = i] \cdot \Pr_{Z_1, \ldots, Z_{n-1} \sim \MR_0} [Z_1 + \cdots + Z_n = t - i] \nonumber\\
  & + e^\ep (1-p) \Pr_{Z_1, \ldots, Z_{n-1} \sim \MR_0} \left[Z_1 + \cdots + Z_{n-1} = t - \left( \frac{d-1}{2} \right)\right] \nonumber\\
  \label{eq:zs-lb-pos}
  & - (1-p) \Pr_{Z_1, \ldots, Z_{n-1} \sim \MR_0} \left[ Z_1 + \cdots + Z_{n-1} = t - \left( \frac{d+1}{2} \right) \right].
\end{align}
The above quantity is certainly positive if
\begin{equation}
  \label{eq:ratio-bounded}
\frac{\Pr_{Z_1, \ldots, Z_{n-1} \sim \MR_0} \left[Z_1 + \cdots + Z_{n-1} = t - \left( \frac{d+1}{2} \right)\right]}{\Pr_{Z_1, \ldots, Z_{n-1} \sim \MR_0} \left[ Z_1 + \cdots + Z_{n-1} = t - \left( \frac{d-1}{2} \right) \right]} \leq 1+\ep \leq  e^{\ep}. 
\end{equation}
If (\ref{eq:ratio-bounded}) does not hold, suppose that we can choose $t$ so that the above ratio is just ``slightly'' more than $1 + \ep$, say that it is bounded above by $1+2\ep$; this is true if the distribution of $Z_1 + \cdots + Z_{n-1}$ is sufficiently ``smooth'' in that $\frac{\Pr[Z_1 + \cdots + Z_{n-1} = i]}{\Pr[Z_1 + \cdots + Z_{n-1} = i+1]}$ does not change much when $i$ changes by a small amount. \noah{Perhaps can formalize this if distribution of $Z_1 + \cdots + Z_{n-1}$ is log-concave...} In this case, (\ref{eq:zs-lb-pos}) still has a chance of holding if
\begin{equation}
  \label{eq:ep-2}
\frac{\Pr_{Z_1, \ldots, Z_{n-1} \sim \MR_0} \left[Z_1 + \cdots + Z_{n-1} = t - i\right]}{\Pr_{Z_1, \ldots, Z_{n-1} \sim \MR_0} \left[ Z_1 + \cdots + Z_{n-1} = t-i+1\right]} \geq 1 + \ep/2
\end{equation}
for $0 \leq i \leq \frac{d+1}{2}$. Assume for the purpose of contradiction that $d < \frac{\log n}{\ep}$; then $\Pr[|Z - d/2| > d/4]$ is bounded below by $\alpha/d$ for some constant $\alpha > 0$. Then in this case (\ref{eq:zs-lb-pos}) is bounded below by
$$
\Omega \left( \ep \cdot \Pr_{Z_1, \ldots, Z_{n-1} \sim \MR_0} \left[ Z_1 + \cdots + Z_{n-1} = t - \left( \frac{d+1}{2} \right) \right] \cdot \left(-1 +\alpha ( 1+\ep/2)^{d/4} \cdot p\right) \right).
$$
To get error independent of $n$ (i.e., a function only of $\ep$), we need to take $p = q/n$ for some $q \geq 1$ independent of $n$ (i.e., a function only of $\ep$), and then in order for $(1+\ep/2)^{d/4} \cdot p \geq 1/\alpha$, i.e., $(1+\ep/2)^{d/4} \geq n/q$, we need to have $d \geq \frac{\log(n/q)}{\ep}$, in particular, $d \geq \frac{\log n}{\ep}$.

\noah{The above does not yet explain why $p$ needs to grow as $\tilde \Theta(1/(n\ep))$ -- I have the following purely speculative argument:} If we choose $t$ so that (\ref{eq:ratio-bounded}) is much worse, i.e., the LHS is approximately 2, but if (\ref{eq:ep-2}) still only holds with the RHS at least $1 + \ep/2$, then the requirement on $p,d$ becomes $(1+\ep/2)^{d/4} \cdot p\ep \geq 1/\alpha$. If we {\it additionally} assume that (\ref{eq:ep-2}) fails to hold for $d > \Omega(\log(n) / \ep)$ (e.g., which may hold if the second derivative of the log-density is large enough), then we get that we must choose $d = \log(n) / \ep$ and $p = 1/\ep$. \noah{I don't have any particularly good reason to believe these assumptions; but if the above is true and $1/\ep^{3/2}$ is tight for our protocol, then the only way to get error $1/\ep$ that I can think of is to try to replace the discrete Laplacian distribution with one that puts more mass near the edges than even the uniform distribution on $\{ 0, 1, \ldots, d\}$, e.g., one that is not log-concave...}

\noah{Also, the argument that gives a lower bound for monotone analyzers works here, I think, to get a lower bound of $d \geq \log(n) / \ep$}.
% One other note: we can modify the discrete laplacian distribution to be two distributions with means at (d+1)/2 and (d-1)/2 depending on whether or not our input is 1 or 0, and then we are adding something that looks more like a sum of binomially many discrete laplacians (still not quite, b/c of tail behavior). 
\fi

\section{Lower Bound for Binary Summation}\label{sec:lb_bit_agg}

In this section we prove our lower bound on the communication complexity of any non-interactive pure-$\shuffledDP$ protocol that can perform bit addition with small error. Specifically, we show that any $O(1)$-$\shuffledDP$ protocol must have communication complexity at least $\Omega(\sqrt{\log n})$. In fact, as formalized below, our lower bound holds even against any protocol that has an expected error of $O(n^{0.5 - \Omega(1)})$. Recall that the standard randomized response, which is an $e^{\eps}$-$\localDP$ protocol, incurs an error of $O_{\eps}(n^{0.5})$ and has communication complexity of only one bit. Thus, our lower bound states that, even to slightly improve upon this simple pure-DP protocol in terms of error, the communication complexity must blow up to $\Omega(\sqrt{\log n})$.

\begin{theorem} \label{thm:lb-main}
For any constants $\eps > 0$ and $\chi > 0$, there is no $\eps$-$\shuffledDP$ non-interactive protocol with communication complexity $o(\sqrt{\log n})$ that incurs $O\left(n^{0.5 - \chi}\right)$ error.
\end{theorem}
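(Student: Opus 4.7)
The plan is to prove the contrapositive: suppose there is an $\eps$-$\shuffledDP$ non-interactive protocol with per-user communication $m \log k = o(\sqrt{\log n})$ and expected error $\alpha = O(n^{0.5-\chi})$. I will derive a contradiction by showing that this forces a pair of random variables on $\Delta_{k,m}$ which simultaneously has total variation distance close to $1$ and bounded MGF ratio, contradicting a lower bound on $C_{\eps, \gamma}$.

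The reduction has two ingredients. First, accuracy of the protocol forces $d_{TV}(\bX^0, \bX^1)$ to be close to $1$: by the bound of Chan--Shi--Song that is stated in the paper as Theorem~\ref{thm:chan-local}, a protocol with expected error $\alpha$ for binary summation must satisfy $d_{TV}(\bX^0, \bX^1) \geq 1 - O(\alpha/\sqrt{n}) = 1 - n^{-\Omega(1)}$. Second, pure-DP forces $\bX^0$ and $\bX^1$ to have $e^\eps$-bounded MGF ratio. To see this, consider the two datasets $(0,\dots,0,0)$ and $(0,\dots,0,1)$; the shuffled output is the multiset sum of $n$ independent copies of $\bX^0$ (respectively $n-1$ copies of $\bX^0$ and one copy of $\bX^1$). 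For any $\mathbf{t} \in \mathbb{R}^k$, the post-processing (take the coordinate-wise sum, then apply $\exp\langle \mathbf{t},\cdot\rangle$) together with the multiplicative factorization of MGFs across independent summands turns the $e^\eps$ pointwise density bound into $e^{-\eps} M_{\bX^1}(\mathbf{t}) \leq M_{\bX^0}(\mathbf{t}) \leq e^\eps M_{\bX^1}(\mathbf{t})$, so this holds for every $\mathbf{t}$.

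Given these two facts, Theorem~\ref{th:pure_bin_agg_lb} follows from a lower bound on $C_{\eps,\gamma}$ of the form $C_{\eps,\gamma} \geq \Omega_\eps(\sqrt{\log(1/\gamma)})$ with $\gamma = n^{-\Omega(1)}$. I would prove this lower bound by LP duality. View $p_\bx := \Pr[\bX^0 = \bx]$ and $q_\bx := \Pr[\bX^1 = \bx]$, for $\bx \in \Delta_{k,m}$, as variables. The constraints are: $p_\bx, q_\bx \geq 0$; $\sum_\bx p_\bx = \sum_\bx q_\bx = 1$; and for every $\mathbf{t} \in \mathbb{R}^k$, the two linear inequalities $\sum_\bx e^{\langle \mathbf{t}, \bx\rangle}(p_\bx - e^\eps q_\bx) \leq 0$ and $\sum_\bx e^{\langle \mathbf{t}, \bx\rangle}(q_\bx - e^\eps p_\bx) \leq 0$. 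After fixing a sign pattern $\sigma : \Delta_{k,m} \to \{-1,+1\}$, the TV distance becomes the linear objective $\tfrac{1}{2}\sum_\bx \sigma(\bx)(p_\bx - q_\bx)$, and so we have a (semi-infinite) LP. By weak duality, exhibiting a feasible dual solution, namely a signed measure on $\mathbf{t}$-space plus a constant, whose value is at most $1 - \exp(-O_\eps(m^2 \log k))$, yields $d_{TV}(\bX^0,\bX^1) \leq 1 - \exp(-O_\eps(m^2 \log k))$, which combined with ingredient~(a) gives $m^2 \log k \geq \Omega_\eps(\log n)$, hence $m\log k \geq \Omega_\eps(\sqrt{\log n})$.

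The main obstacle, and the step that is really the heart of the lower bound, is the explicit construction of the dual solution. Combinatorially, the task is to pick a family of test MGF vectors $\mathbf{t}_1, \dots, \mathbf{t}_N \in \mathbb{R}^k$ and coefficients $\lambda_1, \dots, \lambda_N$ such that, for each $\bx \in \Delta_{k,m}$ with $\sigma(\bx) = +1$, the linear combination $\sum_i \lambda_i e^{\langle \mathbf{t}_i, \bx\rangle}$ dominates $\tfrac{1}{2}$ (and symmetrically for $\sigma(\bx) = -1$), while the total ``cost'' $(e^\eps - 1)\sum_i |\lambda_i|$ is only $1 - \exp(-O_\eps(m^2 \log k))$ after accounting for the normalization constraints. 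I expect this to boil down to a polynomial approximation problem on the simplex $\Delta_{k,m}$: roughly, one must cover the worst-case sign pattern $\sigma$ by a low-weight combination of exponentials, and the coordinate count $|\Delta_{k,m}| \leq \binom{m+k-1}{m} \leq (em/k + e)^k$ together with Chebyshev-style polynomial arguments scaled through the substitution $y_i = e^{t_i}$ should yield the $\exp(m^2 \log k)$ scaling. Finally, the tightness discussion in Section~\ref{sec:MGF_limit} (using truncated discrete Gaussians) serves as a sanity check that the bound cannot be improved by this LP approach.
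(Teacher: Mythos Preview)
Your high-level plan is exactly the paper's: combine Theorem~\ref{thm:chan-local} (accuracy $\Rightarrow$ $\SD(\bX^0,\bX^1)\geq 1-n^{-\Omega(1)}$), Lemma~\ref{lem:dp-mgf-ratio} (pure DP $\Rightarrow$ $e^\eps$-bounded MGF ratio), and the key Lemma~\ref{lem:bounded-moment-ratio-to-sd} ($e^\eps$-bounded MGF ratio on $\Delta_{k,m}$ $\Rightarrow$ $\SD\leq 1-2^{-O_\eps(m^2\log k)}$). Your derivation of the MGF ratio bound from the DP guarantee via factorization of MGFs across independent summands is also essentially the paper's argument.

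The gap is in the dual construction for Lemma~\ref{lem:bounded-moment-ratio-to-sd}. You correctly frame it as a semi-infinite LP and look for a dual certificate, but you then only speculate that it ``should boil down to a polynomial approximation problem on the simplex'' via ``Chebyshev-style polynomial arguments''. That speculation is vague, and it is not what the paper does. The paper's certificate is much more direct and has nothing to do with Chebyshev approximation: it chooses \emph{one} test vector per point of $\Delta_{k,m}$, namely $\tau(\by)=2\rho\,\by$ together with a scalar shift $\beta(\by)=-\rho(\|\by\|_2^2+m^2)$ for a suitable $\rho=\Theta(\eps+\log k)$. The identity
\[
\langle\tau(\by'),\by\rangle+\beta(\by')=\langle\tau(\by),\by\rangle+\beta(\by)-\rho\,\|\by-\by'\|_2^2
\]
then shows that the $\by$-term dominates the sum of all other terms by a factor $\geq 2e^\eps$, while the ``own'' exponent $\langle\tau(\by),\by\rangle+\beta(\by)$ lies in $[-O_\eps(m^2\log k),0]$. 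Plugging this into the MGF-ratio inequalities, summed over $\by'\in S$ and $\by'\notin S$, yields $\sum_{\by\in S}p_\by+\sum_{\by\notin S}p'_\by\geq 2^{-O_\eps(m^2\log k)}$ for every $S$, which is exactly the bound on $1-\SD$.

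So the missing idea is this quadratic-form dual (Lemma~\ref{lem:coeff-selection}): pick $\tau(\by)$ proportional to $\by$ itself, scale by $\rho\approx\log k$ so that integer lattice points at $\ell_2$-distance $\geq 1$ are separated by a geometric factor, and observe that the worst exponent is $-\rho\,\|\by\|_2^2\geq -\rho m^2$. Your accounting of the dual ``cost'' as $(e^\eps-1)\sum_i|\lambda_i|$ is also not quite the right bookkeeping; in the paper's argument the dual bound comes from comparing the ``own'' coefficient (of size $\geq e^{\zeta-1-\eps}$ with $\zeta=-O_\eps(m^2\log k)$) to the total ``cross'' coefficients (of size $\leq 2$), not from summing absolute weights.
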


We remark that Cheu et al.~\cite{CheuSUZZ19} proved that, with appropriate setting of parameters, the simple randomized response is an $(\eps, \delta)$-$\shuffledDP$ protocol and incurs an expected error of at most $O\left(\frac{\eps^2}{\log(1/\delta)}\right)$. Since the user's communication in their protocol is just a bit, our result also gives a communication complexity separation between pure-$\shuffledDP$ and approximate-$\shuffledDP$.

Another remark here is that our lower bound in Theorem~\ref{thm:lb-main} is roughly a square of the upper bound $O(\log n)$ obtained in our protocol for the previous section (for constant values of $\epsilon$). It remains an interesting open question to close this $O(\sqrt{\log n})$ gap. On this front, we will show in Section~\ref{sec:MGF_limit} that, for our specific approach,  $O(\sqrt{\log n})$ lower bound is the best one could hope for, which means that our lower bound in Theorem~\ref{thm:lb-main} is tight for the current approach.
% We will elaborate more on this in the proof overview below.

We first recall the following standard notion from probability theory.
\begin{definition} [Moment Generating Function]
\label{def:mgf}
Let $\bY$ be a random variable supported on (a subset of) $\R^k$ for some $k \in \N$. Its \emph{moment generating function (MGF)} is defined as $\bM_{\bY}(\bt) = \E[e^{\left<\bt, \bY\right>}]$.
\end{definition}
Throughout this section, we will be dealing with pairs of random variables whose MGFs are within a certain factor of each other. The following definition will be particularly handy.
\begin{definition} [Bounded MGF ratio]
\label{def:mgf-bounded-ratio}
We say that two random variables $\bY, \bY'$ supported on (a subset of) $\R^k$ have \emph{$e^{\eps}$-bounded MGF ratio} if and only if, for all $t \in \mathbb{R}^k$ we have that $\frac{\bM_{\bY}(\bt)}{\bM_{\bY'}(\bt)} \in [e^{-\eps}, e^\eps]$. \end{definition}
For two random variables $\bY, \bY'$, let $\SD(\bY, \bY')$ denote the \emph{total variation distance} between them.

Our proofs follow exactly the same outline as in Section~\ref{sec:technique-overview}. Specifically, the remainder of this section is organized as follows. In Section~\ref{sec:mgf}, we prove that a pure-$\shuffledDP$ protocol implies bounded MGF ratio condition. Then, in Section~\ref{sec:mgf-to-comm-bound}, we give a lower bound on $C_{\eps, \gamma}$ from Definition~\ref{def:lower-bound-question} and use it to prove our main theorem of this section (Theorem~\ref{thm:lb-main}). Finally, in Section~\ref{sec:MGF_limit}, we provide an example which shows that our lower bound for the question is tight.

\begin{remark}
 The lower bound of Theorem~\ref{thm:lb-main} has been stated for non-interactive protocols in the shuffled model that are \emph{symmetric}, i.e., protocols for which the local randomizer (given by $\bX^0$ and $\bX^1$ from Definition~\ref{def:non-interactive-protocols}) is identical for each user. However, the lower bound actually generalizes to protocols that are not necessarily symmetric (and in which the number of messages can vary from user to user). Indeed, one can show that it is not possible to obtain error $O(n^{0.5-\chi})$ unless, for at least $1 - o(1)$ fraction of the users, the communication complexity is $\Omega(\sqrt{\log n})$. We have omitted the formal statement for the sake of clarity of exposition, but the proof is almost identical, as the $e^{\epsilon}$-bounded MGF property (given by Lemma~\ref{lem:dp-mgf-ratio}) holds for \emph{any} user's $\bX^0$, $\bX^1$ (this can be seen by comparing two sequences that differ in the given user's input), and Theorem~\ref{thm:chan-local} also applies to the asymmetric case (with the guarantee that $1 - o(1)$ fraction of the users must have $\SD(\bX^0, \bX^1) \geq 1 - n^{-\Omega(1)}$).
\end{remark}

\subsection{Pure-DP Implies MGF Bounded Ratio}
\label{sec:mgf}

In this subsection, we will prove a general necessary (but not sufficient) condition on $\varepsilon$-DP protocols in terms of the MGFs of $\bX^0, \bX^1$. 
%
%Recall (fr Definition~\ref{def:mgf}) that for a vector random variable $\bY \in \mathbb{R}^k$, its moment generating function (MGF) is defined as $\bM_{\bY}(\bt) = \E_{\bY}[e^{\left<\bt, \bY\right>}]$ for all $\bt \in \mathbb{R}^k$. Recall also (from Definition~\ref{def:mgf-bounded-ratio}) that we say that two random variables $\bY, \bY'$ supported on (a subset of) $\R^k$ have $e^{\eps}$-bounded MGF ratio iff $\bM_{\bY}(\bt)/\bM_{\bY'}(\bt) \in [e^{-\eps}, e^{\eps}]$ for all $\bt \in \R^k$. 
A straightforward observation we will use is the following:
\begin{observation} \label{obs:mgf-ratio}
Let $\bY, \bY'$ be two random variables with the same support $\supp(\bY) = \supp(\bY') \subseteq \mathbb{R}^k$ such that $\frac{\Pr[\bY = \bv]}{\Pr[\bY' = \bv]} \in [e^{-\eps}, e^{\eps}]$. Then, $\bY, \bY'$ satisfies $e^{\eps}$-bounded MGF ratio.
\end{observation}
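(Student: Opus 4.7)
The plan is a one-line computation from the definition of the moment generating function. Since $\bY$ and $\bY'$ share the same (necessarily discrete, as point masses are meaningful) support $\MS := \supp(\bY) = \supp(\bY') \subseteq \mathbb{R}^k$, I would write both MGFs as sums over the same index set:
\[
\bM_{\bY}(\bt) \;=\; \sum_{\bv \in \MS} \Pr[\bY = \bv]\, e^{\langle \bt, \bv\rangle}, \qquad \bM_{\bY'}(\bt) \;=\; \sum_{\bv \in \MS} \Pr[\bY' = \bv]\, e^{\langle \bt, \bv\rangle}.
\]

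The crucial (and only) observation is that the weights $e^{\langle \bt, \bv\rangle}$ are strictly positive for every $\bt, \bv \in \mathbb{R}^k$. Hence the pointwise hypothesis $\Pr[\bY = \bv] \leq e^{\eps}\, \Pr[\bY' = \bv]$ may be multiplied by $e^{\langle \bt, \bv\rangle}$ without flipping the inequality, and summed termwise over $\bv \in \MS$ to yield $\bM_{\bY}(\bt) \leq e^{\eps}\, \bM_{\bY'}(\bt)$. The symmetric hypothesis $\Pr[\bY' = \bv] \leq e^{\eps}\, \Pr[\bY = \bv]$ gives the matching lower bound $\bM_{\bY}(\bt) \geq e^{-\eps}\, \bM_{\bY'}(\bt)$.

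There is essentially no obstacle; the only sanity check is that $\bM_{\bY'}(\bt)$ is nonzero so the ratio is well-defined, which is immediate since it is a sum of strictly positive terms over a nonempty support. Dividing the two inequalities then gives $\bM_{\bY}(\bt)/\bM_{\bY'}(\bt) \in [e^{-\eps}, e^{\eps}]$ for all $\bt \in \mathbb{R}^k$, which is precisely the definition of $e^{\eps}$-bounded MGF ratio from Definition~\ref{def:mgf-bounded-ratio}.
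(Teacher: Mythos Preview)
Your proof is correct and follows essentially the same approach as the paper: write both MGFs as sums over the common support and use that a termwise ratio bound in $[e^{-\eps}, e^{\eps}]$ with strictly positive weights $e^{\langle \bt, \bv\rangle}$ transfers to the ratio of the sums. The paper phrases it as ``each ratio of the corresponding terms lies in $[e^{-\eps}, e^{\eps}]$, hence so does the ratio of the sums,'' which is exactly your argument.
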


\begin{proof}
Consider any $\bt \in \R^k$. We have
$$\frac{\bM_{\bY_1}(\bt)}{\bM_{\bY_2}(\bt)} = \frac{\sum_{\by \in \R^m} \Pr[\bY_1 = \by] \cdot e^{\left<\bt, \by\right>}}{\sum_{\by \in \R^m} \Pr[\bY_2 = \by] \cdot e^{\left<\bt, \by\right>}}.$$
From our assumption, each ratio of the corresponding terms on the RHS lies in $[e^{-\varepsilon}, e^{\varepsilon}]$. Hence, we can conclude that $\bM_{\bY_1}(\bt)/\bM_{\bY_2}(\bt) \in [e^{-\varepsilon}, e^{\varepsilon}]$ as desired.
\end{proof}

In general, the converse of the above is not true, i.e., there are pairs of distributions whose probability ratios are not within the desired range but the MGF ratios are within the range (e.g., the distributions from our randomizer in the previous section). Nonetheless, we can show that, for any $\varepsilon$-DP protocol, $\bX^0, \bX^1$ must satisfy the weaker condition of $e^{\eps}$-bounded MGF ratio, as stated below. This is our main observation.

\begin{lemma} \label{lem:dp-mgf-ratio}
For any $\varepsilon$-DP protocol, $\bX^0, \bX^1$ must satisfy $e^{\eps}$-bounded MGF ratio.
\end{lemma}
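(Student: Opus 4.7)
The plan is to reduce the statement about the single-user distributions $\bX^0, \bX^1$ to the DP guarantee applied at a carefully chosen pair of neighboring datasets, and then to use the multiplicativity of the MGF under independence. Specifically, consider the neighboring datasets $X = (0,\dots,0,0)$ and $X' = (0,\dots,0,1)$. Let $\bX_1^0, \ldots, \bX_{n-1}^0, \bX_n^0$ be i.i.d.\ copies of $\bX^0$ and let $\bX_n^1$ be an independent copy of $\bX^1$. Since the shuffler's output, viewed as a histogram over the message alphabet $[k]$ (i.e., as an element of $\Z_{\geq 0}^k$), is invariant under permutations and therefore equals the coordinate-wise sum of the randomizers' histogram outputs, the shuffled outcomes on $X$ and $X'$ are distributed as
\[
\bY^0 := \bX_1^0 + \cdots + \bX_{n-1}^0 + \bX_n^0, \qquad \bY^1 := \bX_1^0 + \cdots + \bX_{n-1}^0 + \bX_n^1.
\]

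Next, I would invoke the $\varepsilon$-$\shuffledDP$ property on the singleton event $\{\bY = \bv\}$ for each $\bv \in \Z_{\geq 0}^k$. Since $\delta = 0$, this forces $\supp(\bY^0) = \supp(\bY^1)$ and yields $\Pr[\bY^0 = \bv]/\Pr[\bY^1 = \bv] \in [e^{-\eps}, e^{\eps}]$ for every $\bv$ in the common support. Applying Observation~\ref{obs:mgf-ratio} to $\bY^0, \bY^1$ then gives $\bM_{\bY^0}(\bt)/\bM_{\bY^1}(\bt) \in [e^{-\eps}, e^{\eps}]$ for every $\bt \in \R^k$.

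Finally, I would use the fact that the MGF of a sum of independent random variables factorizes as the product of the individual MGFs. Since the $n-1$ ``shared'' users contribute identically on both sides, we obtain
\[
\frac{\bM_{\bX^0}(\bt)}{\bM_{\bX^1}(\bt)} \;=\; \frac{(\bM_{\bX^0}(\bt))^{n-1} \cdot \bM_{\bX^0}(\bt)}{(\bM_{\bX^0}(\bt))^{n-1} \cdot \bM_{\bX^1}(\bt)} \;=\; \frac{\bM_{\bY^0}(\bt)}{\bM_{\bY^1}(\bt)} \;\in\; [e^{-\eps}, e^{\eps}],
\]
which is precisely the definition of $e^{\eps}$-bounded MGF ratio for $\bX^0, \bX^1$. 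The only mild subtlety to check is that the cancellation above is legitimate: this is fine because $\bX^0$ is supported on the finite set $\Delta_{k,m}$, so $\bM_{\bX^0}(\bt) = \E[e^{\langle \bt, \bX^0\rangle}]$ is a strictly positive, finite real number for every $\bt \in \R^k$. There is no real obstacle in the argument; the main conceptual move is the choice of neighboring pair that lets a single change of user input pull out cleanly via the product structure of MGFs.
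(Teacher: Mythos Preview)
Your proposal is correct and follows essentially the same approach as the paper: choose the neighboring datasets $(0,\dots,0,0)$ and $(0,\dots,0,1)$, use pure DP to get the pointwise ratio bound on the shuffled views $\bY^0,\bY^1$, apply Observation~\ref{obs:mgf-ratio}, and then cancel the common $(\bM_{\bX^0}(\bt))^{n-1}$ factor via multiplicativity of the MGF. Your added remark that $\bM_{\bX^0}(\bt)$ is finite and strictly positive (since $\bX^0$ is supported on the finite set $\Delta_{k,m}$) is a welcome clarification the paper leaves implicit.
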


To prove Lemma~\ref{lem:dp-mgf-ratio}, a key (well-known) multiplicative property of MGF that we need is that, if $\bY, \bY' \in \mathbb{R}^k$ are two independent random variables, then $\bM_{\bY + \bY'}(\bt) = \bM_{\bY}(\bt) \cdot \bM_{\bY'}(\bt)$ for all $\bt \in \mathbb{R}^k$. We this in mind, we can prove Lemma~\ref{lem:dp-mgf-ratio} as follows.

\begin{proof}[Proof of Lemma~\ref{lem:dp-mgf-ratio}]
Consider two sequences $0 \dots 00$ and $0 \dots 0 1$, each of length $n$. Let $\bY^0, \bY^1 \in \R^{k}$ denote the views of the shuffled output on the corresponding input vectors, where $\bY^0_j$ denote the number of $j$'s received by the analyzer for the input vector $0 \ldots 00$ and $\bY^1_j$ denote the number of $j$'s received by the analyzer for the input vector $0 \ldots 01$. Notice that $\bY^0$ is simply a sum of $n$ i.i.d. copies of $\bX^0$ and $\bY^1$ is a sum of $(n - 1)$ i.i.d. copies of $\bX^0$ and a copy of $\bX^1$. Observe also that $\varepsilon$-DP implies that $\bY^0, \bY^1$ satisfy the condition in Observation~\ref{obs:mgf-ratio}. From this, we have
\begin{align} \label{eq:div-mgf}
[e^{-\varepsilon}, e^{\varepsilon}] \ni \frac{\bM_{\bY^0}(\bt)}{\bM_{\bY^1}(\bt)} = \frac{(\bM_{\bX^0}(\bt))^n}{(\bM_{\bX^0}(\bt))^{n - 1} \cdot \bM_{\bX^1}(\bt)} = \frac{\bM_{\bX^0}(\bt)}{\bM_{\bX^1}(\bt)},
\end{align}
for all $\bt \in \mathbb{R}^k$. This completes our proof.
\end{proof}

\subsection{From MGF Bounded Ratio to Communication Lower Bound}
\label{sec:mgf-to-comm-bound}

We will now use the MGF bounded ratio property from Lemma~\ref{lem:dp-mgf-ratio} to show the communication complexity of any non-interactive protocol for summation that incurs small error. To do so, let us recall below a known result that any protocol that can perform binary summation to within a small error must have large statistical distance between $\bX^0$ and $\bX^1$. (In fact, the bound below holds even for $\localDP$ protocols.)

\begin{theorem}[\cite{ChanSS12}] \label{thm:chan-local}
Any non-interactive protocol that can perform binary summation to within an expected absolute error of $\alpha$ (even in the local model) must satisfy $\SD(\bX^0, \bX^1) \geq 1 - O\left(\frac{\alpha}{\sqrt{n}}\right)$.
\end{theorem}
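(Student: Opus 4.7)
The plan is to prove the contrapositive inequality $\alpha = \Omega(\min(\sqrt{n\beta},\,n\beta))$ where $\beta := 1 - \SD(\bX^0,\bX^1)$; either branch of the minimum implies the stated bound $\beta \leq O(\alpha/\sqrt{n})$, and the argument is actually slightly stronger than what is claimed when $\alpha \leq \sqrt n$.

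First, invoke the maximum-coupling characterization of total variation distance to realize each local randomizer as follows: there exist a common distribution $Q$ and disjointly supported distributions $P_0, P_1$ such that the law of $\bX^x$ equals $\beta Q + (1-\beta) P_x$. Equivalently, to sample from $\bX^x$ one draws $A \sim \mathrm{Ber}(\beta)$ and outputs a sample from $Q$ if $A = 1$ or from $P_x$ if $A = 0$. Because the protocol is local and non-interactive, the coupling coins $(A_i)_{i \in [n]}$ may be drawn jointly independent across users without altering the joint law of the analyzer's view $V$.

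Now place the uniform product prior $x_i \sim \mathrm{Ber}(1/2)$ on the dataset and let $A := \{i : A_i = 1\}$. The key observation is that, conditional on $(V, A)$: (i) each bit $x_i$ with $i \notin A$ is determined by $V$, since the corresponding $Y_i \sim P_{x_i}$ and $P_0, P_1$ are disjointly supported, so $S_{\bar A} := \sum_{i \notin A} x_i$ is measurable in $(V, A)$; and (ii) the hidden bits $\{x_i\}_{i \in A}$ remain i.i.d.\ uniform, since the corresponding outputs $Y_i \sim Q$ did not consult their inputs. Writing $S = S_A + S_{\bar A}$, we obtain $S_A \mid V, A \sim \mathrm{Bin}(|A|, 1/2)$ independent of $V$. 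For any analyzer estimator $\hat S = \hat S(V)$, setting $c := \hat S(V) - S_{\bar A}(V, A)$ yields
\[
\E|\hat S - S| \;=\; \E|c - S_A| \;\geq\; \E\Bigl[\min_{c' \in \R}\E[\,|c' - S_A|\mid V, A\,]\Bigr] \;=\; \E\bigl[\mathrm{MAD}(\mathrm{Bin}(|A|,1/2))\bigr],
\]
where $\mathrm{MAD}$ denotes the mean absolute deviation about the median.

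Using the classical estimate $\binom{k}{\lfloor k/2 \rfloor}/2^k = \Theta(1/\sqrt k)$, one has $\mathrm{MAD}(\mathrm{Bin}(k,1/2)) = \Theta(\sqrt k)$ for $k \geq 1$. Since $|A| \sim \mathrm{Bin}(n,\beta)$, a one-sided Chernoff bound gives $|A| \geq n\beta/2$ with constant probability when $n\beta$ exceeds an absolute constant, while the elementary bound $1 - (1-\beta)^n \geq \min(1, n\beta/2)$ handles the small-$n\beta$ regime. In both cases $\E\sqrt{|A|} \geq \Omega(\min(\sqrt{n\beta},\,n\beta))$, yielding the desired bound on $\alpha$. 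The main technical obstacle is the conditional-independence claim in the third paragraph: one must argue carefully that the analyzer's view, despite being a potentially complicated joint quantity, genuinely retains the uniform posterior on the hidden bits. This is where the non-interactive local structure enters crucially, since it lets each per-user coupling coin $A_i$ be drawn independently and each hidden bit $x_i$ with $A_i = 1$ contribute no information to $V$ whatsoever.
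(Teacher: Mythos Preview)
Your argument is correct and genuinely different from the paper's. The paper places a uniform prior on the inputs, computes the \emph{posterior} $\cF_{X_i}$ on each bit given the message $X_i$, symmetrizes via an independent copy $(b_i,b_i')\sim\cF_{X_i}$ to remove the estimator by the triangle inequality, and then applies the Marcinkiewicz--Zygmund inequality followed by the power-mean inequality to obtain $\alpha \ge \Omega(\sqrt n\,\beta)$. Your route instead invokes the optimal-coupling decomposition $\bX^x=\beta Q+(1-\beta)P_x$ with $\supp P_0\cap\supp P_1=\varnothing$, introduces the coupling coins $A_i$, and reduces to estimating a hidden $\mathrm{Bin}(|A|,1/2)$ after conditioning on $(V,A)$; this yields $\alpha \ge \Omega\bigl(\E\sqrt{|A|}\,\bigr)=\Omega\bigl(\min(\sqrt{n\beta},\,n\beta)\bigr)$. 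Your approach is more elementary (no Marcinkiewicz--Zygmund) and, as you note, strictly sharper: it gives $\beta\le O(\alpha^2/n)$ in the nontrivial regime $\alpha\le\sqrt n$, whereas the paper's power-mean step $\sqrt{\sum\xi_i^2}\ge n^{-1/2}\sum|\xi_i|$ loses exactly a factor $\sqrt\beta$ and only recovers $\beta\le O(\alpha/\sqrt n)$. The paper's approach, on the other hand, never commits to the extremal coupling and works directly with the posterior, which may generalize more readily when $P_0,P_1$ are not disjointly supported.

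One small numerical slip: the inequality $1-(1-\beta)^n \ge \min(1,\,n\beta/2)$ fails for $n\beta$ near $2$ (and is never $\ge 1$); the bound you want is $1-(1-\beta)^n \ge 1-e^{-n\beta}\ge (1-e^{-1})\min(1,n\beta)$, which suffices for $\E\sqrt{|A|}\ge\Omega(\min(\sqrt{n\beta},n\beta))$ once combined with your Chernoff step. This does not affect the argument.
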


Note that Theorem~\ref{thm:chan-local} is not inherently about privacy, but rather about the utility and the output distributions. We remark that the above fact was implicitly first shown in~\cite{ChanSS12} under a slightly different terminology. For completeness, we provide a full proof of Theorem~\ref{thm:chan-local} in Appendix~\ref{app:tv-vs-utility-proof}

%For a statement of a similar form to above, please refer to Lemma 3.15 of~\cite{anon-power}\footnote{Note that $\alpha$ in Lemma 3.15 of~\cite{anon-power} refers to the \emph{relative} error and hence the $\alpha$ in our Theorem~\ref{thm:chan-local} is $n$ times the $\alpha$ in Lemma 3.15 of~\cite{anon-power}.}.

Thanks to Lemma~\ref{lem:dp-mgf-ratio} and Theorem~\ref{thm:chan-local}, to prove our lower bound (Theorem~\ref{thm:lb-main}), it now suffices to show that, for any pair of random variables $\bY, \bY'$ whose supports lie in $\Delta_{k, m}$ that satisfies both $e^{\eps}$-bounded MGF ratio and if $\SD(\bY, \bY')$ is large, then $m \log k$ must be large. The main lemma of this subsection, which encapsulates a quantitative version of the aforementioned statement, is stated formally below.

\begin{lemma}\label{lem:bounded-moment-ratio-to-sd}
Let $\bY, \bY'$ be two random variables supported on $\Delta_{k, m}$ with $e^{\eps}$-bounded MGF ratio. Then, 
\begin{align*}
\SD(\bY, \bY') \leq 1 - 2^{-O_{\eps}(m^2 \log k)}.
\end{align*}
\end{lemma}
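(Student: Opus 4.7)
The plan is to formalize the claim as a (semi-infinite) linear program and prove the bound via weak duality. Introducing primal variables $p_{\bx} := \Pr[\bY = \bx]$ and $q_{\bx} := \Pr[\bY' = \bx]$ for $\bx \in \Delta_{k,m}$, the standard probability constraints and the $e^{\eps}$-bounded MGF ratio conditions $\sum_{\bx} (p_{\bx} - e^{\eps} q_{\bx}) e^{\langle \bt, \bx\rangle} \leq 0$ and $\sum_{\bx} (q_{\bx} - e^{\eps} p_{\bx}) e^{\langle \bt, \bx\rangle} \leq 0$ (for every $\bt \in \R^k$) are all linear in $(p,q)$. Since $\SD(\bY, \bY') = \max_{S \subseteq \Delta_{k,m}} \sum_{\bx \in S}(p_{\bx} - q_{\bx})$, it suffices to upper bound the LP value for each fixed subset $S$.

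Taking the dual, one obtains Lagrange multipliers $\alpha, \beta \in \R$ for the two distributional equalities and non-negative measures $\lambda_{\bt}, \mu_{\bt} \geq 0$ for the MGF constraints, with pointwise dual constraints on $\Delta_{k,m}$
\[
\alpha + \int (\lambda_{\bt} - e^{\eps} \mu_{\bt})\, e^{\langle \bt, \bx\rangle}\,d\bt \geq \bone_S(\bx), \qquad \beta + \int (\mu_{\bt} - e^{\eps} \lambda_{\bt})\, e^{\langle \bt, \bx\rangle}\,d\bt \geq -\bone_S(\bx),
\]
and objective $\alpha + \beta$ to be minimized. By weak duality, it suffices to exhibit a dual-feasible quadruple of value at most $1 - 2^{-O_{\eps}(m^2 \log k)}$.

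For the construction, I plan to exploit the fact that on the finite grid $\{0,1,\ldots,m\}^k \supseteq \Delta_{k,m}$, every real-valued function is a polynomial of coordinate-wise degree at most $m$, and hence (by a tensor-product Vandermonde argument, after the substitution $u_j = e^{t_j}$) can be written as a linear combination of $(m+1)^k$ exponentials $\bx \mapsto e^{\langle \bt_{\vec{i}}, \bx\rangle}$ indexed by any tensor grid of $m+1$ distinct nodes per coordinate. Applied to $f = \bone_S$, with the nodes placed inside a small window around the origin, standard bounds on the inverse Vandermonde matrix give a representation $\bone_S(\bx) = c_0 + \sum_{\vec{i} \neq 0} c_{\vec{i}} e^{\langle \bt_{\vec{i}}, \bx\rangle}$ with $\sum_{\vec{i} \neq 0} |c_{\vec{i}}| \leq 2^{O(m^2 \log k)}$ and $\|\bt_{\vec{i}}\|_{\infty} = O(1)$. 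Splitting each $c_{\vec{i}}$ into positive and negative parts and routing them (with the appropriate $e^{\eps}$ rescalings) into $\lambda_{\bt_{\vec{i}}}$ and $\mu_{\bt_{\vec{i}}}$, and then choosing $\alpha, \beta$ to absorb the constant $c_0$ and the residuals, should yield a dual-feasible solution whose value, after combining the bounded coefficient sum with the slack $e^{\eps} - 1$ from the MGF constraints and the crude bound $\bM_{\bY}(\bt) \leq e^{m\|\bt\|_{\infty}}$, is $1 - 2^{-O_{\eps}(m^2 \log k)}$.

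The main obstacle I expect lies in this final assembly step: the $-e^{\eps}\mu_{\bt}$ terms in the first dual constraint are coupled to the $-e^{\eps}\lambda_{\bt}$ terms in the second, so a naive splitting of the signed polynomial into sign parts does not immediately yield feasibility. One must carefully balance the two dual measures pointwise against each other, distribute the constant term $c_0$ consistently on both the $\bone_S$ and $-\bone_S$ sides, and track all constants through the Vandermonde estimates to ensure the resulting dual value lands strictly below $1$ by exactly the target gap $2^{-O_{\eps}(m^2 \log k)}$ rather than a weaker amount.
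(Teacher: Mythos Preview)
Your high-level framework (LP duality with dual multipliers attached to the MGF-ratio constraints) matches the paper's approach exactly. The genuine gap is in your proposed dual construction, specifically the claimed coefficient bound $\sum_{\vec{i}\neq 0}|c_{\vec{i}}|\le 2^{O(m^2\log k)}$ coming from the tensor-product Vandermonde interpolation of $\bone_S$. That bound cannot hold in the regime you need. Your tensor grid has $(m+1)^k$ nodes, and even with optimally spaced one-dimensional nodes the $\ell_1$-norm of the inverse tensor Vandermonde is at least the product of the $k$ one-dimensional norms; this is $2^{\Omega(k)}$, which for $k\gg m^2$ swamps $2^{O(m^2\log k)}$. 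In other words, you are interpolating on the full cube $\{0,\dots,m\}^k$ rather than on the simplex $\Delta_{k,m}$, and the cube is exponentially larger in $k$. The obstacle you flagged (balancing the coupled $-e^{\eps}\mu$ and $-e^{\eps}\lambda$ terms) is real but secondary; the scheme already fails at the coefficient estimate.

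The paper sidesteps this by \emph{not} attempting to interpolate $\bone_S$ exactly. Instead it assigns one dual vector $\tau(\by)=2\rho\,\by$ to each $\by\in\Delta_{k,m}$ (so only $|\Delta_{k,m}|$ constraints are used) with $\rho=\Theta_\eps(\log k)$, and proves a diagonal-dominance estimate: after scaling by $e^{\beta(\by)}$ with $\beta(\by)=-\rho(\|\by\|_2^2+m^2)$, the exponential $e^{\langle\tau(\by),\bx\rangle+\beta(\by)}=e^{\rho(\|\bx\|_2^2-\|\bx-\by\|_2^2)-\rho m^2}$ is so sharply peaked at $\bx=\by$ that the total off-diagonal contribution over all $\by'\ne\by$ is at most $\tfrac{1}{2e^{\eps}}$ times the diagonal term. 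This Gaussian-peaking construction is what produces the factor $m^2$ (from $\|\by\|_2^2\le m^2$) times $\log k$ (from $\rho$), and it directly handles the sign-coupling issue you were worried about: for $\by'\in S$ one adds the constraint $\bM_{\bY}(\tau(\by'))\ge e^{-\eps}\bM_{\bY'}(\tau(\by'))$, for $\by'\notin S$ the opposite one, and the dominance inequality shows each $p_\by$ with $\by\notin S$ (and each $p'_\by$ with $\by\in S$) receives net coefficient at most $-e^{\zeta-1-\eps}$ with $\zeta=-O_\eps(m^2\log k)$, while the remaining coefficients are at most $2$. Combining with $\sum p_\by=\sum p'_\by=1$ gives the bound. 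If you want to salvage your route, you would need to replace the tensor Vandermonde by an interpolation supported on $\Delta_{k,m}$ itself with controlled Lebesgue constant; the paper's peaking argument is effectively a slick way to do exactly that.
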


Before we prove Lemma~\ref{lem:bounded-moment-ratio-to-sd}, we note that plugging together Lemma~\ref{lem:bounded-moment-ratio-to-sd}, Lemma~\ref{lem:dp-mgf-ratio}, and Theorem~\ref{thm:chan-local} immediately gives Theorem~\ref{thm:lb-main}, as follows.

\begin{proof}[Proof of Theorem~\ref{thm:lb-main}]
Consider any $\eps$-$\shuffledDP^1$ protocol that performs binary summation to within an expected absolute error of $\alpha := O(n^{0.5 - \gamma})$. From Observation~\ref{obs:mgf-ratio}, $\bX^0, \bX^1$ must satisfy $e^{\eps}$-MGF bounded ratio. Applying Lemma~\ref{lem:bounded-moment-ratio-to-sd} implies that
\begin{align*}
\SD(\bX^0, \bX^1) \leq 1 - 2^{-O_{\eps}(m^2 \log k)}.
\end{align*}
Furthermore, since the expected error of the protocol is at most $\alpha = O(n^{0.5 - \chi})$, Theorem~\ref{thm:chan-local} implies that
\begin{align*}
\SD(\bX^0, \bX^1) \geq 1 - O\left(\frac{\alpha}{\sqrt{n}}\right) = 1 - O\left(\frac{1}{n^{\chi}}\right).
\end{align*}
Combining the above two inequalities, we must have $m^2 \log k \geq \Omega_{\eps, \chi}(\log n)$, which implies that the communication complexity $m \log k$ must be at least $\Omega_{\eps, \chi}(\sqrt{\log n})$ as desired.
\end{proof}

\paragraph{Dual Approach and Proof of Lemma~\ref{lem:bounded-moment-ratio-to-sd}.} We devote the rest of this subsection to the proof of Lemma~\ref{lem:bounded-moment-ratio-to-sd}.
For notational convenience, we use $p_{\by}$ and $p'_{\by}$ to denote $\Pr[\bY = \by]$ and $\Pr[\bY' = \by]$ respectively. %Furthermore, we henceforth assume that $k \geq 2$, as Lemma~\ref{lem:bounded-moment-ratio-to-sd} trivially holds for the case $k = 1$

Before we formalize the proof below, let us first present an informal overview of the proof. 
Recall that $1 - \SD(\bY, \bY')$ is equal to $\sum_{\by \in \Delta_{k, m}} \min\{p_{\by}, p'_{\by}\} = $ $\min_{S \subseteq \Delta_{k, m}} \left\{\sum_{\by \in S} p_{\by} + \sum_{\by \in \Delta_{k, m} \setminus S} p'_{\by} \right\}$. Hence, it suffices for us to show that, for every $S \subseteq \Delta_{k, m}$, we have 
\begin{align} \label{eq:obj-small}
\sum_{\by \in S} p_{\by} + \sum_{\by \in \Delta_{k, m} \setminus S} p'_{\by} \geq 2^{-O_{\eps}(m^2 \log k)}.
\end{align}
We will give a ``dual certificate'' for this statement. Notice that since the total probability of each of $\bY, \bY'$ must be one, we have $\sum_{\by \in \Delta_{k, m}} p_{\by} = 1$ and $\sum_{\by \in \Delta_{k, m}} p'_{\by} = 1$.
Of course, we also have the non-negativity constraints that $p_{\by}, p'_{\by} \geq 0$ for all $\by \in \Delta_{k, m}$.

Furthermore, the $e^{\eps}$-bounded MGF ratio property between $\bY$ and $\bY'$ simply translates to the following linear inequalities for all $\bt \in \R^k$:
\begin{align} \label{eq:moment-raio-ineq-tmp1}
\sum_{\by \in \Delta_{k, m}} e^{\left<\bt, \by\right>} \cdot p'_{\by} - \sum_{\by \in \Delta_{k, m}} e^{\left<\bt, \by\right> - \eps} \cdot p_{\by} \geq 0,
\end{align}
and 
\begin{align} \label{eq:moment-raio-ineq-tmp2}
\sum_{\by \in \Delta_{k, m}} e^{\left<\bt, \by\right>} \cdot p_{\by} - \sum_{\by \in \Delta_{k, m}} e^{\left<\bt, \by\right> - \eps} \cdot p'_{\by} \geq 0.
\end{align}

Hence, we simply have a system of infinite) linear inequalities and we would like to certify a particular linear inequality~\eqref{eq:obj-small}. We may do this by writing~\eqref{eq:obj-small} as a linear combination of the constraints. 

As a wishful thinking, if we could somehow ``extract'' only the $p_{\by}$ and $p'_{\by}$ terms from~\eqref{eq:moment-raio-ineq-tmp1} and~\eqref{eq:moment-raio-ineq-tmp2}, then we would be done because we would simply have $e^{\eps} \cdot p_{\by} \geq p'_{\by} \geq e^{-\eps} \cdot p_{\by}$ which can easily be combined with the total probability and non-negativity constraints to get a good bound on $\sum_{\by \in S} p_{\by} + \sum_{\by \in \Delta_{k, m} \setminus S} p'_{\by}$. Of course, such extraction is not possible since, for any value $\bt$ we plug into~\eqref{eq:moment-raio-ineq-tmp1} and~\eqref{eq:moment-raio-ineq-tmp2}, we always get non-zero coefficients for all vectors in $\Delta_{k, m}$, not just $\by$.

With the above in mind, our goal is now to select one $\bt = \tau(\by)$ for each $\by$ in such a way that the coefficient of $\by$ from its own inequality (i.e., $\bt = \tau(\by)$) ``dominates'' the coefficients of $\by$ from other inequalities (i.e., $\bt = \tau(\by')$ for any $\by' \ne \by$). A more precise version of the statement is proved below. Note here that $e^{\beta(\by)}$ here should be thought of as the ``scaling factor'' for the inequality for $\by$.

\begin{lemma} \label{lem:coeff-selection}
For any $\eps > 0$, there exists a mapping $\tau: \Delta_{k, m} \to \mathbb{R}^k$ and $\beta: \Delta_{k, m} \to \mathbb{R}$ such that the following hold for all $\by \in \Delta_{k, m}$:
\begin{align} \label{eq:range-coeff-exponent}
0 \geq \left<\tau(\by), \by\right> + \beta(\by) \geq \zeta := -O_{\eps}(m^2 \log k),
\end{align}
and
\begin{align} \label{eq:coeff-dominate}
e^{\left<\tau(\by), \by\right>  + \beta(\by)} \geq 2e^{\eps} \cdot \sum_{\by' \in \Delta_{k, m} \setminus \{\by\}} e^{\left<\tau(\by'), \by\right>  + \beta(\by')}.
\end{align}
\end{lemma}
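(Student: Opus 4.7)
The plan is to manufacture $\tau$ and $\beta$ from a single convex potential $\phi: \R^k \to \R$ via the Legendre-style identification $\tau(\by) := \nabla \phi(\by)$ and $\beta(\by) := \phi(\by) - \langle \nabla \phi(\by), \by \rangle$. A direct computation then gives $\langle \tau(\by), \by \rangle + \beta(\by) = \phi(\by)$ and, for any $\by' \in \R^k$,
\[
\langle \tau(\by'), \by \rangle + \beta(\by') \;=\; \phi(\by) \;-\; D_\phi(\by, \by'),
\]
where $D_\phi(\by, \by') := \phi(\by) - \phi(\by') - \langle \nabla \phi(\by'), \by - \by' \rangle \geq 0$ is the Bregman divergence of $\phi$. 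Under this recipe, condition~\eqref{eq:range-coeff-exponent} reduces to the range requirement $\phi(\by) \in [-\zeta, 0]$ on $\Delta_{k,m}$, and condition~\eqref{eq:coeff-dominate} becomes the Gaussian-type tail inequality $\sum_{\by' \neq \by} e^{-D_\phi(\by, \by')} \leq \frac{1}{2 e^\eps}$.

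I will take the quadratic potential $\phi(\by) := \frac{\alpha}{2}(\|\by\|^2 - m^2)$, where $\alpha = O_\eps(\log k)$ is to be chosen, giving $\tau(\by) = \alpha \by$, $\beta(\by) = -\frac{\alpha}{2}\|\by\|^2 - \frac{\alpha}{2} m^2$, and $D_\phi(\by, \by') = \frac{\alpha}{2}\|\by - \by'\|^2$. On $\Delta_{k,m}$ one has $\frac{m^2}{k} \leq \|\by\|^2 \leq m^2$ (the lower bound is Cauchy--Schwarz, and the upper bound follows from $\sum_i y_i^2 \leq (\max_i y_i)\sum_i y_i \leq m \cdot m$), so $\phi(\by) \in [-\frac{\alpha}{2} m^2, 0]$ and $\zeta = \frac{\alpha}{2} m^2$. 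For the tail sum, substituting $\bu := \by' - \by$ and dropping the constraints that force $\by' \in \Delta_{k,m}$ (which only enlarges the sum) yields
\[
\sum_{\by' \neq \by} e^{-(\alpha/2) \|\by' - \by\|^2} \;\leq\; \sum_{\bu \in \Z^k \setminus \{\mathbf{0}\}} e^{-(\alpha/2) \|\bu\|^2} \;=\; \theta(\alpha/2)^k - 1,
\]
where $\theta(t) := \sum_{n \in \Z} e^{-t n^2}$, and the product factorization uses the separability of the Gaussian kernel across coordinates.

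The only step with real content is bounding the Jacobi-theta product. Using $e^{-tn^2} \leq e^{-t|n|}$ for $|n| \geq 1$ and summing a geometric series gives $\theta(t) \leq 1 + 4 e^{-t}$ for $t$ bounded away from $0$, so $\theta(\alpha/2)^k - 1 \leq (1 + 4 e^{-\alpha/2})^k - 1 \leq e^{4 k e^{-\alpha/2}} - 1$. Setting $\alpha := 2 \log(C k e^\eps)$ for a sufficiently large absolute constant $C$ makes the right-hand side at most $\frac{1}{2 e^\eps}$, verifying~\eqref{eq:coeff-dominate}. With this choice $\alpha = O(\log k) + O(\eps) = O_\eps(\log k)$, and so $\zeta = \frac{\alpha}{2} m^2 = O_\eps(m^2 \log k)$, as required. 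The main obstacle I expect is this Gaussian-lattice estimate: it is precisely the product factorization of the Gaussian that allows $\alpha$ to grow only logarithmically in $k$, avoiding the $\Omega(k)$ blow-up that a naive coordinate-by-coordinate union bound would produce, and it is this product structure that is responsible for the final $m^2 \log k$ (rather than $m^2 k$) scaling.
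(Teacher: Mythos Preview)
Your proof is correct and is essentially the same as the paper's: both choose the quadratic potential $\tau(\by)=\alpha\by$, $\beta(\by)=-\tfrac{\alpha}{2}\|\by\|^2-\tfrac{\alpha}{2}m^2$ (the paper writes $\rho=\alpha/2$), reduce~\eqref{eq:coeff-dominate} to bounding $\sum_{\by'\ne\by}e^{-(\alpha/2)\|\by-\by'\|^2}$, and show that $\alpha=O_\eps(\log k)$ suffices. The only cosmetic difference is the tail estimate---the paper groups by the value of $\|\by-\by'\|^2$ and counts integer vectors of a given squared norm, whereas you relax to $\Z^k$ and use the theta-function product factorization; both yield the same $O_\eps(\log k)$ scaling.
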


\begin{proof}
Let $\rho = \eps + 10 \ln(k + 1) + 10$.
We pick $\tau(\by) = \rho \cdot 2\by$ and $\beta(\by) = \rho \cdot \left(-\|\by\|_2^2 - m^2\right)$. It is obvious to see that~\eqref{eq:range-coeff-exponent} holds.
Next, to prove~\eqref{eq:coeff-dominate}, let us first observe the following identity:
\begin{align}
\left<\tau(\by'), \by\right> + \beta(\by') &= \rho\left(2\left<\by', \by\right> - \|\by'\|^2 - m^2\right) \nonumber \\
&= \rho\left(\|\by\|_2^2 - \|\by - \by'\|_2^2 - m^2\right) \nonumber \\
&= \left<\tau(\by), \by\right> + \beta(\by) - \rho \cdot \|\by - \by'\|_2^2. \label{eq:quad-form}
\end{align}
We may bound the right hand side of~\eqref{eq:coeff-dominate} as
\begin{align}
\sum_{\by' \in \Delta_{k, m} \setminus \{\by\}} e^{\left<\tau(\by'), \by\right>  + \beta(\by')} &\overset{\eqref{eq:quad-form}}{=} e^{\left<\tau(\by), \by\right>  + \beta(\by)} \cdot \sum_{\by' \in \Delta_{k, m} \setminus \{\by\}} e^{-\rho \|\by - \by'\|_2^2} \nonumber \\
&= e^{\left<\tau(\by), \by\right>  + \beta(\by)} \cdot \left( \sum_{i=1}^{2m^2} e^{-\rho i} \cdot |\{\by' \in \Delta_{k, m} \mid \|\by - \by'\|_2^2 = i\}|\right). \label{eq:sum-quad-expand}
\end{align}
We can bound $|\{\by' \in \Delta_{k, m} \mid \|\by - \by'\|_2^2 = i\}|$ as follows.
\begin{align}
|\{\by' \in \Delta_{k, m} \mid \|\by - \by'\|_2^2 = i\}|
&\leq |\{\bz \in \Z^k \mid \|\bz\|_2^2 = i\}| \nonumber \\
&\leq 2^i \cdot |\{\bz \in \Z^k \mid \|\bz\|_2^2 = i, z_1, \dots, z_k \geq 0\}| \nonumber \\
&\leq 2^i \cdot |\{(x_1, \dots, x_k) \in \Z^k_{\geq 0} \mid x_1 + \cdots + x_k = i\}| \nonumber \\
&= 2^i \cdot \binom{k + i - 1}{i} \nonumber \\
&\leq 2^i \left(\frac{e(k + i - 1)}{i}\right)^i \nonumber \\
&\leq (2e(k + 1))^i, \label{eq:int-vector-length-bound}
\end{align}
where the second inequality comes from the fact that there are at most $i$ non-zero coordinates of $\bz$ and there are two choices of sign for those coordinates.

Plugging~\eqref{eq:int-vector-length-bound} into~\eqref{eq:sum-quad-expand}, we have
\begin{align*}
\sum_{\by' \in \Delta_{k, m} \setminus \{\by\}} e^{\left<\tau(\by'), \by\right>  + \beta(\by')} &\leq e^{\left<\tau(\by), \by\right>  + \beta(\by)} \cdot \left( \sum_{i=1}^{2m^2} \left(e^{-\rho} \cdot 2e(k + 1)\right)^i \right) \\
(\text{From our choice of } \rho) &\leq e^{\left<\tau(\by), \by\right>  + \beta(\by)} \cdot \left( \sum_{i=1}^{2m^2} \left(\frac{1}{10e^{\eps}}\right)^i \right) \\
&\leq e^{\left<\tau(\by), \by\right> + \beta(\by)} \cdot \frac{1}{2 e^{\eps}},
\end{align*}
as desired.
\end{proof}

With Lemma~\ref{lem:coeff-selection} ready, we can now prove Lemma~\ref{lem:bounded-moment-ratio-to-sd}.

\begin{proof}[Proof of Lemma~\ref{lem:bounded-moment-ratio-to-sd}]
Let $\bY, \bY'$ be two random variables supported on (subsets of ) $\Delta_{k, m}$. Suppose that $\bY, \bY'$ satisfy $e^{\eps}$-bounded MGF ratio. Let $\tau, \beta$ be as in Lemma~\ref{lem:coeff-selection}.

Consider any set $S \subseteq \Delta_{k, m}$. For every $\by' \in S$, $\bM_{\bY}(\tau(\by')) \geq e^{-\eps} \cdot \bM_{\bY'}(\tau(\by'))$ is equivalent to
\begin{align} \label{eq:ineq-in-S}
\sum_{\by \in \Delta_{k, m}} e^{\left<\tau(\by'), \by\right> -  \beta(\by')} \cdot p_{\by} - \sum_{\by \in \Delta_{k, m}} e^{\left<\tau(\by'), \by\right> -  \beta(\by') - \eps} \cdot p'_{\by} \geq 0.
\end{align}
Similarly, for every $\by' \in \Delta_{k, m} \setminus S$, $\bM_{\bY'}(\tau(\by')) \geq e^{-\eps} \cdot \bM_{\bY}(\tau(\by'))$ can be rearranged as
\begin{align} \label{eq:ineq-outside-S}
\sum_{\by \in \Delta_{k, m}} e^{\left<\tau(\by'), \by\right> -  \beta(\by')} \cdot p'_{\by} - \sum_{\by \in \Delta_{k, m}} e^{\left<\tau(\by'), \by\right> -  \beta(\by') - \eps} \cdot p_{\by} \geq 0.
\end{align}
By adding~\eqref{eq:ineq-in-S} for all $\by' \in S$ with~\eqref{eq:ineq-outside-S} for all $\by' \in \Delta_{k, m} \setminus S$, we have
\begin{align}
\sum_{\by \in \Delta_{k, m}} \left(\sum_{\by' \in S} e^{\left<\tau(\by'), \by\right> - \beta(\by' )} - \sum_{\by' \in \Delta_{k, m} \setminus S} e^{\left<\tau(\by'), \by\right> - \beta(\by' )  - \eps} \right) p_{\by} &\nonumber \\
+ \sum_{\by \in \Delta_{k, m}} \left(\sum_{\by' \in \Delta_{k, m} \setminus S} e^{\left<\tau(\by'), \by\right> - \beta(\by' )} - \sum_{\by' \in S} e^{\left<\tau(\by'), \by\right> - \beta(\by' )  - \eps} \right) p'_{\by} &\geq 0. \label{eq:non-negative-comb}
\end{align}

Now, for all $\by \in S$, we can upper bound the coefficient of $p'_{\by}$ in~\eqref{eq:non-negative-comb} by
\begin{align*}
&\sum_{\by' \in \Delta_{k, m} \setminus S} e^{\left<\tau(\by'), \by\right> - \beta(\by' ) } - \sum_{\by' \in S} e^{\left<\tau(\by'), \by\right> - \beta(\by' ) - \eps} \\
&\leq \sum_{\by' \in \Delta_{k, m} \setminus \{\by\}} e^{\left<\tau(\by'), \by\right> - \beta(\by' ) } - e^{\left<\tau(\by), \by\right> - \beta(\by) - \eps} \\
&\overset{\eqref{eq:coeff-dominate}}{\leq} -0.5 e^{\left<\tau(\by), \by\right> - \beta(\by) - \eps} \\
&\overset{\eqref{eq:range-coeff-exponent}}{\leq} -e^{\zeta-1-\eps}.
\end{align*}
Similarly, for all $\by \in \Delta_{k, m} \setminus S$, the coefficient of $p_{\by}$ in~\eqref{eq:non-negative-comb} is at most $- e^{\zeta-1-\eps}$.

Moreover, for all $\by \in S$, we can upper bound the coefficient in~\eqref{eq:non-negative-comb} of $p_{\by}$ by
\begin{align*}
\sum_{\by' \in S} e^{\left<\tau(\by'), \by\right> - \beta(\by' )} - \sum_{\by' \in \Delta_{k, m} \setminus S} e^{\left<\tau(\by'), \by\right> - \beta(\by' )  - \eps} &\leq \sum_{\by' \in \Delta_{k, m}} e^{\left<\tau(\by'), \by\right> - \beta(\by' ) } \\
&\overset{\eqref{eq:coeff-dominate}}{\leq} \left(1 + \frac{1}{2e^{\eps}}\right) e^{\left<\tau(\by), \by\right> - \beta(\by)} \\
&\overset{\eqref{eq:range-coeff-exponent}}{\leq} 2.
\end{align*}
Similarly, for all $\by \in S$, the coefficient of $p'_{\by}$ in~\eqref{eq:non-negative-comb} is at most 2.

Plugging these back into~\eqref{eq:non-negative-comb}, we have
\begin{align*}
0 \leq 2 \left(\sum_{\by \in S} p_{\by} + \sum_{\by \in \Delta_{k, m} \setminus S} p'_{\by}\right) - e^{\zeta - 1 - \eps} \left(\sum_{\by \in S} p'_{\by} + \sum_{\by \in \Delta_{k, m} \setminus S} p_{\by}\right).
\end{align*}
Now, using the fact that $\sum_{\by \in \Delta_{k, m}} p_{\by} = \sum_{\by \in \Delta_{k, m}} p'_{\by} = 1$, we can further simplify the above to
\begin{align*}
2 e^{\zeta - 1 - \eps} &\leq (2 + e^{\zeta - 1 - \eps})\left(\sum_{\by \in S} p_{\by} + \sum_{\by \in \Delta_{k, m} \setminus S} p'_{\by}\right).
\end{align*}
This means that
\begin{align*}
\left(\sum_{\by \in S} p_{\by} + \sum_{\by \in \Delta_{k, m} \setminus S} p'_{\by}\right) \geq \frac{2 e^{\zeta - 1 - \eps}}{2 + e^{\zeta - 1 - \eps}} \geq 2^{-O_{\eps}(m^2 \log k)},
\end{align*}
where the second inequality follows from~\eqref{eq:range-coeff-exponent}.
This establishes~\eqref{eq:obj-small} and hence we have $\SD(\bY, \bY') \leq 1 - 2^{-O_{\eps}(m^2 \log k)}$ as desired.
\end{proof}

\subsection{Limitations of the Lower Bound Approach}\label{sec:MGF_limit}

In this subsection, we argue that the bound we achieve in Lemma~\ref{lem:bounded-moment-ratio-to-sd} is essentially tight, even for $k = 2$. In other words, our approach of using only bounded MGF ratio property and the total variation distance bound from Theorem~\ref{thm:chan-local} cannot give any lower bound better than $O_{\eps}(\sqrt{\log n})$. Specifically, the main lemma of this section is stated below. 

\begin{lemma} \label{lem:gaussian-mgf-bounded-two-dim}
For every $\epsilon > 0$ and $\gamma \in (0, 0.5)$, there exist two random variables $\bY, \bY'$ supported on (subsets of) $\Delta_{2, m}$ for some $m = O_{\eps}(\sqrt{\log(1/\gamma)})$ such that $\SD(\bY, \bY') \geq 1 - \gamma$ and that $\bY, \bY'$ satisfy the $e^{\eps}$-bounded MGF ratio property.
\end{lemma}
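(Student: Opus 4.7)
}
The problem reduces to one dimension: $\Delta_{2, m}$ is in bijection with $\{0, 1, \ldots, m\}$ via $(z, m - z) \leftrightarrow z$, and $\bM_{\bY}(\bt) = e^{t_2 m} M_{\bY_1}(t_1 - t_2)$, so an $e^\eps$-bounded 2D MGF ratio is equivalent to an $e^\eps$-bounded 1D MGF ratio on $M_{\bY_1}, M_{\bY'_1}$, while $\SD$ is unchanged. So it suffices to produce $Z, Z'$ on $\{0, 1, \ldots, m\}$ with $\SD(Z, Z') \geq 1 - \gamma$ and $e^{\eps}$-bounded 1D MGF ratio, for some $m = O_\eps(\sqrt{\log(1/\gamma)})$.

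The proposed construction is a parity-based modulation of a truncated discrete Gaussian. Pick a variance $\sigma^2 = \sigma_\eps^2$ depending only on $\eps$, and an even integer $m$ of size $\Theta_\eps(\sqrt{\log(1/\gamma)})$. Let $\bar p_z \propto e^{-(z-m/2)^2/(2\sigma^2)}$ on $\{0, \ldots, m\}$ be the truncated discrete Gaussian centered at $m/2$. Choose a ``bump'' function $\varphi: \{0, \ldots, m\} \to [0, 1]$ equal to $1$ on a central core $S$ (e.g.\ $[\Delta, m - \Delta]$ for a buffer width $\Delta = \Theta(\sigma\sqrt{\log(1/\eps)})$) and smoothly tapering to $0$ near $\{0, m\}$. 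Set $\alpha = 1 - \gamma$, $f(z) := (-1)^z \varphi(z)$, and $\bar\mu := \sum_z \bar p_z f(z)$, and define
\[
\Pr[Z = z] = \frac{\bar p_z\,(1 + \alpha f(z))}{1 + \alpha\bar\mu}, \qquad \Pr[Z' = z] = \frac{\bar p_z\,(1 - \alpha f(z))}{1 - \alpha\bar\mu}.
\]
Nonnegativity holds since $|\alpha f(z)| \leq 1$, provided $|\bar\mu| < 1$ (verified below).

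A direct expansion of $p_z - p'_z$ yields $\SD(Z, Z') = \tfrac{\alpha}{1 - \alpha^2 \bar\mu^2} \cdot \bigl(\sum_z \bar p_z \varphi(z) - \bar\mu^2 + |\bar\mu| \sum_z \bar p_z (1-\varphi(z))\bigr)$. Choosing the core $S$ (equivalently, $\varphi$) so that $\sum_z \bar p_z \varphi(z) \geq 1 - \gamma/10$ (possible since the Gaussian mass outside an interval of half-width $\Omega(\sigma\sqrt{\log(1/\gamma)})$ about $m/2$ is at most $\gamma/10$) and taking $\alpha = 1 - \gamma$ gives $\SD(Z, Z') \geq 1 - \gamma$ once $|\bar\mu|$ is small.

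For the MGF ratio, one computes
\[
\frac{M_Z(t)}{M_{Z'}(t)} = \frac{1 - \alpha\bar\mu}{1 + \alpha\bar\mu}\cdot \frac{1 + \alpha \eta(t)}{1 - \alpha \eta(t)}, \qquad \eta(t) := \frac{G(t)}{\bar M(t)} = \frac{\sum_z \bar p_z (-1)^z \varphi(z) e^{tz}}{\sum_z \bar p_z e^{tz}},
\]
so it suffices to establish $|\alpha \eta(t)|, |\alpha \bar\mu| \leq \tanh(\eps/4)$ uniformly in $t \in \R$. The heart of the argument is the Jacobi theta / Poisson summation estimate
\[
\Bigl|\sum_{w \in \Z} (-1)^w e^{-w^2/(2\sigma^2) + tw}\Bigr| \;\leq\; C\, e^{-\pi^2 \sigma^2/2}\, \sum_{w \in \Z} e^{-w^2/(2\sigma^2) + tw}, \qquad t \in \R,
\]
which follows by applying Poisson summation to $w \mapsto e^{-w^2/(2\sigma^2) + (t + i\pi) w}$ and noting that the dominant Fourier terms occur at frequencies $\pm \pi$, contributing an $e^{-\pi^2 \sigma^2/2}$ damping. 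Taking $\sigma^2 = \Omega(\log(1/\eps)/\pi^2)$ makes this $\leq \eps$. This controls $\eta$ in the ``interior regime'' where the tilted-Gaussian peak $z_0 := m/2 + t\sigma^2$ lies well inside $S$: truncation and cutoff errors are Gaussian-tail terms $O(e^{-\Delta^2/(2\sigma^2)})$ and $O(e^{-(m/2)^2/(2\sigma^2)}) \leq \gamma$, both absorbed by our parameter choices.

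\paragraph{Main obstacle.} The most delicate part is the ``boundary regime'', i.e., $t$ for which $z_0$ lies in the buffer $\{0,\ldots,m\}\setminus S$ or even outside $\{0, \ldots, m\}$. Here, $\bar M(t)$ is dominated by edge terms on which $f \approx 0$, while $G(t)$ is dominated by values of $\bar p_z e^{tz}$ near the edge of $S$; the Gaussian distance $\Delta$ between the dominant regions of numerator and denominator yields an additional damping factor $e^{-\Omega(\Delta^2/\sigma^2)}$ for $|\eta(t)|$. This is where the smoothness of $\varphi$ is essential: a hard indicator cutoff would leave a residual $O(1)$ ``dipole'' term at the edge of $S$ (corresponding to the undamped Gaussian value at $z_0 = $ edge of $S$), whereas a smooth $\varphi$ damped over a scale $\Delta$ ensures all terms contributing to $G(t)$ are Gaussian-suppressed by the buffer. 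Matching $\sigma^2 = \Theta(\log(1/\eps))$, $\Delta = \Theta(\sigma\sqrt{\log(1/\eps)})$, and $m = \Theta(\sigma\sqrt{\log(1/\gamma)})$ then yields $m = O_\eps(\sqrt{\log(1/\gamma)})$, as claimed.
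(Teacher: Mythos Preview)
Your reduction to one dimension is identical to the paper's. The construction, however, is genuinely different. The paper does not multiplicatively reweight a single truncated Gaussian by $1\pm\alpha(-1)^z\varphi(z)$; instead it takes an explicit mixture $\gamma\cdot\mu+(1-\gamma)\cdot\cD_{S_i,s,c}$, where $\mu$ puts mass $\tfrac12$ at each of $0$ and $m$, and $\cD_{S_0,s,c},\cD_{S_1,s,c}$ are truncated discrete Gaussians supported on the even (resp.\ odd) integers in a central window $[c-w,c+w]$. The supports then intersect only at $\{0,m\}$, so $\SD=1-\gamma$ exactly. For the MGF ratio the paper splits into two regimes: when $|t|>\sqrt{\log(1/\gamma)}$ the endpoint term $\gamma\bM_\mu(t)$ dominates both MGFs (a one-line calculation), and when $|t|\leq\sqrt{\log(1/\gamma)}$ it rewrites $\bM_{\cD_{S_i,s,c}}(t)$ as a ratio of shifted Gaussian normalization constants and invokes the standard lattice fact (Lemma~\ref{lem:discrete-gaussian-shift}) that $\rho_{s,c'}(2\Z)/\rho_s(2\Z)\in[e^{-\delta},1]$ for all shifts $c'$. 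No bump function and no boundary analysis is needed: the endpoint masses absorb the large-$|t|$ regime wholesale.

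Your route can probably be pushed through, but the boundary regime is where the real work lies, and the sketch there is not yet convincing. The asserted damping $e^{-\Omega(\Delta^2/\sigma^2)}$ does not follow from the ``Gaussian distance between dominant regions'' heuristic: when $z_0$ sits at the edge of the core, a single Abel summation of the alternating half-Gaussian only gives $|\eta(t)|=O(1/\sigma)$, far short of $O(\eps)$ for $\sigma=\Theta(\sqrt{\log(1/\eps)})$. What actually saves you is the Fourier decay of $\varphi q$ at frequency $1/2$, and this requires a quantitative statement about $\hat\varphi$: a generic $C^\infty$ bump does not give exponential decay, so you need $\varphi$ in a specific Gevrey-type class (e.g., built from $e^{-1/x}$) to get $|\hat\varphi(\xi)|\lesssim e^{-c\Delta|\xi|^\alpha}$, and then match $\Delta$ to that. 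This is doable but is the crux of the proof, not a detail. By contrast, the paper sidesteps all of this: the $\gamma$-mass at $\{0,m\}$ plays exactly the role your buffer is trying to play, but with a trivial analysis.
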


Similar to when we analyze our binary summation protocol in Section~\ref{sec:bit-sum-protocol}, it will be more convenient to consider the one-dimensional case, where the two random variables are from $\{0, 1, \dots, m\}$ rather than $\Delta_{2, m}$. In other words, it is more convenient to state our result in this section as follows:

\begin{lemma} \label{lem:gaussian-mgf-bounded}
For every $\epsilon > 0$ and $\gamma \in (0, 0.5)$, there exist two random variables $Y^0$ and $Y^1$ supported on $\{0, \dots, m\}$ for some $m = O_{\epsilon}(\sqrt{\log(1/\gamma)})$ such that $\SD(Y^0, Y^1) \geq 1 - \gamma$ and that $Y^0, Y^1$ satisfy $e^{\eps}$-bounded MGF ratio property.
\end{lemma}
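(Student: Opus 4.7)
The construction will take $Y^0$ and $Y^1$ to be two truncated discrete Gaussians on $\{0,\dots,m\}$ with parameters calibrated to $\epsilon$ and $\gamma$. Specifically, I plan to fix a ``scale'' $\sigma$ depending only on $\epsilon$ (roughly $\sigma = \Theta(1/\epsilon)$), pick a separation $\Delta = \Theta(\sigma\sqrt{\log(1/\gamma)})$, and then take $Y^0$ (resp.\ $Y^1$) to be a truncated discrete Gaussian on $\{0,\dots,m\}$ with standard deviation $\sigma$ and center $\mu_0$ (resp.\ $\mu_1$), where the centers are arranged symmetrically with $\mu_1-\mu_0=\Delta$ and with $m$ chosen so that the pair $(\mu_0,\mu_1)$ is symmetric about $m/2$ (i.e.\ $\mu_1 = m-\mu_0$). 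Thus $m$ is on the order of $\Delta$, which gives $m = O_\epsilon(\sqrt{\log(1/\gamma)})$ as required. The proof then has two parts: (i) lower-bounding the total variation distance and (ii) upper-bounding the MGF ratio, the latter being the main technical step.

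\paragraph{Total variation.}
For the TV bound, I would argue that because $Y^0$ and $Y^1$ are discrete Gaussians whose centers are separated by $\Delta \geq C\sigma\sqrt{\log(1/\gamma)}$, their masses concentrate on disjoint half-lines. Standard Gaussian tail bounds (of the form $\Pr[|Z-\mu|>r]\leq 2e^{-r^2/(2\sigma^2)}$, valid also in the discrete/truncated regime up to constants) then imply that each distribution places at most $\gamma/2$ of its mass on the ``wrong'' side of $m/2$. Taking $S=\{0,\dots,\lfloor m/2\rfloor\}$ as the witness set gives $\SD(Y^0,Y^1)\geq \Pr[Y^0\in S]-\Pr[Y^1\in S]\geq 1-\gamma$, provided the hidden constant $C$ inside $\Delta$ is chosen large enough. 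This step is a routine tail computation and will not be the obstacle.

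\paragraph{MGF ratio.}
The hard part is controlling $M_{Y^0}(t)/M_{Y^1}(t)$ uniformly in $t\in\R$. The key observation is the closed form for the MGF of a truncated discrete Gaussian obtained by completing the square:
\begin{align*}
M_b(t) \;=\; e^{\mu_b t + \sigma^2 t^2/2}\cdot \frac{Z_{\mu_b+\sigma^2 t}}{Z_{\mu_b}}, \qquad Z_\mu := \sum_{k=0}^{m} e^{-(k-\mu)^2/(2\sigma^2)}.
\end{align*}
Using $\mu_1 = m-\mu_0$ and the reflection $k\mapsto m-k$, which gives $Z_{\mu_1}=Z_{\mu_0}$ and $Z_{\mu_1+\sigma^2t}=Z_{\mu_0-\sigma^2t}$, the ratio collapses to
\begin{align*}
\frac{M_0(t)}{M_1(t)} \;=\; e^{-\Delta t}\cdot\frac{Z_{\mu_0+\sigma^2 t}}{Z_{\mu_0-\sigma^2 t}}.
\end{align*}
My plan is to show that the log-partition function $f(\mu):=\log Z_\mu$ satisfies $f(\mu_0+\sigma^2 t)-f(\mu_0-\sigma^2 t) = \Delta t + O(\epsilon)$ for all $t\in\R$. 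This will be verified by a piecewise analysis: $f$ behaves like $-(\text{dist}(\mu,[0,m]))^2/(2\sigma^2)$ outside the window, is essentially affine with slope $\approx 0$ in the interior, and the asymptotic slopes at $\pm\infty$ are $\mp\, m/\sigma^2$, so one needs to choose $\mu_0$ so that the integrated ``slope'' of $f$ over the symmetric window $[\mu_0-\sigma^2 t,\mu_0+\sigma^2 t]$ is exactly $\Delta/\sigma^2$ on average, up to an error of $O(\epsilon)$ for every $t$. The anticipated main obstacle is producing this uniform-in-$t$ cancellation: for small $|t|$ one can match it to leading order by a first-derivative condition $f'(\mu_0)=\Delta/(2\sigma^2)$, but controlling the error for intermediate $|t|$ (where the curvature of $f$ contributes) and for very large $|t|$ (where the boundary terms of $Z_\mu$ dominate) requires a uniform quantitative estimate on $f$ that exploits both the log-concavity of $Z_\mu$ and the scale parameter $\sigma = \Theta(1/\epsilon)$. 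Once this $O(\epsilon)$ bound on the log-MGF difference is established, exponentiating yields $M_{Y^0}(t)/M_{Y^1}(t)\in[e^{-\epsilon},e^\epsilon]$ as required, completing the proof.
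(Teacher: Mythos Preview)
Your construction cannot work, and the gap is not a matter of estimating $f$ carefully: the shifted-center Gaussians on $\{0,\dots,m\}$ simply do not have an $e^{\eps}$-bounded MGF ratio. To see this, note that for any two distributions with support $\{0,\dots,m\}$ one has
\[
\lim_{t\to+\infty}\frac{M_{Y^0}(t)}{M_{Y^1}(t)}=\frac{\Pr[Y^0=m]}{\Pr[Y^1=m]}.
\]
For your truncated Gaussians (using $Z_{\mu_0}=Z_{\mu_1}$ by symmetry),
\[
\frac{\Pr[Y^0=m]}{\Pr[Y^1=m]}=\exp\!\Big(\frac{(m-\mu_1)^2-(m-\mu_0)^2}{2\sigma^2}\Big)
=\exp\!\Big(-\frac{\Delta\, m}{2\sigma^2}\Big),
\]
and with $\Delta=\Theta(\sigma\sqrt{\log(1/\gamma)})$ and $m=\Theta(\Delta)$ this is $\exp(-\Theta(\log(1/\gamma)))$, not $e^{-O(\eps)}$. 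So your claimed identity $f(\mu_0+\sigma^2t)-f(\mu_0-\sigma^2t)=\Delta t+O(\eps)$ is false in the large-$|t|$ regime: the additive error is $\Theta(\log(1/\gamma))$, not $O(\eps)$. The same obstruction appears at $t\to-\infty$ via the mass at $0$.

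The paper's construction sidesteps this in two ways. First, it does \emph{not} separate the centers: both Gaussians are centered at the same point $c=m/2$, and the large total variation comes instead from putting one on the even integers and the other on the odd integers in $[c-w,c+w]$ (disjoint supports, so $\SD=1-\gamma$ trivially). Because the centers coincide, the ``complete-the-square'' identity reduces the MGF ratio to a ratio of shifted normalizers with the \emph{same} shift, which is within $e^{\pm O(\eps)}$ by standard smoothing properties of discrete Gaussians (Lemma~\ref{lem:discrete-gaussian-shift}). Second, to handle large $|t|$, the paper places an equal mass of $\gamma/2$ at each of the endpoints $0$ and $m$ in \emph{both} distributions; these shared endpoint atoms dominate both MGFs once $|t|\gtrsim\sqrt{\log(1/\gamma)}$, forcing the ratio to $1+O(\eps)$ there. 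Your construction has neither device, and at least one of them is necessary: any pair with bounded MGF ratio must have $\Pr[Y^0=m]/\Pr[Y^1=m]\in[e^{-\eps},e^{\eps}]$, which is incompatible with two well-separated Gaussian centers on the same support.
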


Similar to the analogous statement in Section~\ref{sec:bit-sum-protocol}, it is easy to see that Lemma~\ref{lem:gaussian-mgf-bounded} implies Lemma~\ref{lem:gaussian-mgf-bounded-two-dim}.

\begin{proof}[Proof of Lemma~\ref{lem:gaussian-mgf-bounded-two-dim} from Lemma~\ref{lem:gaussian-mgf-bounded}] For any $\eps > 0$ and $\gamma \in (0, 0.5)$, let $Y^0, Y^1$ be the random variables from Lemma~\ref{lem:gaussian-mgf-bounded} whose values are from $\{0, 1, \dots, m\}$ where $m = O_{\eps}(\sqrt{\log(1/\gamma)})$. We define the random variable $\bY, \bY'$ by $\bY = (Y^0, m - Y^0)$ and $\bY' = (Y^1, m - Y^1)$. Clearly, $\SD(\bY, \bY') = \SD(Y^0, Y^1) \geq 1 - \gamma$. Finally, for any $\bt = (t_1, t_2) \in \R^2$ we have
\begin{align*}
\frac{\bM_{\bY}(\bt)}{\bM_{\bY'}(\bt)} = \frac{\bM_{Y^0}(t_1 - t_2)}{\bM_{Y^1}(t_1 - t_2)},
\end{align*}
which lies in $[e^{-\eps}, e^{\eps}]$ due to the $e^{\eps}$-bounded MGF ratio property of $Y^0, Y^1$.
\end{proof}

\subsubsection{Discrete Gaussian Distributions}

Our construction for Lemma~\ref{lem:gaussian-mgf-bounded} will be based on the discrete Gaussian distribution, which we define below. To do so, we start by defining the (one-dimensional) Gaussian function centered at $c$ with parameter $s$ as
\begin{align*}
\rho_{s,c}(x) = \exp\left(-\frac{\pi (x - c)^2}{s^2}\right),
\end{align*}
for all $x \in \R$. For any countable set $A \subseteq \R$, we define $\rho_{s, c}(A)$ as $\sum_{x \in A} \rho_{s, c}(x)$. For any countable set $A \subseteq \R$ such that $\sum_{x \in A} \rho_{s, c}(x)$ is finite, we may define the discrete
Gaussian distribution over $A$ centered at $c$ with parameter $s$ denoted by $\cD_{A, s, c}$ by
\begin{align*}
\cD_{A, s, c}(x) = \frac{\rho_{s, c}(x)}{\rho_{s, c}(A)},
\end{align*}
for all $x \in A$. Throughout this work, we only use $A$ that is either finite or an additive subgroup of $\Z$; for both cases, it is not hard to see that $\rho_{s, c}(A)$ is finite and hence we will not state this condition again. For brevity, we sometimes drop the subscript $c$ when $c = 0$.

We will use a well-known property of lattices (cf.~\cite{MicciancioR07,GentryPV08,AgrawalGHS13}). Since we will be using this property only in one dimension, we shall not fully define the notion of lattices for higher dimensions. Recall that a one-dimensional lattice is an additive subgroup $a\Z := \{a t \mid t \in \Z\}$ for some $a \in \R^+$.
%
%The first property we will use is that, for any lattice $m\Z$ and any positive real number $\delta > 0$, there exists a sufficiently large constant $s_1 = s_1(m, \delta)$ such that $\rho_{s, c}(m\Z)$ are within $e^{\delta}$ multiplicative factor of $\rho_{s, 0}(m\Z)$ for all $c \in \R$.
Informally speaking, the property we use is that, if we choose $s$ to be sufficiently large, ``shifting'' the discrete Gaussian distribution by $c$ does not change its normalization factor too much. This is stated more formally below. (For reference, please refer to~\cite[Lemma 2.6]{GentryPV08} which states a more general version of the statement that also works for higher-dimensional lattices.)

\begin{lemma} \label{lem:discrete-gaussian-shift}
For any constants $a, \delta \in \R^+$, there exists a sufficiently large constant $s^* = s^*(a, \delta)$ such that, for any $c \in \R$, the following holds:
\begin{align} \label{eq:ratio-gaussian-shift}
\frac{\rho_{s^*, c}(a\Z)}{\rho_{s^*}(a\Z)} \in [e^{-\delta}, 1]. 
\end{align}
\end{lemma}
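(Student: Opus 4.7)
The plan is to use the Poisson summation formula to obtain a dual representation of $\rho_{s, c}(a\Z)$ in which the dependence on the shift parameter $c$ is confined to phase factors that can be bounded uniformly. The Fourier transform of $\rho_s(x) = e^{-\pi x^2/s^2}$ is $s \cdot e^{-\pi s^2 \xi^2}$, and shifting the argument by $c$ multiplies the Fourier transform by $e^{-2\pi i c \xi}$. Applying Poisson summation to the lattice $a\Z$, whose dual lattice is $\tfrac{1}{a}\Z$, then yields
\[
\rho_{s, c}(a\Z) \;=\; \frac{s}{a}\sum_{m \in \Z} e^{-2\pi i c m/a}\, e^{-\pi s^2 m^2/a^2} \;=\; \frac{s}{a}\left(1 + 2\sum_{m \geq 1} \cos(2\pi c m/a)\, e^{-\pi s^2 m^2/a^2}\right),
\]
where the second equality uses that $\rho_{s,c}(a\Z)$ is real-valued. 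Setting $c=0$ gives the special case $\rho_s(a\Z) = \tfrac{s}{a}(1 + \xi(s))$, where $\xi(s) := 2\sum_{m \geq 1} e^{-\pi s^2 m^2/a^2}$ can be bounded by a geometric series and so tends to $0$ as $s \to \infty$; in particular, for $s$ large enough relative to $a$ one has $\xi(s) \leq 4 e^{-\pi s^2/a^2}$.

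From the displayed identity, both bounds in \eqref{eq:ratio-gaussian-shift} are immediate. The upper bound $\rho_{s,c}(a\Z) \leq \rho_s(a\Z)$ follows termwise from $\cos(2\pi c m/a) \leq 1$, yielding that the ratio is at most $1$. For the lower bound, $\cos(\cdot) \geq -1$ gives $\rho_{s,c}(a\Z) \geq \tfrac{s}{a}(1 - \xi(s))$, so
\[
\frac{\rho_{s, c}(a\Z)}{\rho_s(a\Z)} \;\geq\; \frac{1 - \xi(s)}{1 + \xi(s)},
\]
a bound that is independent of $c$ and tends to $1$ as $s \to \infty$.

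The proof then concludes by choosing $s^* = s^*(a, \delta)$ large enough so that $\xi(s^*) \leq \tfrac{1 - e^{-\delta}}{1 + e^{-\delta}}$; since $\xi$ depends only on $s$ and $a$, this choice depends only on $a$ and $\delta$, as required. The only technical point that requires care is the appeal to Poisson summation, which is justified because $\rho_{s,c}$ is a Schwartz function (so both series converge absolutely and the standard proof via Fourier series of the periodized sum applies); alternatively, one may invoke the one-dimensional case of the lattice Poisson summation formula recorded in \cite{MicciancioR07,GentryPV08}. No step presents a serious obstacle — the main content is simply recognizing that the shifted Gaussian sum has a clean Fourier-dual expression whose leading term is independent of $c$.
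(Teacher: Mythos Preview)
Your argument is correct. The paper does not actually prove this lemma; it states it as a known property of discrete Gaussians and refers the reader to \cite[Lemma~2.6]{GentryPV08} (see also \cite{MicciancioR07,AgrawalGHS13}) for a more general lattice version. Your Poisson-summation argument is precisely the standard proof underlying those references: pass to the dual lattice $\tfrac{1}{a}\Z$, observe that the $m=0$ term is $s/a$ independently of $c$, and bound the remaining terms by the geometric-type sum $\xi(s)$ which vanishes as $s\to\infty$. So your proposal supplies exactly the proof the paper chose to outsource, and there is no discrepancy in approach.
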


We will also use the following observation that, similar to the (continuous) Gaussian distribution, we may choose a sufficiently large truncation point $\ell^* a$ for which the total mass of all points $x$ with $|X - c| > \ell^*a$ is arbitrarily small. Note that the only reason the observation is not completely trivial is that the truncation point should work for all centers $c$. Nonetheless, the proof of the observation is still rather straightforward, and we defer it to Appendix~\ref{app:discrete-gaussian-tail}.

\begin{observation} \label{obs:discrete-gaussian-tail}
For any constants $a, \delta \in \R^+$, let $s^* = s^*(a, \delta)$ be as in Lemma~\ref{lem:discrete-gaussian-shift}. Then, for any $\lambda > 0$, there exists a sufficiently large positive integer $\ell^* = \ell^*(a, \delta, \lambda)$ such that, for any $c \in \R$, we have
\begin{align*}
\Pr_{X \sim \cD_{a\Z, s^*, c}}[|X - c| > \ell^* a] \leq \lambda.
\end{align*}
\end{observation}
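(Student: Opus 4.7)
The plan is to bound the probability $\Pr_{X \sim \cD_{a\Z, s^*, c}}[|X - c| > \ell^* a] = \rho_{s^*, c}(T_c) / \rho_{s^*, c}(a\Z)$ where $T_c := \{x \in a\Z : |x - c| > \ell^* a\}$, controlling the numerator and denominator separately with bounds that are uniform in $c$. The denominator is handled immediately by Lemma~\ref{lem:discrete-gaussian-shift}: since $\rho_{s^*}(a\Z) \geq \rho_{s^*}(0) = 1$, we get $\rho_{s^*, c}(a\Z) \geq e^{-\delta} \rho_{s^*}(a\Z) \geq e^{-\delta}$, a positive constant independent of $c$.

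For the numerator I would exploit the translation invariance of $a\Z$ to reduce to a bounded range for the center. Let $n_0 \in \Z$ minimize $|na - c|$, so that $c'' := c - n_0 a \in [-a/2, a/2]$, and substitute $m = n - n_0$ in the sum defining $\rho_{s^*, c}(T_c)$. This rewrites the numerator as $\sum_{m \in \Z, |ma - c''| > \ell^* a} \exp(-\pi(ma - c'')^2/s^{*2})$. Since $|c''| \leq a/2$, the condition $|ma - c''| > \ell^* a$ forces $|m| \geq \ell^*$, and in that range $|ma - c''| \geq (|m| - 1/2)a$. Hence the numerator is at most $2 \sum_{m \geq \ell^*} \exp(-\pi a^2 (m - 1/2)^2 / s^{*2})$, which is a quantity depending only on $a, s^*, \ell^*$ and not on $c$.

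Finally I would show this tail tends to $0$ as $\ell^* \to \infty$. For $m \geq \ell^* \geq 1$, we have $(m - 1/2)^2 \geq m^2/4 \geq \ell^* m / 4$, so the tail is dominated by a geometric series with ratio $r = \exp(-\pi a^2 \ell^* / (4 s^{*2})) < 1$ and first term $r^{\ell^*}$, giving a bound of the form $2 r^{\ell^*}/(1 - r)$. Multiplying by the $e^{\delta}$ upper bound on the reciprocal of the denominator, it suffices to choose $\ell^*$ large enough (as a function of $a, s^*, \delta, \lambda$, all of which depend only on $a, \delta, \lambda$) that $2 e^{\delta} r^{\ell^*}/(1 - r) \leq \lambda$.

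There is no real obstacle: the only subtlety to watch is that every step of the tail bound must be uniform in $c$, which is guaranteed once we reduce to $c'' \in [-a/2, a/2]$ via the lattice shift. The Gaussian decay of $\rho_{s^*}$ does the rest.
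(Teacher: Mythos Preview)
Your proof is correct and follows essentially the same approach as the paper's: bound the numerator $\rho_{s^*,c}(T_c)$ uniformly in $c$ via a lattice shift, and bound the denominator via Lemma~\ref{lem:discrete-gaussian-shift}. The only cosmetic differences are that the paper shifts by $\lfloor c/a\rfloor$ (handling the two tails separately) rather than by the nearest lattice point, and that the paper defines $\ell^*$ non-constructively as the smallest integer making the centered tail at most $e^{-\delta}\lambda\,\rho_{s^*}(a\Z)$, whereas you give an explicit geometric-series rate; neither difference is substantive.
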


\subsubsection{Proof of Lemma~\ref{lem:gaussian-mgf-bounded}}

Having stated the necessary background, we now describe our construction, starting with an informal intuition; all arguments will be subsequently formalized. Distributions of both $Y^0, Y^1$ will place $\frac{\gamma}{2}$ probability masses at each of $0$ and $m$, and these two points shared by the supports of $Y^0$ and $Y^1$. (This ensures that the total variation distance of $Y^0$ and $Y^1$ are at least $1 - \gamma$.) In the middle, we then place discrete Gaussian distributions centered at $c = m/2$ for $Y^0$ and $Y^1$, with that of $Y^0$ only supported on even numbers whereas that of $Y^1$ supported on odd numbers. These discrete Gaussian distributions are truncated so that the supports are within the range of $[c - w, c + w]$ for some parameter $w$.

The reason behind the construction is as follows. First, when $|t| \geq O_{\eps}(\sqrt{\log(1/\gamma)})$, it is not hard to see that the MGFs at $t$ are dominated by the terms corresponding to the points 0 or $m$. Our parameters are selected in such a way that, when this is not the case, it must be that $|t| \ll w$. In this case, we observe that the MGFs of discrete Guassian distributions are simply proportional to normalization terms of other discrete Gaussian distributions, shifted by $O(t)$ (and truncated appropriately). (See~\eqref{eq:shift-gaussian-moment} below.) Since $|t| \ll w$, we can then apply Lemma~\ref{lem:discrete-gaussian-shift} and Observation~\ref{obs:discrete-gaussian-tail} to get a good bound on these terms. This concludes the main ideas in the proof, which is presented more formally below.

\begin{proof}[Proof of Lemma~\ref{lem:gaussian-mgf-bounded}]
We will assume w.l.o.g. that $\eps \leq 0.1$, as otherwise we may consider the case $\eps = 0.1$ instead. Before we can describe and analyze the distributions, we have to specify certain parameters:
\begin{itemize}
\item Let $s = s^*(2, \eps/4)$ from Lemma~\ref{lem:discrete-gaussian-shift} (i.e., for $2\Z$ lattice and $\delta = \eps/4$).
\item Let $\ell = \ell^*(2, \eps/4, 1 - e^{-\eps/4})$ from Observation~\ref{obs:discrete-gaussian-tail} (i.e., for $2\Z$ lattice, $\delta = \eps/4$ and $\lambda = 1 - e^{-\eps/4}$).
\item Let $w = \frac{s^2 \sqrt{\log(1/\gamma)}}{\pi} + 2\ell^*$, $c = \left\lceil w + \log\left(\frac{2}{(e^{\eps} - 1)}\right) + \sqrt{\log(1/\gamma)}\right\rceil$ and $m = 2c$.
\end{itemize}

Let $S_0$ denote the set $2\Z \cap [c - w, c + w]$ and $S_1$ denote $(2\Z + 1) \cap [c - w, c + w]$.
%Let $\cD_0$ denote the discrete Gaussian $\cD_{S_0, c, s}$ and $\cD_1$ the discrete Gaussian $\cD_{S_1, c, s}$.
Let $\mu$ be the distribution that has probability mass $0.5$ at 0 and $0.5$ at $m$. We let $Y^0$ be sampled from the mixture distribution $\gamma \cdot \mu + (1 - \gamma) \cdot \cD_{S_0, s, c}$ and $Y^1$ be sampled from the mixture distribution $\gamma \cdot \mu + (1 - \gamma) \cdot \cD_{S_1, s, c}$. Figure~\ref{fig:truncated_gaussian} illustrates an example of the two distributions.

\begin{figure}
    \centering
    \includegraphics[width=\textwidth]{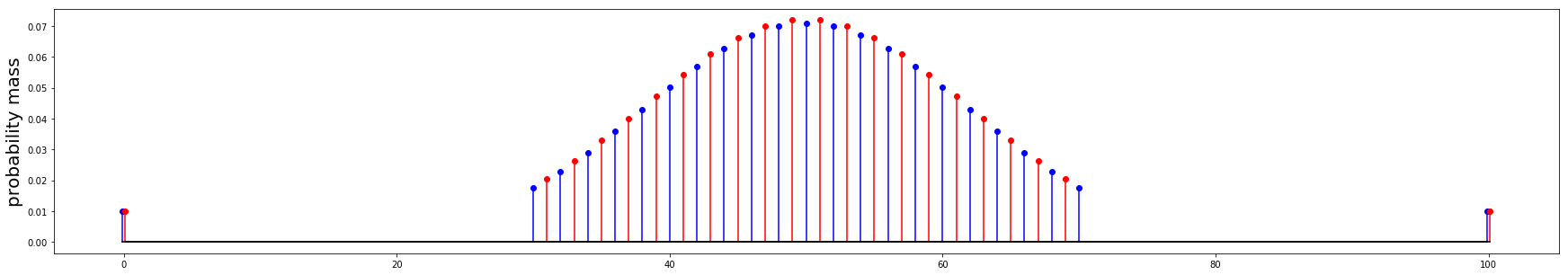}
    \caption{The probability mass functions of $\gamma \cdot \mu + (1 - \gamma) \cdot \cD_{S_0, s, c}$ and $\gamma \cdot \mu + (1 - \gamma) \cdot \cD_{S_1, s, c}$ for parameters $\gamma = 0.02, c = 50, m = 2c = 100, w = 20, s = 30$. The $x$-axis corresponds to the value of the random variable and the $y$-axis corresponds to the probability mass at that value. The red points and the blue points correspond respectively to $\gamma \cdot \mu + (1 - \gamma) \cdot \cD_{S_0, s, c}$ and $\gamma \cdot \mu + (1 - \gamma) \cdot \cD_{S_1, s, c}$. %\pasin{ In case you'd like to modify the plot, the code for generating the plot is in \href{https://colab.corp.google.com/drive/1oZtFrkUElOYbbpQkHUzVa7XnkFeF7ckS?usp=sharing}{this colab}.} 
    }
    \label{fig:truncated_gaussian}
\end{figure}

Observe that $\supp(Y^0) \cap \supp(Y^1) = \{0, m\}$, and each of the two points has mass $\gamma / 2$. Hence, we have $\SD(Y^0, Y^1) = 1 - \gamma$ as desired.

We will next verify that $Y^0, Y^1$ satisfies $e^{\eps}$-bounded MGF ratio. To do this, observe that for $i \in \{0, 1\}$, 
\begin{align} \label{eq:mgf-linear-comb}
\bM_{Y^i}(t) = \gamma \cdot \bM_{\mu}(t) + (1 - \gamma) \cdot \bM_{\cD_{S_i, s, c}}(t).
\end{align}

We now consider two cases, based on whether $|t| > \sqrt{\log(1/\gamma)}$.

\begin{enumerate}
\item $|t| > \sqrt{\log(1/\gamma)}$. There are two subcases here: $t > \sqrt{\log(1/\gamma)}$ or $t < -\sqrt{\log(1/\gamma)}$. Let us first assume that $t > \sqrt{\log(1/\gamma)}$. In this case, since the maximum number in $\supp(\cD_{S_i, s, c})$ is at most $c + w$, we have $\bM_{\cD_{S_i, s, c}}(t) \leq e^{t(c + w)}$. On the other hand, we have $\bM_{\mu}(t) \geq \frac{e^{t m}}{2} = \frac{e^{2tc}}{2}$. Hence, we have
\begin{align*}
\frac{\bM_{\cD_{S_i, s, c}}(t)}{\bM_{\mu}(t)} \leq 2e^{t(-c + w)} \leq (e^{\eps} - 1)\gamma, 
\end{align*}
where the inequality comes from our choice of $c$.

As a result, from~\eqref{eq:mgf-linear-comb}, we have
\begin{align*}
\gamma \cdot \bM_{\mu}(t) \leq \bM_{Y^i}(t) = \gamma \cdot \bM_{\mu}(t) + (1 - \gamma) \cdot \bM_{\cD_{S_i, s, c}}(t) \leq e^{\eps} \gamma \cdot \bM_{\mu}(t).
\end{align*}
Thus, $\frac{\bM_{Y^1}(t)}{\bM_{Y^2}(t)} \in [e^{-\eps}, e^{\eps}]$ as desired. The subcase $t < -\sqrt{\log(1/\gamma)}$ is similar; in particular, we also have $\frac{\bM_{\cD_{S_i, s, c}}(t)}{\bM_{\mu}(t)} \leq \frac{e^{t(c - w)}}{0.5} = 2e^{t(c - w)} \leq (e^{\eps} - 1)\gamma$, which results in the same conclusion.
\item $|t| \leq \sqrt{\log(1/\gamma)}$. In this case, we further rearrange $\bM_{\cD_{S_i, s, c}}(t)$ as
\begin{align}
\bM_{\cD_{S_i, s, c}}(t) &= \sum_{y \in S_i} \cD_{S_i, s, c}(y) \cdot e^{ty} \nonumber \\
&= \sum_{y \in S_i} \frac{\rho_{s, c}(y)}{\rho_{s, c}(S_i)} \cdot e^{ty} \nonumber \\
&= \frac{1}{\rho_{s, c}(S_i)} \sum_{y \in S_i} e^{-\frac{\pi(y - c)^2}{s^2} + ty} \nonumber \\
&= \frac{1}{\rho_{s, c}(S_i)} \sum_{y \in S_i} e^{-\frac{\pi(y - c - 0.5s^2t / \pi)^2}{s^2} + \frac{\pi((c + 0.5s^2t / \pi)^2 - c^2)}{s^2}} \nonumber \\
&= \frac{1}{\rho_{s, c}(S_i)} \sum_{y \in S_i} e^{-\frac{\pi(y - c - 0.5s^2t / \pi)^2}{s^2} + 0.5t(2c + 0.5s^2t / \pi)} \nonumber \\
&= \frac{e^{0.5t(2c + 0.5s^2t / \pi)}}{\rho_{s, c}(S_i)} \sum_{y \in S_i} e^{-\frac{\pi(y - c - 0.5s^2t / \pi)^2}{s^2}} \nonumber \\
&= e^{0.5t(2c + 0.5s^2t / \pi)} \cdot \frac{\rho_{s, c + 0.5s^2 t/\pi}(S_i)}{\rho_{s, c}(S_i)}. \label{eq:shift-gaussian-moment}
\end{align}
Now, observe that
\begin{align*}
\rho_{s, c}(2\Z + i) &\geq \rho_{s, c}(S_i) \\
&= \rho_{s, c}(2\Z + i) \cdot \left(1 - \Pr_{X \sim \cD_{2\Z + i, s, c}}[X \notin S_i]\right) \\
&\geq \rho_{s, c}(2\Z + i) \cdot \left(1 - \Pr_{X \sim \cD_{2\Z + i, s, c}}[|X - c| \geq w]\right) \\
(\text{from our choice of } w) &\geq \rho_{s, c}(2\Z + i) \cdot \left(1 - \Pr_{X \sim \cD_{2\Z + i, s, c}}[|X - c| \geq 2\ell^*]\right) \\
&= \rho_{s, c}(2\Z + i) \cdot \left(1 - \Pr_{X \sim \cD_{2\Z, s, c - i}}[|X - c - i| \geq 2\ell^*]\right) \\
&\geq e^{-\eps/8} \cdot \rho_{s, c}(2\Z + i),
\end{align*}
where the last inequality comes from our choice of $\ell^*$.

Similarly, observe that
\begin{align*}
\rho_{s, c + 0.5s^2 t/\pi}(2\Z + i) &\geq \rho_{s, c + 0.5s^2 t/\pi}(S_i) \\
&= \rho_{s, c + 0.5s^2 t/\pi}(2\Z + i) \cdot \left(1 - \Pr_{X \sim \cD_{2\Z + i, s, c + 0.5s^2 t/\pi}}[X \notin S_i]\right) \\
&\geq \rho_{s, c + 0.5s^2 t/\pi}(2\Z + i) \cdot \left(1 - \Pr_{X \sim \cD_{2\Z + i, s, c + 0.5s^2 t/\pi}}[|X - c| \geq w - |0.5s^2 t/\pi|]\right) \\
(\text{from our choice of } w) &\geq \rho_{s, c + 0.5s^2 t/\pi}(2\Z + i) \cdot \left(1 - \Pr_{X \sim \cD_{2\Z + i, s, c + 0.5s^2 t/\pi}}[|X - c| \geq 2\ell^*]\right) \\
&= \rho_{s, c + 0.5s^2 t/\pi}(2\Z + i) \cdot \left(1 - \Pr_{X \sim \cD_{2\Z, s, c + 0.5s^2 t/\pi - i}}[|X - c - i| \geq 2\ell^*]\right) \\
&\geq e^{-\eps/8} \cdot \rho_{s, c + 0.5s^2 t/\pi}(2\Z + i).
\end{align*}

Plugging the above two inequalities back into~\eqref{eq:shift-gaussian-moment}, we have
\begin{align}
\bM_{\cD_{S_i, c, s}}(t) &\in \left[e^{-\eps/8}, e^{\eps/8} \right] \cdot e^{0.5t(2c + 0.5s^2t / \pi)} \cdot \frac{\rho_{s, c + 0.5s^2 t/\pi}(2\Z + i)}{\rho_{s, c}(2\Z + i)} \nonumber \\
&= \left[e^{-\eps/8}, e^{\eps/8} \right] \cdot e^{0.5t(2c + 0.5s^2t / \pi)} \cdot \frac{\rho_{s, c + 0.5s^2 t/\pi - i}(2\Z)}{\rho_{s, c - i}(2\Z)}. \label{eq:gaussian-moment-intermediate}
\end{align}
Finally, from our choice of $s$, we have that $\frac{\rho_{s, c - i}(2\Z)}{\rho_{s}(2\Z)}, \frac{\rho_{s, c + 0.5s^2 t/\pi - i}(2\Z)}{\rho_{s}(2\Z)} \in [e^{-\eps/4}, 1]$. Combining these with~\eqref{eq:gaussian-moment-intermediate}, we have
\begin{align*}
\bM_{\cD_{S_i, c, s}}(t) &\in \left[e^{-\eps/2}, e^{\eps/2}\right] \cdot e^{0.5t(2c + 0.5s^2t / \pi)}.
\end{align*}
As a result, we must have $\frac{\bM_{\cD_{S_1, c, s}}(t)}{\bM_{\cD_{S_2, c, s}}(t)} \in [e^{-\eps}, e^{\eps}]$.
From this and from~\eqref{eq:mgf-linear-comb}, we have $\frac{\bM_{Y^1}(t)}{\bM_{Y^2}(t)} \in [e^{-\eps}, e^{\eps}]$ as desired. \qedhere
\end{enumerate}
\end{proof}

\section{From Binary Summation to Real Summation}\label{sec:real-summation}

In this section use our  pure-$\shuffledDP$ protocol for binary summation in Section~\ref{sec:bit-sum-protocol} to obtain a pure-$\shuffledDP$ protocol for  summation of real numbers in the interval $[0, 1]$.  More precisely we show the following, which is a more quantitative version of Theorem~\ref{th:pure_real_agg_protocol_informal}.

\begin{theorem}\label{th:pure_real_agg_protocol}
	For every sufficiently large $n$ and $\eps \in (0, 1)$ there is an $\epsilon$-$\shuffledDP$ protocol for summation for inputs $x_1, \ldots, x_n\in [0,1]$, where each user sends $O \left(\frac{\log^3 n}{\ep}\right)$ messages each of length $O(\log \log n)$ bits to the analyzer, and has expected error at most $O\left(\frac{\sqrt{\log (1/\ep)}}{\ep^{3/2}}\right)$.
\end{theorem}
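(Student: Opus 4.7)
The plan is to calibrate the sequence $(\eps_j)$ fed into Algorithms~\ref{alg:real-randomizer} and~\ref{alg:real-analyzer} so that basic composition of the per-bit binary subprotocols yields overall $\eps$-$\shuffledDP$, while the weighted error $\sum_j 2^{-j}\cdot(\text{expected error of the $j$-th subprotocol})$ stays $O_\eps(1)$ and the per-user communication stays polylogarithmic. Concretely, I would fix $\beta \in (2/3, 1)$ (say $\beta = 0.9$), a normalizing constant $c = (1-\beta)/4$, and a floor $\eps_{\min} := \eps/(8\log n)$; then set $\eps_j := \max\{c\eps \beta^j,\, \eps_{\min}\}$ for $1 \leq j \leq 2\log n$. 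This gives
\begin{equation*}
\sum_{j=1}^{2\log n} \eps_j \;\leq\; c\eps\!\sum_{j \geq 1} \beta^j \;+\; 2\log n \cdot \eps_{\min} \;\leq\; \eps/4 + \eps/4 \;<\; \eps.
\end{equation*}
For privacy, each outgoing message carries its bit-index $j$, so the analyzer's view decomposes into $2\log n$ independent shuffled sub-views, the $j$-th being the shuffled output of $n$ copies of \textsc{BinaryRandomizer}$_{\eps_j, n}$ applied to bits $x_i[j]$. Theorem~\ref{th:pure_bin_agg_protocol} makes each sub-view $\eps_j$-$\shuffledDP$, and basic composition (with independent randomness across $j$) yields $\eps$-$\shuffledDP$ overall.

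For accuracy, I would treat the input-rounding error separately from the DP noise. Truncating each $x_i \in [0,1]$ to $2\log n$ bits distorts the total sum by at most $n \cdot 2^{-2\log n} = 1/n$, which is absorbed into the claimed error. On the rounded inputs, Theorem~\ref{th:pure_bin_agg_protocol} applied to the $j$-th slice gives $\E|a_j - \sum_i x_i[j]| = O(\eps_j^{-3/2}\sqrt{\log(1/\eps_j)})$, so by linearity the expected total error is at most
\begin{equation*}
\sum_{j=1}^{2\log n} 2^{-j} \cdot O\!\left(\eps_j^{-3/2}\sqrt{\log(1/\eps_j)}\right) + O(1/n).
\end{equation*}
Let $j_0 := \log(8c\log n)/\log(1/\beta) = \Theta(\log\log n)$ be the crossover where $c\eps\beta^{j_0} = \eps_{\min}$. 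For $j \leq j_0$ we have $\eps_j = c\eps\beta^j$, and the summand equals $O(\eps^{-3/2}\sqrt{\log(1/(\eps\beta^j))}) \cdot (2\beta^{3/2})^{-j}$; since $\beta > 2/3$ gives $2\beta^{3/2} > 1$, the resulting geometric series sums (even after absorbing the sub-polynomial $\sqrt{\log}$ factor) to $O(\eps^{-3/2}\sqrt{\log(1/\eps)})$. For $j > j_0$ we have $\eps_j = \eps_{\min}$, and the tail is bounded by $2^{-j_0}\cdot O(\eps_{\min}^{-3/2}\sqrt{\log(1/\eps_{\min})}) = (\log n)^{-\log 2/\log(1/\beta)} \cdot O(\eps^{-3/2}(\log n)^{3/2}\sqrt{\log\log n})$, which is $o(\eps^{-3/2})$ once $\beta$ is close enough to $1$ that $\log 2/\log(1/\beta) > 3/2$ (e.g.\ $\beta = 0.9$ gives exponent $\approx 6.58$).

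For communication, each \textsc{BinaryRandomizer}$_{\eps_j,n}$ emits $O(\log n/\eps_j)$ one-bit messages by Theorem~\ref{th:pure_bin_agg_protocol}, so the per-user message count is $\sum_{j=1}^{2\log n} O(\log n/\eps_j) \leq 2\log n \cdot O(\log n/\eps_{\min}) = O(\log^3 n/\eps)$, dominated by coordinates hitting the floor; tagging each message with its index $j \in \{1, \ldots, 2\log n\}$ costs an extra $O(\log\log n)$ bits. The main obstacle is choosing $\beta$, $c$, and $\eps_{\min}$ so that all three constraints hold simultaneously: privacy composition forces $\eps_{\min} \lesssim \eps/\log n$; the error geometric series forces $\beta > 2/3$ (so that $2\beta^{3/2} > 1$); and dominating the $\eps_{\min}^{-3/2}$ blow-up past the cutoff forces $\beta$ close enough to $1$ that $2^{-j_0}$ beats $(\log n)^{3/2}$. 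Once these constants are pinned down consistently, the accuracy and communication bounds of Theorem~\ref{th:pure_real_agg_protocol} follow from linearity of expectation and the independence of the per-bit randomizers.
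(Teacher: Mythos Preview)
Your proposal is correct and follows essentially the same approach as the paper: the paper sets $\eps_j = \max\{0.9^j\eps/20,\ \eps/(4\log n)\}$, applies basic composition for privacy, splits the error sum at the cutoff $j^*$ where the geometric sequence hits the floor, and bounds communication via $\sum_j O(\log n/\eps_j) = O(\log^3 n/\eps)$. Your constants differ slightly and your exposition is more explicit about why $\beta$ must be chosen in the range $(2^{-2/3},1)$ to make both the head series converge and the tail $2^{-j_0}(\log n)^{3/2}$ vanish, but the skeleton is identical.
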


%Similar to the case of bit-summation we may assume without loss of generality that $\eps > 1/\sqrt{n}$, as otherwise the analyzer might simply output zero. We may also assume that $n$ is a power of 2 (the analyzer can always simulate extra players with input $0$ if the number of players is too low). Let $x_{i}[j]$ denote the $j$th bit in a binary representation of $x\in [0, 1]$, such that $x_i = \sum_{j=1}^\infty x[j]/2^j$ (e.g., the representation of $x=1$ has $x[j]=1$ for $j = 1,2,\dots$). For a multiset $R$ of integers, let $j\cdot R$ denote the multiset $\{ j\cdot y \; | \; y\in R \}$.

The randomizer and analyzer of the protocol are shown as Algorithms~\ref{alg:real-randomizer} and~\ref{alg:real-analyzer} respectively\footnote{Note that $x_{i}[j]$ denote the $j$th bit in a binary representation of $x\in [0, 1]$, such that $x_i = \sum_{j=1}^\infty x[j]/2^j$ (e.g., the representation of $x=1$ has $x[j]=1$ for $j = 1,2,\dots$).} (the sequence $(\eps_j)_{j \in \N}$ will be specified below.). The idea is to round each input to $2\log n$ bits of precision (resulting in a negligible rounding error) and then run an independent binary summation protocol for each bit position.
By ``attaching'' $j$ to each message from the binary summation protocol for bit position $j$, we can run all protocols as a single shuffle, using composition to bound the total privacy loss.
(We observe that composition of $t$ independent shuffled model protocols into a single protocol is possible in general, at the expense of increasing the number of bits in each message by $\log t$.)
By allocating a large share of the privacy budget to the most significant bits, the error can be kept within a constant factor of the error for binary summation.
The communication complexity is somewhat larger than that of the binary summation protocol: 
the number of messages per user is increased by roughly a factor of $O(\log^2 n)$ and each message is  about $\log\log n$ bits (since we need $2\log n$ different symbols).

\begin{proof}[Proof of Theorem~\ref{th:pure_real_agg_protocol}]
For each $j = 1, \dots, \lceil 2 \log n \rceil$, we let $\eps_j = \max\{\frac{0.9^j \eps}{20}, \frac{\eps}{4\log n}\}$. The multiset of all messages output by {\sc RealRandomizer}$_{(\eps_j)_{j \in \N}, n}(x_i)$, for $i=1,\dots,n$ is in one-to-one correspondence with the sequence of multisets output by {\sc BinaryRandomizer}$_{\eps_j, n}(x_i[j]))$, for $j=1,\dots,2\log n$. Thus, we can use composition (see, e.g.,~\cite[Theorem~3.15]{DworkRothBook}) to bound the privacy parameter of the combined protocol by the sum of privacy parameters $\eps_j$:

$$\sum_{j=1}^{2\log n} \eps_j 
\leq \sum_{j=1}^{2\log n} \left(\frac{0.9^j \eps}{20} + \frac{\eps}{4 \log n}\right)
\leq \eps \enspace .$$

Hence, the protocol is $\eps$-DP. Next, we consider the expected error of the analyzer. 
Let $\bar{x}_i = \sum_{j=1}^{2\log n} x_i[j]$ be the rounded version of $x_i$. Since $|\sum_{i=1}^n \bar{x}_i - \sum_{i=1}^n x_i| \leq \sum_{i=1}^n |\bar{x}_i - x_i| < 1/n$, it suffices to argue that the protocol outputs a good approximation of $\sum_{i=1}^n \bar{x}_i$.
To do so, let $j^*$ be the smallest integer for which $\eps_{j^*} = \frac{\eps}{4 \log n}$. %Note that, by definition of $\eps_j$, we must have $0.9^{j^*} \leq \frac{5}{\log n}$.
Recall from Theorem~\ref{th:pure_bin_agg_protocol} that the expected error from the $j$th bit analyzer is at most $O\left(\frac{\sqrt{\log(1/\eps)}}{\eps^{3/2}}\right)$. Since the real summation analyzer outputs a weighted sum of contributions for each bit position obtained from the binary sum analyzers, the total error in the weighted sum returned by the analyzer is bounded by
\begin{align*}
O\left(\sum_{j=1}^{2\log n}\frac{1}{2^j} \cdot \frac{\sqrt{\log(1/\eps_j)}}{\eps_j^{3/2}}\right) &= O\left(\sum_{j=1}^{j^* - 1} \frac{1}{2^j} \cdot \frac{\sqrt{\log(1/\eps) + j}}{(0.81)^{1.5j} \cdot \eps^{3/2}} + \sum_{j=j^*}^{2 \log n} \frac{1}{2^j} \cdot \frac{\sqrt{\log n} \cdot \sqrt{\log(1/\eps) + \log\log n}}{\eps^{3/2}}\right) \\
&\leq O\left(\frac{\sqrt{\log(1/\eps)}}{\eps^{3/2}}\right) + O\left(\frac{\sqrt{\log(1/\eps)}}{\eps^{3/2}} \cdot \frac{(\log n)^{3/2} \cdot \sqrt{\log \log n}}{2^{j^*}}\right) \\
&\leq O\left(\frac{\sqrt{\log(1/\eps)}}{\eps^3/2}\right),
\end{align*}
where the last inequality follows from our choice of $j^*$, which by definition of $\eps_j$ implies that $0.9^{j^*} \leq \frac{5}{\log n}$.

Finally, we consider the number of messages $\sum_{j=1}^{2\log n} d_j$ sent by each randomizer. From Theorem~\ref{th:pure_bin_agg_protocol}, we have $d_j = O\left(\frac{\log n}{\eps_j}\right)$. Hence, the total number of messages sent per user is
\begin{align*}
\sum_{j=1}^{2\log n} d_j = O\left(\sum_{j=1}^{\log n} \frac{\log n}{\eps_j}\right) \leq O\left(\sum_{j=1}^{\log n} \frac{\log n}{\eps/(4 \log n)}\right) = O\left(\frac{\log^3 n}{\eps}\right),
\end{align*}
which completes our proof.
\end{proof}

\section{Conclusion and Open Questions}\label{sec:conc_oqs}

In this work, we gave the first pure-$\shuffledDP$ protocols for binary and real summation with constant error. We further prove a communication lower bound for any non-interactive protocols for binary summation. While these have advanced our understanding of pure-$\shuffledDP$ protocols, there are still many questions left open after this work. Specifically, the immediate open questions are:
\begin{itemize}[nosep]
\item Can we improve the error guarantee in the (binary and real) summation protocols to achieve the asymptotically optimal guarantee of $1/\eps$, which can be achieved by $\centralDP$ protocols~\cite{dwork2006calibrating}?
\item What is the optimal per user communication complexity of non-interactive $\shuffledDP$ protocols for binary and real summation? As we have shown, the communication complexity for binary summation lies between $O_{\eps}(\log n)$ and $\Omega_{\eps}(\sqrt{\log n})$. On the other hand, for real summation, the only lower bound is the trivial $\Omega(\log n)$ bound (which holds even without privacy concerns) whereas our upper bound is $O_{\eps}(\log^3 n)$. We remark here that, our approach for real summation (of running the pure-DP binary summation protocol independently for each coordinate in the base-2 representation) cannot achieve better than $O_{\eps}(\log^{3/2} n)$ communication complexity, because we have to consider $\Omega_{\eps}(\log n)$ coordinates and, from our lower bound, each coordinate requires at least $\Omega_{\eps}(\sqrt{\log n})$ bits of communication. %Together with the previous open question, resolving these questions are not only of theoretical interests but may also have implications in practice.
\item
%\pasin{If we make a more accurate calculation in Appendix~\ref{sec:app_hist}, then we need to change the following to have the correct error...}
In Appendix~\ref{sec:app_hist}, we show that our binary summation protocol also yields a pure-DP protocol for histograms (aka frequency estimator) with error $O_{\eps}(\log B \log n)$ but with \emph{linear} per user communication complexity. The latter is in contrast to the approximate-DP multi-message protocol of~\cite{anon-power}, which has a per user communication complexity of only $O_{\eps}(\poly(\log n, \log B))$ and incurs a similar error of $O_{\eps}(\poly(\log n, \log B))$ bits. It is hence a very interesting open question to come up with (or rule out) a pure-DP protocol with a smaller communication complexity.
\item Can we exploit interactivity to break our $\Omega_{\eps}(\sqrt{\log n})$ communication lower bound? Alternately, can we prove any non-trivial lower bound that holds also with interaction?
\end{itemize}

On a high-level, it would also be interesting to develop tools to help prove guarantees for pure-$\shuffledDP$ protocols. In the case of approximate-DP, there are amplification theorems~\cite{erlingsson2019amplification,BalleBGN19} that can yield an approximate-$\shuffledDP$ protocol from a $\localDP$ protocol.  Although this may not be optimal in some cases (as shown by the multi-message protocols in~\cite{anon-power,balle_privacy_2019constantIKOS,ghazi2019private}), such theorems can be conveniently applied to a large class of protocols and yield good approximate-DP guarantees. On the other hand, our proofs in this work are specific to our carefully designed protocols. It would be much more convenient if one can give a unifying theorem that proves pure privacy guarantees for any protocol with easily verifiable conditions. 

% \medsity lower bounkip

\section*{Acknowledgements}
We are grateful to Borja Balle, Kunal Talwar, and Vitaly Feldman for helpful discussions.

\bibliographystyle{alpha}
\bibliography{main.bib}

\appendix

\if 0
\section{Concentration Inequalities}
\label{sec:concentration}

\begin{theorem}[Hoeffding inequality; \cite{boucheron2004concentration}, Theorem 2.8]
  Let $X_1, \ldots, X_n$ be independent random variables such that $X_i \in [a_i, b_i]$ almost surely for all $i \leq n$. Then
  $$
\Pr\left[ \sum_{i=1}^n (X_i - \E[X_i]) > t \right] \leq \exp \left( \frac{-2t^2}{\sum_{i=1}^n (b_i - a_i)^2}\right).
  $$
\end{theorem}
\fi

\section{Pure Protocol for Histograms}\label{sec:app_hist}
A well-studied generalization of binary summation is the problem of computing \emph{histograms} (aka \emph{point functions} or \emph{frequency estimation}), where each of $n$ users is given an element in the set $\{1,\dots, B\}$ and the goal is to estimate the number of users holding any element $j \in \{1,\dots, B\}$, and with the smallest possible $\ell_{\infty}$ error (across the $B$ coordinates). For $B = 2$, this reduces to binary summation.

The smallest possible error for computing histograms is $\Theta(\min(\log{B}, \log(1/\delta))/\epsilon)$~\cite{dwork2006calibrating,bun2016simultaneous,bassily2015local,hardt2009geometry} in the central model and $\Theta(\sqrt{n \log{B}}/\epsilon)$~\cite{bassily2015local} in the local model. Recent work of~\cite{anon-power} gave an approximate-DP protocol with error $O\left(\log{B} + \frac{\sqrt{\log{B} \log(1/(\epsilon \delta))}}{\epsilon}\right)$ where each user sends $O\left(\frac{\log(1/(\epsilon \delta))}{\epsilon^2}\right)$ messages (each consisting of $O(\log{B} \log{n}))$ bits), and the subsequent work of~\cite{balcer2019separating} gave an approximate-DP protocol in the multi-message shuffled model with an (incomparable) error of $O(\log(1/\delta)/\epsilon^2)$ but with each user communicating a very large number $O(B)$ of messages.

Our pure binary summation protocol (Theorem~\ref{th:pure_bin_agg_protocol_informal}) implies as a black-box the first pure-DP protocol with polylogarithmic error for computing histograms, albeit with very large communication.

\begin{corollary}\label{cor:histograms}
For every positive real number $\eps$, there is an $\eps$-$\shuffledDP$ protocol that computes histograms on domains of size $B$ with an expected $\ell_{\infty}$ error of at most $O_{\epsilon}(\log{B} \log{n})$, and where each user sends $O_{\epsilon}(B\log{n})$ messages each consisting of $O(\log{B})$ bits.
\end{corollary}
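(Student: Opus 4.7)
The plan is a direct reduction: run $B$ independent copies of the pure-DP binary summation protocol from Theorem~\ref{th:pure_bin_agg_protocol} in parallel (tagged by coordinate), one for each element $j \in [B]$. Each user $i$ holding an element $e_i$ forms the one-hot vector $(x_{i,1}, \ldots, x_{i,B})$ with $x_{i,j} = \bone[e_i = j]$, runs {\sc BinaryRandomizer}$_{\eps', n}$ with $\eps' = \eps/2$ on each bit $x_{i,j}$, and tags every resulting message with its coordinate index $j$. The analyzer partitions the shuffled messages by their index and applies {\sc BinaryAnalyzer}$_{\eps', n}$ to each group to produce estimates $\hat{c}_1, \ldots, \hat{c}_B$.

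For privacy, observe that any two neighboring datasets differ in a single user's element, say changing from $j$ to $j'$; this changes exactly two of the $B$ binary input vectors (in coordinates $j$ and $j'$) while leaving the other $B-2$ identically distributed. Since each affected binary summation is $\eps'$-$\shuffledDP$ by Theorem~\ref{th:pure_bin_agg_protocol}, basic composition of pure DP mechanisms on the two affected summations gives $2\eps' = \eps$ overall. For communication, the $B$ protocols contribute $O_\eps(B \log n)$ one-bit messages per user, and tagging each with its $\lceil \log B\rceil$-bit index yields messages of length $O(\log B)$.

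The main technical step is bounding the expected $\ell_\infty$ error, since Theorem~\ref{th:pure_bin_agg_protocol} only guarantees expected error $O_\eps(1)$ per coordinate, which is too weak after a union bound via Markov. The plan is to establish a sub-exponential tail bound on each per-coordinate error $E_j$. Inspecting the protocol, $E_j$ can be written as $\sum_{i=1}^n a_i\bigl[(Z_i - d/2) - (x_{i,j} - 1/2)\bigr]$ where $a_i \sim \mathrm{Ber}(p)$ and $Z_i \sim \DLap_d(d/2, s)$ independently, which is a sum of independent centered random variables each bounded by $O(d)$ in absolute value with total variance $O(n p s^2) = O_\eps(1)$. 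Bernstein's inequality then yields
\begin{equation*}
\Pr[|E_j| \geq t] \;\leq\; 2\exp\!\Bigl(-\Omega_\eps\bigl(\min(t^2,\; t/d)\bigr)\Bigr),
\end{equation*}
i.e., a sub-Gaussian regime for small $t$ and a sub-exponential tail of rate $\Omega(1/d) = \Omega_\eps(1/\log n)$ for large $t$. Setting $t = \Theta_\eps(\log B \log n)$ makes the right-hand side at most $1/B^2$, so a union bound over the $B$ coordinates gives $\max_j |E_j| = O_\eps(\log B \log n)$ with probability $1 - 1/B$.

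The remaining piece is that the worst-case bound $|E_j| \leq O_\eps(n \log n)$ (coming from $|R_j| \leq nd$) on the residual event contributes only a negligible term to $\E[\max_j |E_j|]$, which follows by integrating the Bernstein tail from the threshold upward. The main obstacle I anticipate is verifying the Bernstein bound with the parameters $(p, s, d)$ from Lemma~\ref{lem:main-ratio} carefully enough to get the claimed $\log n$ rate; the $\log n$ factor in the final error is intrinsic and arises precisely because the truncated discrete Laplace noise has width $d = \Theta_\eps(\log n)$, which controls the sub-exponential rate of each coordinate's error.
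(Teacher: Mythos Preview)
Your reduction, privacy argument (basic composition over the two changed coordinates), and communication count match the paper exactly. For the $\ell_\infty$ error, the paper takes a slightly simpler route than your Bernstein argument: it applies a Chernoff bound directly to the number $M_j \sim \mathrm{Binom}(n,p)$ of users who sample from the noise distribution in coordinate $j$ (which has mean $np = O_\eps(1)$), obtaining $\Pr[M_j > C\log B] \leq \exp(-\Omega_\eps(C\log B))$, and then uses the deterministic bound $|E_j| \leq M_j \cdot d$ --- this avoids tracking the variance of the $Z_i$'s altogether and yields the same $O_\eps(d\log B) = O_\eps(\log n \log B)$ threshold after a union bound. One minor quibble with your version: the summands $a_i\bigl[(Z_i - d/2) - (x_{i,j} - 1/2)\bigr]$ are not centered, since $x_{i,j} \in \{0,1\}$ gives each term mean $-p(x_{i,j}-1/2)$; the total mean is at most $np/2 = O_\eps(1)$, so this is harmless.
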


\if 0\pasin{More formal proof of this corollary? Also, better accuracy?}
\noah{
  We attempt to argue as follows: \todo{using the same notation as previously}, we have, for any $C > 0$,
    \begin{align}
    \Pr\left[ \left|Y - \sum_{i=1}^n x_i \right| > C\right]
&\leq % \Pr[M > 2pn] \cdot n  +
      \sum_{m=0}^n\Pr[M = m] \cdot \left( \Pr\left[\left|Z_1 + \cdots + Z_m - \frac{md}{2} \right| > C - \frac m2 \right]\right) \nonumber\\
    \text{(Since $Z_i \in [0,d]$) }     & \leq \exp(-\Omega(C)) + \sum_{m=2C/d}^{C/100}\Pr[M = m] \cdot \left( \Pr\left[\left|Z_1 + \cdots + Z_m - \frac{md}{2} \right| > \frac C2 \right]\right) \nonumber
    \end{align}
    But the random variable $Z_1 + \cdots + Z_m - \frac{md}{2}$ is sub-Gaussian with variance proxy $\Theta(d^2 \cdot m)$. Thus the best bound we can get on $ \Pr\left[\left|Z_1 + \cdots + Z_m - \frac{md}{2} \right| > \frac C2 \right]$ is an upper bound $\exp(-\Theta(\frac{C}{d^2 m}))$. When $C < md$, this doesn't seem to give anything...
  }\noah{Update: the bound we have now is fine, but perhaps can do better using Bernstein's inequality. I will confirm either way tmrw.}
  \fi
The proof of Corollary~\ref{cor:histograms} is very simple: we just run our $(\eps/2)$-DP binary summation protocol for each coordinate $j \in B$ independently and attach to the message the coordinate index $j$ (similar to our real summation protocol). It is obvious to see that the number of messages and the message length are as claimed. The $\ell_{\infty}$ error bound can be seen as follows. We claim that the probability that the $\ell_{\infty}$ error is more than $C d \log B = O_{\eps}(C \log B \log n)$ for any sufficiently large $C$ is at most $\exp(-\Omega_{\eps}(C))$; this would immediately imply the desired expected $\ell_{\infty}$ error bound stated in Corollary~\ref{cor:histograms}. 

Now, to see that the probabilistic statement above is true, we first consider each coordinate separately. Since each user picks from the ``noise distribution'' for this coordinate with probability $p \leq O_{\eps}(1/n)$, a standard application of the Chernoff bound implies that the probability that the number of users picking from the noise distribution for this coordinate exceeds $C \log B$ is at most $\exp(-\Omega_{\eps}(C \log B))$ for any sufficiently large $C$. When this event does not occur, the error for this coordinate is at most $C d \log B$. Taking a union bound over all the coordinates yields the desired result.

%\badih{Add $2$ or $3$ sentences about the proof of Corollary~\ref{cor:histograms}?}

We point out that using the Count Min sketch as in~\cite{MelisDC16,anon-power} would allow us to reduce the per user communication in Corollary~\ref{cor:histograms} to $O(n\log B \log{n})$ messages each consisting of $O(\log{n})$ bits, but further reducing the communication down to $O_{\eps}(\poly(\log n, \log B))$ bits remains a very interesting open question.

% A very interesting open question is to obtain a communication-efficient and accurate pure-DP protocol for histograms.

\section{Missing Proofs from Section~\ref{sec:bit-sum-protocol}}\label{sec:missing_proofs_bit_sum_prot}

\subsection{Proof of Lemma \ref{lem:pia-lb}}
\label{sec:pia-lb}
In this section we prove Lemma \ref{lem:pia-lb}. We first recall some basic facts about unimodal random variables:
\begin{definition}[Unimodal random variables]
  A random variable $Z$ that takes values on $\{ 0, 1, \ldots, D \}$, for some positive integer $D$, is defined to be \emph{unimodal}, if there is some $k \in \{ 0, 1, \ldots, D \}$ so that for $j \leq k$, the function $j \mapsto \Pr[Z=j]$ is non-decreasing in $j$, and for $j \geq k$, the function $j \mapsto \Pr[Z=j]$ is non-increasing in $j$. In such a case, $k$ is said to be the \emph{mode} of the distribution of $Z$.
\end{definition}

\begin{lemma} \label{lem:unimodal}
The distribution of $Z_1 + Z_2 + \cdots + Z_m$, where $Z_1, \ldots, Z_m \sim \nu = \DLap_d(d/2, s)$, is unimodal with mode(s) given by $ \{ \lfloor md/2 \rfloor, \lceil md/2 \rceil \}$. 
\end{lemma}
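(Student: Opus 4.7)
The plan is to deduce the lemma from two structural properties of $P_{m,\cdot} = \nu^{*m}$---log-concavity and symmetry about $md/2$---followed by a strict-peak verification to pin down the mode set exactly.

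First I would check that $\nu$ itself is log-concave on its contiguous support $\{0, 1, \dots, d\}$: substituting $\nu(z) \propto e^{-|z - d/2|/s}$ into $\nu(z)^2 \geq \nu(z-1)\nu(z+1)$ and taking logs reduces it to the triangle inequality $|z-1-d/2| + |z+1-d/2| \geq 2|z-d/2|$. This inequality is strict precisely when $z-1, z, z+1$ straddle the non-integer kink $d/2$, i.e., for $z \in \{(d-1)/2, (d+1)/2\}$, where the log-concavity gap equals $e^{1/s} > 1$. Since log-concavity of integer-valued PMFs with contiguous support is preserved under convolution (a classical fact), $P_{m,\cdot}$ is log-concave on $\{0, 1, \dots, md\}$, hence unimodal with mode set forming a contiguous interval of integers. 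The symmetry $\nu(z) = \nu(d - z)$ is immediate and passes through convolution to give $P_{m, k} = P_{m, md - k}$, so the mode interval must be centered at $md/2$; in particular it always contains both $\lfloor md/2 \rfloor$ and $\lceil md/2 \rceil$.

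The main obstacle is ruling out additional points in the mode interval, i.e., proving the strict inequality $P_{m, \lceil md/2 \rceil} > P_{m, \lceil md/2 \rceil + 1}$, since log-concavity alone does not preclude plateaus. I would use Fourier inversion: by symmetry, write $\hat\nu(\theta) = e^{i\theta d/2} \phi(\theta)$ where $\phi$ is a real, even trigonometric polynomial with $\phi(0) = 1$, so that $P_{m, k} = \tfrac{1}{2\pi}\int_{-\pi}^{\pi} \phi(\theta)^m \cos(\theta(md/2 - k))\,d\theta$. For $m$ even, the sum-to-product formula gives $P_{m, md/2} - P_{m, md/2 + 1} = \tfrac{1}{2\pi}\int_{-\pi}^{\pi} \phi(\theta)^m (1 - \cos\theta)\,d\theta$; the integrand is nonnegative everywhere (as $\phi^m \geq 0$) and strictly positive on an open set around $\theta = 0$ (since $\phi$ is continuous with $\phi(0) = 1$ and has only finitely many zeros, being a nontrivial trig polynomial), so the integral is strictly positive.

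For $m$ odd the analogous Fourier argument is complicated by possible sign changes of $\phi^m$, so I would instead induct on $m$. The base case $m = 1$ follows from $\nu((d+1)/2) > \nu((d+3)/2)$. For the inductive step, I would use the recursion $P_{m, \cdot} = \nu * P_{m-1, \cdot}$ with $m - 1$ even, together with the strict unique mode of $P_{m-1, \cdot}$ at $(m-1)d/2$ (already established in the even case), to obtain sign control on the differences $P_{m-1}(\lceil md/2 \rceil - j) - P_{m-1}(\lceil md/2 \rceil + 1 - j)$ across various $j$. Pairing $j$ with $d + 1 - j$ (whose $P_{m-1}$-arguments are symmetric about the mode of $P_{m-1, \cdot}$) and invoking the strict log-concavity of $\nu$ at $(d \pm 1)/2$ from the first step then yields the strict inequality. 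Combining everything gives that the mode set of $P_{m, \cdot}$ is exactly $\{\lfloor md/2 \rfloor, \lceil md/2 \rceil\}$, as claimed.
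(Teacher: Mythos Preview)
Your core argument---log-concavity of $\nu$, preservation under convolution (strong unimodality), and symmetry to locate the mode interval about $md/2$---is exactly the paper's proof, which cites Keilson--Gerber for the convolution step and then invokes symmetry in one line. So for the content the paper actually establishes, your route is the same.

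Where you diverge is in reading the statement as asserting that the mode set is \emph{exactly} $\{\lfloor md/2\rfloor,\lceil md/2\rceil\}$ and then working to rule out plateaus. The paper's proof does not do this: it only argues that these two points are modes, and the sole application (the final line of the proof of Lemma~\ref{lem:pia-lb}) needs only unimodality plus symmetry of the mode interval, not exactness. So your Fourier/induction addendum is extra. Your even-$m$ Fourier computation is correct and clean. The odd-$m$ induction sketch, however, is underspecified: pairing $j$ with $d+1-j$ does make the $P_{m-1}$ arguments symmetric about $c=(m-1)d/2$, and one does get the useful relation $\nu(j)=e^{1/s}\nu(d+1-j)$ for $j\leq (d-1)/2$, but turning this into the required strict inequality still needs a ratio bound like $P_{m-1}(u-1)/P_{m-1}(u)\leq e^{1/s}$ for $u>c$, which does not follow from log-concavity alone and would need a separate (inductive) argument. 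If you want exactness, that step should be made explicit; if you only need what the paper uses, you can drop the strict-peak verification entirely.
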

\begin{proof}
  Unimodality of $Z_1 + \cdots + Z_m$ follows from log-concavity of  $\DLap_d(d/2, s)$ and the fact that log-concave distributions are  strongly unimodal, meaning that convolving with any unimodal distribution results in another unimodal distribution~\cite[Theorem 3]{keilson1971discrete}.
  
  The fact that the mode is $md/2$ if $m$ is even and that both $\lfloor md/2 \rfloor, \lceil md/2 \rceil$ are modes if $m$ is odd follows by symmetry of $\DLap_d(d/2, s)$. 
\end{proof}

\begin{lemma} \label{lem:normalization-val}
For any $\mu \in \R, w, s > 1$, we have
\begin{align*}
C_w(\mu, s) \leq C(\mu, s) \leq \frac{2}{1 - e^{-1/s}}.
\end{align*}
\end{lemma}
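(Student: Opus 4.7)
The first inequality $C_w(\mu,s) \le C(\mu,s)$ is immediate from the definitions: $C_w(\mu,s)$ is obtained by summing the nonnegative terms $e^{-|z-\mu|/s}$ over the subset $[\mu - w/2, \mu + w/2] \cap \Z$ of the index set $\Z$ used for $C(\mu,s)$. So the real content is the upper bound on $C(\mu,s)$, and my plan is to reduce it to a geometric series.

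My plan is as follows. Write $\mu = k + \alpha$ with $k = \lfloor \mu \rfloor \in \Z$ and $\alpha \in [0,1)$. Split the sum defining $C(\mu,s)$ into $\{z \le k\}$ and $\{z \ge k+1\}$. For $z \le k$, $|z - \mu| = (k - z) + \alpha$, so reindexing by $j = k - z \ge 0$ gives a geometric series
\[
\sum_{z \le k} e^{-|z-\mu|/s} \;=\; e^{-\alpha/s} \sum_{j=0}^{\infty} e^{-j/s} \;=\; \frac{e^{-\alpha/s}}{1 - e^{-1/s}}.
\]
Similarly, for $z \ge k+1$, $|z - \mu| = (z - k - 1) + (1 - \alpha)$, so reindexing by $j = z - k - 1 \ge 0$ yields
\[
\sum_{z \ge k+1} e^{-|z-\mu|/s} \;=\; \frac{e^{-(1-\alpha)/s}}{1 - e^{-1/s}}.
\]

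Adding the two pieces gives the exact formula $C(\mu,s) = \bigl(e^{-\alpha/s} + e^{-(1-\alpha)/s}\bigr)/(1 - e^{-1/s})$. Since $\alpha \in [0,1)$, both exponentials are at most $1$, so the numerator is at most $2$, proving $C(\mu,s) \le 2/(1 - e^{-1/s})$. There is no real obstacle here; the only mild subtlety is the bookkeeping with $\alpha$ to handle arbitrary real $\mu$ (integer $\mu$ is the case $\alpha = 0$, where the bound $e^{-\alpha/s} + e^{-(1-\alpha)/s} \le 2$ is still valid). Note that the hypothesis $s > 1$ is not actually needed for the bound; it only ensures $1 - e^{-1/s}$ is a convenient positive number and the expression matches what is used elsewhere in the paper.
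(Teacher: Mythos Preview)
Your proof is correct and follows essentially the same approach as the paper: split the sum at $\lfloor \mu \rfloor$ and bound each half by a geometric series. Your version is a touch more precise in that you track the fractional part $\alpha$ to obtain the exact formula $C(\mu,s) = (e^{-\alpha/s} + e^{-(1-\alpha)/s})/(1-e^{-1/s})$, whereas the paper simply bounds each half by $\sum_{i\ge 0} e^{-i/s}$; but the idea is the same.
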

The proof of Lemma \ref{lem:normalization-val} is deferred to Section \ref{sec:sequence-separation}.

\begin{proof}[Proof of Lemma \ref{lem:pia-lb}]
  Lemma \ref{lem:sequence-separation} gives
  \begin{equation}
    \label{eq:pa-lb}
 P_{i+a, j + a \left( \frac{d-1}{2} \right)} \geq P_{a, a \left( \frac{d-1}{2} \right)} \cdot P_{i,j},
\end{equation}
so it suffices to find a suitable lower bound on $P_{a, a \left( \frac{d-1}{2} \right)} = \Pr_{Z_1, \ldots, Z_a \sim \nu} \left[Z_1 + \cdots + Z_a = a \left( \frac{d-1}{2} \right)\right]$. To do so, note that for $i \in \{ 1, \ldots, a\}$, $\E[Z_i] = d/2$, and write $Z = Z_1 + \cdots + Z_a$. By the Marcinkiewicz--Zygmund inequality (Theorem~\ref{thm:anti-concen}) and the power mean inequality, we have 
\begin{align}
  \E \left[ |Z - da/2| \right] & \geq \frac{1}{2\sqrt{2}} \E \left[ \sqrt{\sum_{i=1}^a (Z_i - d/2)^2 } \right]\nonumber\\
                               & \geq \frac{1}{2\sqrt{2a}} \E \left[ \sum_{i=1}^a | Z_i - d/2 | \right] \nonumber\\
  \label{eq:exp-lb}
                               & \geq \frac{\sqrt{a}}{10} \cdot s,
\end{align}
% \pasin{Lemma~\ref{lem:sequence-separation} below should be changed to Lemma~\ref{lem:normalization-val}. Also $Z_1$ to $Z_i$.}
The last inequality above follows since for $Z_i \sim \DLap_d(d/2, s)$, 
\begin{align*}
  \E [|Z_i - d/2|] & \geq (s/2) \cdot \Pr[|Z_i - d/2| \geq s/2] \\
                   \text{ (using Lemma \ref{lem:normalization-val}) }\ & \geq (s/2) \cdot \left( 1 - (s/2) \cdot \frac{1}{C_d(d/2,s)} \right)\\% \frac{\frac{1}{1-e^{-(s/2+1)/s}} - \frac{1}{1-e^{-d/(2s)}}}{C_d(d/2, s)} \\
                   \text{ (since $1-e^{-1/s} \leq 1/s$ )} \ \ & \geq (s/2) \cdot \left( 1 - (s/2) \cdot \frac{1-e^{-1/s}}{2} \right) \\
  & \geq 3s/8.
\end{align*}

Furthermore, we have
\begin{align} \label{eq:var-sum-upper-bound}
\E[(Z - da/2)^2] = \sum_{i=1}^a \Var[Z_i] \leq 2as^2.
\end{align}
As a result, we have
\begin{align}
&\E[|Z - da/2| \mid |Z - da/2| \geq s^2] \cdot \Pr[|Z - da/2| \geq s^2] \nonumber \\
&\leq \frac{1}{s^2} \E[(Z - da/2)^2 \mid |Z - da/2| \geq s^2] \cdot \Pr[|Z - da/2| \geq s^2] \nonumber \\
&\overset{~\eqref{eq:var-sum-upper-bound}}{\leq} \frac{1}{s^2} (2as^2) \nonumber \\
&= 2a. \label{eq:abs-dev-bound-large-val}
\end{align}
Using inequality~\eqref{eq:abs-dev-bound-large-val} above, we may upper bound $\E [ | Z - da/2 | ]$ by
\begin{align} \label{eq:exp-ub}
\E [ | Z - da/2 | ] \leq a/2 + \Pr[a/2 \leq |Z - da/2| < s^2] \cdot s^2 + 2a.
\end{align}
Combining (\ref{eq:exp-lb}), (\ref{eq:exp-ub}), and $a \leq s^2/1000$ gives
\begin{align*}
\Pr[a/2 \leq |Z-da/2| < s^2] \geq \frac{s\sqrt{a}/10 - 2.5a}{s^2} \geq \frac{\sqrt{a}}{20s}.
\end{align*}
Finally, unimodality and symmetry of $Z$ (Lemma \ref{lem:unimodal}) gives
$$
P_{a, a\left( \frac{d-1}{2} \right)} = \Pr\left[ Z = \frac{ad}{2} - \frac{a}{2} \right] \geq \frac{\sqrt{a}}{40s^3},
$$
which, combined with (\ref{eq:pa-lb}), completes the proof. 
\end{proof}

% \subsection{Statement of Bernstein's Inequality}
% \label{sec:bernstein}

\subsection{Proof of Lemma~\ref{lem:normalization-val}}

\begin{proof}[Proof of Lemma~\ref{lem:normalization-val}]
It is obvious to see that $C_w(\mu, s) \leq C(\mu, s)$. To bound the latter, recall that
\begin{align} \label{eq:normalization-two-terms}
C(\mu, s) = \sum_{z = -\infty}^{\infty} e^{-|z - \mu| / s} \leq \sum_{z = -\infty}^{\lfloor \mu \rfloor} e^{-(\mu - z) / s} + \sum_{z  = \lceil \mu \rceil}^{\infty} e^{-(z - \mu) / s}.
\end{align}
Consider the second term on the right hand side. We have
\begin{align*}
\sum_{z  = \lceil \mu \rceil}^{\infty} e^{-(z - \mu) / s} \leq \sum_{i=0}^{\infty} e^{-i/s} = \frac{1}{1 - e^{-1/s}}.
\end{align*}
Similarly, we also have 
\begin{align*}
\sum_{z  = -\infty}^{\lfloor \mu \rfloor} e^{-(\mu - z) / s} \leq \sum_{i=0}^{\infty} e^{-i/s} = \frac{1}{1 - e^{-1/s}}.
\end{align*}
Plugging the above two inequalities into \eqref{eq:normalization-two-terms}, we get the desired bound.
\end{proof}

\subsection{Proof of Lemma~\ref{lem:sequence-separation}}
\label{sec:sequence-separation}
\begin{proof}[Proof of Lemma~\ref{lem:sequence-separation}]
We have 
\begin{align*}
P_{i + i, j + j'} &= \sum_{\ba \in S_{i + i', j + j'}} \nu(\ba). \\
&= \sum_{a_1, \cdots, a_{i + i'} \in \Z \cap [0, d] \atop a_1 + \cdots + a_{i + i'} = j + j'} \nu(a_1) \cdots \nu(a_{i + i'}) \\
&\geq \sum_{a_1, \cdots, a_{i + i'} \in \Z \cap [0, d] \atop a_1 + \cdots + a_i = j \text{ and } a_{i + 1} \cdots + a_{i + i'} = j'} \nu(a_1) \cdots \nu(a_{i + i'}) \\
&= \left(\sum_{a_1, \cdots, a_{i} \in \Z \cap [0, d] \atop a_1 + \cdots + a_i = j} \nu(a_1) \cdots \nu(a_{i})\right)\left(\sum_{a_{i+1}, \cdots, a_{i + i'} \in \Z \cap [0, d] \atop a_{i + 1} \cdots + a_{i + i'} = j'} \nu(a_{i+1}) \cdots \nu(a_{i + i'})\right) \\
&= P_{i, j} \cdot P_{i', j'}. \qedhere
\end{align*}
\end{proof}

\section{Proof of Theorem~\ref{thm:chan-local}}
\label{app:tv-vs-utility-proof}

In this section, we provide a self-contained proof of Theorem~\ref{thm:chan-local}. Our proof use the following well-known theorem, which provides an anti-concentration guarantee of a sum of independent random variables.

\begin{theorem}[Marcinkiewicz-–Zygmund inequality] \label{thm:anti-concen} % There exists an absolute constant $C > 0$ such that the following holds.
Let $\xi_1, \dots, \xi_n$ be any independent random variables with mean zero and $\E\left[|\xi_i|\right] < \infty$. Then,
\begin{align*}
\E\left[\left|\sum_{i=1}^n \xi_i\right|\right] \geq \frac{1}{2\sqrt{2}} \cdot  \E\left[\sqrt{\sum_{i=1}^n\xi_i^2}\right].
\end{align*}
\end{theorem}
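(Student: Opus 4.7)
\medskip

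\noindent\textbf{Proof proposal.} The plan is to prove the inequality by the standard symmetrization trick combined with the $L^1$ Khintchine inequality for Rademacher sums. Let $(\xi_1',\dots,\xi_n')$ be an independent copy of $(\xi_1,\dots,\xi_n)$, set $S=\sum_i\xi_i$ and $S'=\sum_i\xi_i'$. Since $S$ and $S'$ are identically distributed, the triangle inequality $|S-S'|\le |S|+|S'|$ gives the symmetrization step
\begin{align*}
\E|S|\;\ge\;\tfrac12\,\E|S-S'|.
\end{align*}
So it suffices to show $\E|S-S'|\ge \tfrac{1}{\sqrt2}\,\E\sqrt{\sum_i\xi_i^2}$.

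Next, each $\xi_i-\xi_i'$ is symmetric, so the joint distribution of $(\xi_i-\xi_i')_i$ is invariant under coordinatewise multiplication by an independent Rademacher sequence $\epsilon_1,\dots,\epsilon_n$. Hence
\begin{align*}
\E|S-S'|\;=\;\E\Bigl|\sum_{i=1}^n \epsilon_i(\xi_i-\xi_i')\Bigr|.
\end{align*}
Conditioning on $(\xi,\xi')$ and applying the $L^1$ Khintchine inequality (with sharp constant $1/\sqrt{2}$, due to Szarek), $\E_\epsilon|\sum_i\epsilon_i a_i|\ge \tfrac{1}{\sqrt2}\sqrt{\sum_i a_i^2}$ for any reals $a_i$, gives
\begin{align*}
\E|S-S'|\;\ge\;\tfrac{1}{\sqrt2}\,\E\Bigl[\sqrt{\textstyle\sum_i(\xi_i-\xi_i')^2}\Bigr].
\end{align*}

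Finally, to replace $\sqrt{\sum_i(\xi_i-\xi_i')^2}$ by $\sqrt{\sum_i\xi_i^2}$ in expectation, observe that the map $x=(x_1,\dots,x_n)\mapsto\|\xi-x\|_2=\sqrt{\sum_i(\xi_i-x_i)^2}$ is convex in $x$. Applying Jensen's inequality to the inner expectation over $\xi'$ and using $\E[\xi_i']=0$ yields
\begin{align*}
\E_{\xi'}\sqrt{\textstyle\sum_i(\xi_i-\xi_i')^2}\;\ge\;\sqrt{\textstyle\sum_i\bigl(\xi_i-\E[\xi_i']\bigr)^2}\;=\;\sqrt{\textstyle\sum_i\xi_i^2}.
\end{align*}
Taking expectation over $\xi$ and chaining the three bounds produces the desired inequality with the constant $\tfrac{1}{2\sqrt2}$.

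The only non-elementary ingredient is the sharp $L^1$ Khintchine constant $1/\sqrt2$; this is the main (and really the only) obstacle. If one prefers to avoid invoking Szarek's sharp bound, one can replace it by any of the classical weaker Khintchine estimates (e.g.\ via the Paley--Zygmund inequality applied to a Rademacher sum, which gives a universal constant $c>0$), losing only an absolute constant factor that is harmless for the application to Lemma~\ref{lem:pia-lb}.
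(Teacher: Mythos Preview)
Your proof is correct. The paper does not actually prove this statement: it quotes the Marcinkiewicz--Zygmund inequality as a well-known result and uses it as a black box in the proofs of Lemma~\ref{lem:pia-lb} and Theorem~\ref{thm:chan-local}. Your argument via symmetrization, Szarek's sharp $L^1$ Khintchine constant $1/\sqrt{2}$, and the Jensen step $\E_{\xi'}\|\xi-\xi'\|_2\ge\|\xi\|_2$ is the standard route to this inequality and recovers exactly the constant $1/(2\sqrt{2})$ stated in the paper.
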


We can now prove Theorem~\ref{thm:chan-local}. Our proof is similar to that of Chan et al.~\cite{ChanSS12}. The main difference is that instead of defining the notion of ``bad transcripts'' explicitly as in~\cite{ChanSS12}, we account of them implicitly in our averaging argument.

\begin{proof}[Proof of Theorem~\ref{thm:chan-local}]
For convenience, let us denote by $\cR_0$ and $\cR_1$ the distributions of $\bX^0$ and $\bX^1$ respectively. Assume that there is an analyzer that receives the messages from the users (without shuffling), where the $i$th user with input $b_i$ samples $X_i$ from $\cR_{b_i}$ and sends $X_i$ to the analyzer, and output an estimate sum with an expected error at most $\alpha$. We will argue that $\SD(\cR_0, \cR_1) \geq 1 - \Omega\left(\frac{\alpha}{\sqrt{n}}\right)$.

For each message sequence $X_1, \dots, X_n$ where $X_i$ is the message from the $i$th user, we use $A(X_1, \dots, X_n)$ to denote the analyzer's estimate\footnote{Note that we may assume w.l.o.g. that the analyzer is deterministic.} upon receiving these messages. For any input sequence $b_1, \dots, b_n \in \{0, 1\}$, the expected error is
\begin{align*}
\E_{X_1 \sim \cR_{b_1}, \dots, X_n \sim \cR_{b_n}} \left|A(X_1, \dots, X_n) - (b_1 + \cdots + b_n)\right|,
\end{align*}
which must be at most $\alpha$ due to our assumption.

Hence, by averaging over all sequences $b_1, \dots, b_n \in \{0, 1\}$, we have
\begin{align*}
\alpha \geq \E_{b_1, \dots, b_n \sim \{0, 1\}} \E_{X_1 \sim \cR_{b_1}, \dots, X_n \sim \cR_{b_n}} \left|A(X_1, \dots, X_n) - (b_1 + \cdots + b_n)\right|.
\end{align*}
Let us denote the quantity on the right hand side above by $\err$. Furthermore, for each possible message $X \in \supp(\cR_0) \cup \supp(\cR_1)$, let us define the probability distribution $\cF_X$ on $\{0, 1\}$ by $\cF_X(0) = \frac{\cR_0(X)}{\cR_0(X) + \cR_1(X)}$ and $\cF_X(1) = \frac{\cR_1(X)}{\cR_0(X) + \cR_1(X)}$. It is not hard to see that $\err$ can be rearranged as 
\begin{align} \label{eq:average-err-swap}
\err = \E_{X_1, \dots, X_n \sim 0.5\cR_0 + 0.5\cR_1} \E_{b_1 \sim \cF_{X_1}, \dots, b_n \sim \cF_{X_n}} \left|A(X_1, \dots, X_n) - (b_1 + \cdots + b_n)\right|.
\end{align}
Let us now bound the inner expectation as follows.
\begin{align}
&\E_{b_1 \sim \cF_{X_1}, \dots, b_n \sim \cF_{X_n}} \left|A(X_1, \dots, X_n) - (b_1 + \cdots + b_n)\right| \nonumber \\
&= \frac{1}{2}  \E_{b_1, b'_1 \sim \cF_{X_1}, \dots, b_n, b'_n \sim \cF_{X_n}} \left[ \left|A(X_1, \dots, X_n) - (b_1 + \cdots + b_n)\right| + \left|A(X_1, \dots, X_n) - (b'_1 + \cdots + b'_n)\right|\right] \nonumber \\
&\geq  \frac{1}{2}  \E_{b_1, b'_1 \sim \cF_{X_1}, \dots, b_n, b'_n \sim \cF_{X_n}} \left|(b_1 - b'_1) + \cdots + (b_n - b'_n) \right|, \label{eq:triangle-eq}
\end{align}
where the last line follows from triangle inequality. Now, observe that each $(b_i - b'_i)$ is an independent random variable such that
\begin{align*}
b_i - b'_i = 
\begin{cases}
-1 & \text{ with probability } \cF_{X_i}(0)\cF_{X_i}(1), \\
0 & \text{ with probability } 1 - 2\cF_{X_i}(0)\cF_{X_i}(1), \\
1 & \text{ with probability } \cF_{X_i}(0)\cF_{X_i}(1). \\
\end{cases}
\end{align*}
Hence, we may apply the Marcinkiewicz-–Zygmund inequality (Theorem~\ref{thm:anti-concen}), which gives
\begin{align*}
\E_{b_1 \sim \cF_{X_1}, \dots, b_n \sim \cF_{X_n}} \left|A(X_1, \dots, X_n) - (b_1 + \cdots + b_n)\right|
&\geq \frac{1}{4\sqrt{2}} \cdot \E \left[\sqrt{\sum_{i=1}^n (b_i - b'_i)^2}\right] \\
(\text{by power mean inequality}) &\geq \frac{1}{4\sqrt{2}} \cdot \E \left[\frac{\sum_{i=1}^n |b_i - b'_i|}{\sqrt{n}}\right] \\
%&= \frac{C}{2\sqrt{n}} \cdot \sum_{i=1}^n 2\cF_{X_i}(0) \cF_{X_i}(1) \\
(\text{by the linearity of expectation}) &= \frac{1}{2\sqrt{2n}} \cdot \sum_{i=1}^n \cF_{X_i}(0) \cF_{X_i}(1)
\end{align*}
Plugging this back into~\eqref{eq:triangle-eq} and using the linearity of expectation once again, we have
\begin{align} \label{eq:error-bound-almost-final}
\err \geq \E_{X_1, \dots, X_n \sim 0.5\cR_0 + 0.5\cR_1} \left[\frac{1}{2\sqrt{2n}} \cdot \sum_{i=1}^n \cF_{X_i}(0)\cF_{X_i}(1)\right]
= \frac{\sqrt{n}}{2\sqrt{2}} \cdot \E_{X \sim 0.5\cR_0 + 0.5\cR_1}\left[\cF_X(0)\cF_X(1)\right].
\end{align}
Finally, we relate the right hand side term with the total variation distance between $\cR_0$ and $\cR_1$ as follows.
\begin{align}
\E_{X \sim 0.5\cR_0 + 0.5\cR_1}\left[\cF_X(0)\cF_X(1)\right]
&= \sum_{X} (0.5 \cR_0(X) + 0.5 \cR_1(X)) \cdot \frac{0.5 \cR_0(X)}{\cR_0(X) + \cR_1(X)} \cdot \frac{0.5 \cR_1(X)}{\cR_0(X) + \cR_1(X)} \nonumber \\
&= \sum_X 0.5 \frac{\cR_0(X)\cR_1(X)}{\cR_0(X) + \cR_1(X)} \nonumber \\
&\geq \sum_X 0.25\min\{\cR_0(X), \cR_1(X)\} \nonumber \\
&= 0.25(1 - \SD(\cR_0, \cR_1)). \label{eq:relating-to-tv}
\end{align}
Combining~\eqref{eq:error-bound-almost-final} and~\eqref{eq:relating-to-tv}, we have $\err \geq  \frac{\sqrt{n}}{8\sqrt{2}} (1 - \SD(\cR_0, \cR_1))$. Since $\err \leq \alpha$, we must have $\SD(\cR_0, \cR_1) \geq 1 - O\left(\frac{\alpha}{\sqrt{n}}\right)$ as desired.
\end{proof}

%\pasin{I suspect that if we do not use power mean after Marcinkiewicz–Zygmund inequality but something smarter, then we might be able to get a better lower bound, although that might be not worth it since the current lower bound already suffices for what we would like to prove in this paper.}

\section{Proof of Observation~\ref{obs:discrete-gaussian-tail}}
\label{app:discrete-gaussian-tail}

\begin{proof}[Proof of Observation~\ref{obs:discrete-gaussian-tail}]
Let $\ell^*$ be the smallest positive integer such that $\sum_{x \in a\Z \setminus (-\ell^*a, \ell^*a)} \rho_{s^*}(x) \leq e^{-\delta} \lambda \cdot \rho_{s^*}(a\Z)$; such an integer exists because $\rho_{s^*}(a\Z) = \sum_{x \in a\Z} \rho_{s^*}(x) < \infty$.

Consider any $c \in \R$. Let $q = \lfloor c / a\rfloor$ and $r = c - qa$. We may expand $\Pr_{X \sim \cD_{a\Z, s^*, c}}[|X - c| > \ell^* a]$ as
\begin{align*}
&\sum_{x \in a\Z \setminus [c - \ell^* a, c + \ell^* a]} \cD_{a\Z, s^*, c}(x) \\
&= \frac{1}{\rho_{s^*, c}(a\Z)} \left(\sum_{x \in a\Z \setminus [c - \ell^* a, c + \ell^* a]} \rho_{s^*, c}(x) \right) \\
&= \frac{1}{\rho_{s^*, c}(a\Z)} \left(\sum_{x \in a\Z \setminus [c - \ell^* a, c + \ell^* a]} \rho_{s^*}(x - c) \right) \\
&= \frac{1}{\rho_{s^*, c}(a\Z)} \left(\sum_{x \in a\Z \atop x < c - \ell^* a} \rho_{s^*}(x - c) + \sum_{x \in a\Z \atop x > c + \ell^* a} \rho_{s^*}(x - c)\right) \\
&\leq \frac{1}{\rho_{s^*, c}(a\Z)} \left(\sum_{x \in a\Z \atop x < c - \ell^* a} \rho_{s^*}(x - (q - 1)a) + \sum_{x \in a\Z \atop x > c + \ell^* a} \rho_{s^*}(x - qa)\right) \\
&= \frac{1}{\rho_{s^*, c}(a\Z)} \left(\sum_{x \in a\Z \atop x < c - (q - 1)a - \ell^* a} \rho_{s^*}(x) + \sum_{x \in a\Z \atop x > c - qa + \ell^* a} \rho_{s^*}(x)\right) \\
&\leq \frac{1}{\rho_{s^*, c}(a\Z)} \left(\sum_{x \in a\Z \atop x \leq - \ell^* a} \rho_{s^*}(x) + \sum_{x \in a\Z \atop x \geq \ell^* a} \rho_{s^*}(x)\right) \\
&= \frac{1}{\rho_{s^*, c}(a\Z)} \cdot \sum_{x \in a\Z \setminus (-\ell^*a, \ell^*a)} \rho_{s^*}(x) \\
&\leq \frac{\rho_{s^*}(a\Z) e^{-\delta} \lambda}{\rho_{s^*, c}(a\Z)},
\end{align*}
where the last inequality follows from our choice of $\ell^*$. Finally, recall from Lemma~\ref{lem:discrete-gaussian-shift} that $\rho_{s^*, c}(a\Z) \geq e^{-\delta} \cdot \rho_{s^*}(a\Z)$. Plugging this back into the above inequality yields the desired claim.
\end{proof}

\end{document}